\documentclass[12pt]{article}

\usepackage[margin=1.0in]{geometry}
 
\usepackage{graphicx, subcaption}      
\usepackage{amssymb, amsmath, algorithm2e, setspace}
\usepackage{tikz}
\usetikzlibrary{patterns,angles,quotes,arrows}
\usepackage[shortlabels]{enumitem}
\usepackage{amsthm}
\usepackage{url}
\newtheoremstyle{mystyle}
  {}
  {}
  {}
  {}
  {\bfseries}
  {:}
  { }
  {}

\theoremstyle{mystyle}
\newtheorem{theorem}{Theorem}
\newtheorem{problem}{Problem}
\newtheorem{definition}{Definition}
\newtheorem{proposition}{Proposition}
\newtheorem{lemma}{Lemma}

\newtheorem{remark}{Remark}
\newtheorem{corollary}{Corollary}
\usepackage[numbers]{natbib}

\providecommand{\keywords}[1]{\textbf{\textit{Index terms---}} #1}
\usepackage{cite}
\usepackage{appendix}

\DeclareMathOperator{\rank}{rank}
\DeclareMathOperator{\interior}{int}

\begin{document}

\title{Quantitative Resilience of Generalized Integrators\thanks{This work was supported by an Early Stage Innovations grant from NASA’s Space Technology Research Grants Program, grant no. 80NSSC19K0209. This material is partially based upon work supported by the United States Air Force AFRL/SBRK under contract no. FA864921P0123.}}

\author{Jean-Baptiste Bouvier\thanks{Department of Aerospace Engineering and Coordinated Science Laboratory, University of Illinois at Urbana-Champaign, Urbana, IL 61801, USA.
  (bouvier3@illinois.edu, mornik@illinois.edu).}
\and Kathleen Xu\footnotemark[2] \and Melkior Ornik\footnotemark[2]
}

\maketitle

\begin{abstract}
    To design critical systems engineers must be able to prove that their system can continue with its mission even after losing control authority over some of its actuators. Such a malfunction results in actuators producing possibly undesirable inputs over which the controller has real-time readings but no control. By definition, a system is resilient if it can still reach a target after a partial loss of control authority. However, after such a malfunction, a resilient system might be significantly slower to reach a target compared to its initial capabilities. To quantify this loss of performance we introduce the notion of quantitative resilience as the maximal ratio of the minimal times required to reach any target for the initial and malfunctioning systems. Naive computation of quantitative resilience directly from the definition is a complex task as it requires solving four nested, possibly nonlinear, optimization problems. The main technical contribution of this work is to provide an efficient method to compute quantitative resilience of control systems with multiple integrators and nonsymmetric input sets.
    Relying on control theory and on two novel geometric results we reduce the computation of quantitative resilience to a linear optimization problem. We illustrate our method on two numerical examples: a trajectory controller for low-thrust spacecrafts and a UAV with eight propellers.
\end{abstract}

\keywords{reachability, quantitative resilience, linear systems, optimization}

\vspace{5mm}

\textbf{Notice of previous publication:} This manuscript is a substantially extended version of \citep{SIAM_CT} where we remove the assumption of symmetry on the input sets, leading to more general and more complex results, e.g., Theorems~\ref{thm:direction maximizing t} and \ref{thm:computation of r_q}.
This paper also provides several proofs omitted from \citep{SIAM_CT} and tackle systems with multiple integrators. Entirely novel material includes Sections~\ref{section:integrators}, \ref{section:examples}, Appendices~\ref{apx:continuity}, \ref{apx:bar B} and parts of all other sections.

\section{Introduction}
When failure is not an option, critical systems are built with sufficient actuator redundancy \citep{NASA_redundancy} and with fault-tolerant controllers \citep{partial_LOC}.
These systems rely on different methods like adaptive control \citep{actuator_lock, fault_tolerance} or active disturbance rejection \citep{hypersonic_fault_tolerant} in order to compensate for actuator failures.
The study of this type of malfunction typically considers either actuators locking in place \citep{actuator_lock}, actuators losing effectiveness but remaining controllable \citep{partial_LOC, fault_tolerance}, or a combination of both \citep{hypersonic_fault_tolerant}.
However, when actuators are subject to damage or hostile takeover, the malfunction may result in the actuators producing undesirable inputs over which the controller has real-time readings but no control.
This type of malfunction has been discussed in \citep{IFAC} under the name of \emph{loss of control authority} over actuators and encompasses scenarios where actuators and sensors are under attack \citep{actuator_attack}.

In the setting of loss of control authority, undesirable inputs are observable and can have a magnitude similar to the controlled inputs, while in classical robust control the undesirable inputs are not observable and have a small magnitude compared to the actuators' inputs \citep{Tubes1971, Robust}.
The results of \citep{journal_paper} showed that a controller having access to the undesirable inputs is considerably more effective than a robust controller.  

After a partial loss of control authority over actuators, a target is said to be \emph{resiliently reachable} if for any undesirable inputs produced by the malfunctioning actuators there exists a control driving the state to the target \citep{IFAC}. However, after the loss of control the malfunctioning system might need considerably more time to reach its target compared to the initial system. 
Previous work \citep{SIAM_CT} introduced the concept of quantitative resilience for control systems in order to measure the delays caused by the loss of control authority over actuators. 
While concepts of quantitative resilience have been previously developed for water infrastructure systems \citep{water_qr} or for nuclear power plants \citep{nuclear_pp_qr}, such concepts only work for their specific application.

In this work we formulate quantitative resilience as the maximal ratio of the minimal times required to reach any target for the initial and malfunctioning systems.
This formulation leads to a nonlinear minimax optimization problem with an infinite number of equality constraints.
Because of the complexity of this problem, a straightforward attempt at a solution is not feasible.
While for linear minimax problems with a finite number of constraints the optimum is reached on the boundary of the constraint set \citep{max-min_programming}, such a general result does not hold in the setting of semi-infinite programming \citep{semi-infinite_programming} where our problem belongs.
However, the fruitful application of the theorems of \citep{liberzon, Neustadt} stating the existence of time-optimal controls combined with the optimization results derived in \citep{Maxmax_Minimax_Quotient_thm} reduces the quantitative resilience of systems with multiple integrators to a linear optimization problem.

As a first step toward the study of quantitative resilience for linear systems we restricted our previous work \citep{SIAM_CT} to driftless linear systems with symmetric input sets. Building on these earlier results we now extend the theory to linear systems with multiple integrators and general input sets.
With these extensions we are able to tackle feedback linearized systems \citep{feedback_linearization}.

The contributions of this paper are threefold.
First, we extend the notion of quantitative resilience from \citep{SIAM_CT} to systems with multiple integrators and nonsymmetric inputs. 
Secondly, we provide an efficient method to compute the quantitative resilience of linear systems with integrators by simplifying a nonlinear problem of four nested optimizations into a linear optimization problem.
Finally, based on quantitative resilience and controllability we establish necessary and sufficient conditions to verify if a system is resilient.

The remainder of the paper is organized as follows. Section~\ref{section:preliminaries} introduces preliminary results concerning resilient systems and quantitative resilience. 
To evaluate this metric we need the minimal time for the system to reach a target before and after the loss of control authority.
We calculate this minimal time for the initial system in Section~\ref{section:initial dynamics} and for the malfunctioning system in Section~\ref{section:malfunctioning dynamics}.
Section~\ref{section:r_q} is the pinnacle of this work as we design an efficient method to compute quantitative resilience.
This metric also allows to assess whether a system is resilient or not, as detailed in Section~\ref{section:resilience conditions}. 
The quantitative resilience of systems with multiple integrators is studied in Section~\ref{section:integrators}.
In Section~\ref{section:examples} our theory is illustrated on a linear trajectory controller for a low-thrust spacecraft and on an unmanned octocopter.
The continuity of a minimum function is proved in Appendix~\ref{apx:continuity}.
We compute the dynamics of the low-thrust spacecraft in Appendix~\ref{apx:bar B}.

\vspace{2mm}

\emph{Notation:} 
For a set $X$ we denote its boundary $\partial X$, its interior $X^\circ := X \backslash \partial X$ and its convex hull $co(X)$.
The set of integers between $a$ and $b$ is $[\![a,b]\!]$, while $\mathbb{N}$ denotes the set of all positive integers. 
The factorial of $k \in \mathbb{N}$ is denoted $k!$.
Let $\mathbb{R}^+ := [0, \infty)$ and we use the subscript $_*$ to exclude zero, for instance $\mathbb{R}_*^+ := (0, \infty)$.
We denote the Euclidean norm with $\| \cdot \|$ and the unit sphere with $\mathbb{S} := \{ x \in \mathbb{R}^n : \|x\| = 1\}$.
The infinity-norm of a vector $x \in \mathbb{R}^n$ is $\| x \|_\infty := \max \big\{ |x_i| : i \in [\![1,n]\!] \big\}$.
The image of a matrix $A \in \mathbb{R}^{n \times m}$ is denoted $Im(A) \subset \mathbb{R}^n$, its rank is $rank(A) = \dim Im(A) \leq n$.
For integrable piecewise continuous functions $f: \mathbb{R} \rightarrow \mathbb{R}^n$, the $\mathcal{L}_2$-norm is defined as $\| f \|_{\mathcal{L}_2}^2 := \int_{t\, \in\, \mathbb{R}} \|f(t)\|^2\, dt$, and the $\mathcal{L}_\infty$-norm is defined as $\| f \|_{\mathcal{L}_\infty} := \underset{t\, \in\, \mathbb{R}}{\sup}\ \|f(t)\|_\infty$.
For $k \in \mathbb{N}$, the $k^{th}$ derivative of a function $f$ is denoted as $f^{(k)}$.

\section{Preliminaries and Problem Statement}\label{section:preliminaries}

We are interested by generalized $k^{th}$ order integrators in $\mathbb{R}^n$, i.e.,
\begin{equation}\label{eq:system order k}
    x^{(k)}(t) = \bar{B}\bar{u}(t), \quad \text{with} \quad \bar{u} \in \bar{U}, \quad x(0) = x_0, \quad \text{and} \quad x^{(l)}(0) = 0,
\end{equation}
for all $l \in [\![1,k-1]\!]$ and $k \in \mathbb{N}$. Matrix $\bar{B} \in \mathbb{R}^{n \times (m+p)}$ is constant. Let $\bar{u}^{min} \in \mathbb{R}^{m+p}$ and $\bar{u}^{max} \in \mathbb{R}^{m+p}$ be the bounds on the inputs so that the set of allowable controls is 
\begin{equation}\label{eq:U bar}
    \bar{U} := \big\{  \bar{u} \in \mathcal{L}_\infty(\mathbb{R}^+, \mathbb{R}^{m+p}) :  \bar{u}_i(t) \in [\bar{u}_i^{min}, \bar{u}_i^{max}]\ \text{for}\ t \geq 0\ \text{and}\ i \in [\![1,m+p]\!] \big\}.
\end{equation}
After a malfunction, the system loses control authority over $p$ of its $m+p$ initial actuators. Because of the malfunction, the initial control input $\bar{u}$ is split into the remaining controlled inputs $u$ and the undesirable inputs $w$.
Similarly, we split the control bounds $\bar{u}^{min}$ and $\bar{u}^{max}$ into $u^{min} \in \mathbb{R}^m$, $u^{max} \in \mathbb{R}^m$ and $w^{min} \in \mathbb{R}^p$, $w^{max} \in \mathbb{R}^p$.
Without loss of generality we always consider the columns $C$ representing the malfunctioning actuators to be at the end of $\bar{B}$. We split the control matrix accordingly: $\bar{B} = \big[ B\ C\big]$. Then, the initial conditions are the same as in \eqref{eq:system order k} but the dynamics become
\begin{align}
    & x^{(k)}(t) = Bu(t) + Cw(t), \qquad u \in U, \quad w \in W, \quad \text{with} \label{eq:splitted order k} \\
    U &:= \big\{ u \in \mathcal{L}_\infty(\mathbb{R}^+, \mathbb{R}^m) :  u_i(t) \in [u_i^{min}, u_i^{max}]\ \text{for}\ t \geq 0\ \text{and}\ i \in [\![1,m]\!] \big\} \quad \text{and} \label{eq:function set U}\\
    W &:= \big\{ w \in \mathcal{L}_\infty(\mathbb{R}^+, \mathbb{R}^p) :  w_i(t) \in [w_i^{min}, w_i^{max}]\ \text{for}\ t \geq 0\ \text{and}\ i \in [\![1,p]\!] \big\}.\label{eq:function set W}
\end{align}

We will use the concept of \emph{controllability} of \citep{liberzon}.
\begin{definition}
    A system following dynamics \eqref{eq:system order k} is \emph{controllable} if for all target $x_{goal} \in \mathbb{R}^n$ there exists a control $\bar{u} \in \bar{U}$ and a time $T$ such that $x(T) = x_{goal}$.
\end{definition}
We recall here the definition of the \emph{resilience} of a system introduced in \citep{journal_paper}.
\begin{definition}
    A system following dynamics \eqref{eq:system order k} is \emph{resilient} to the loss of $p$ of its actuators corresponding to the matrix $C$ as above if for all undesirable inputs $w \in W$ and all target $x_{goal} \in \mathbb{R}^n$ there exists a control $u \in U$ and a time $T$ such that the state of the system \eqref{eq:splitted order k} reaches the target at time $T$, i.e., $x(T) = x_{goal}$.
\end{definition}

Previous efforts \citep{IFAC, journal_paper} assumed that the $\mathcal{L}_2$-norm of the inputs is constrained. In this work, we instead consider $\mathcal{L}_\infty$ bounds because of their broad use in applications. Therefore, most of the resiliency conditions of \citep{IFAC, journal_paper} do not directly apply here. 
We will establish a simple necessary condition for this new setting.

\begin{proposition}\label{prop:resilient full rank}
    If the system \eqref{eq:system order k} is resilient to the loss of $p$ actuators, then the system $x^{(k)}(t) = B u(t)$ is controllable.
\end{proposition}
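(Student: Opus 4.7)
The plan is to argue by contrapositive: assume $x^{(k)}(t) = Bu(t)$ is not controllable and exhibit an undesirable input $w^\ast \in W$ together with a target $x_{goal}^\ast$ that the malfunctioning system \eqref{eq:splitted order k} cannot reach from $x_0$, contradicting resilience. The first ingredient is the closed-form reachable set of the reduced dynamics: $k$-fold integration with the zero-derivative initial conditions gives
\begin{equation*}
    x(T) \;=\; x_0 \;+\; \int_0^T \tfrac{(T-s)^{k-1}}{(k-1)!}\, B u(s)\,ds ,
\end{equation*}
and since the kernel is nonnegative and $[u^{min}, u^{max}]$ is convex, the reachable set from $x_0$ is exactly $x_0 + \mathbb{R}^+\!\cdot B[u^{min}, u^{max}]$. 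Non-controllability is therefore equivalent to $0 \notin \interior B[u^{min}, u^{max}]$, and by the supporting-hyperplane theorem there exists a nonzero $\xi \in \mathbb{R}^n$ with $\xi^\top B u \geq 0$ for every $u \in [u^{min}, u^{max}]$.

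I would then pick the adversarial input as a constant function $w^\ast(t) \equiv w_0$, choosing $w_0 \in [w^{min}, w^{max}]$ to maximise $\xi^\top C w_0$. Applying $\xi^\top$ to \eqref{eq:splitted order k} and iterating the same integration (the initial conditions contribute no boundary terms) yields
\begin{equation*}
    \xi^\top \bigl(x(T) - x_0\bigr) \;=\; \int_0^T \tfrac{(T-s)^{k-1}}{(k-1)!}\bigl(\xi^\top B u(s)+\xi^\top C w_0\bigr)\, ds .
\end{equation*}
Under the standard loss-of-control-authority assumption $0 \in [w^{min}, w^{max}]$ componentwise, the maximiser satisfies $\xi^\top C w_0 \geq 0$, so the integrand is nonnegative for every controller choice $u \in U$ and every $T \geq 0$. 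Hence $\xi^\top x(T) \geq \xi^\top x_0$ uniformly in $T$, so the target $x_{goal}^\ast := x_0 - \xi$, for which $\xi^\top x_{goal}^\ast < \xi^\top x_0$, is unreachable under $w^\ast$, contradicting resilience.

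The main obstacle I foresee is the corner case where $0 \notin [w^{min}, w^{max}]$ and $\max_{w_0} \xi^\top C w_0 < 0$, in which the constant adversary fails to obstruct the $-\xi$ side. This difficulty disappears when $\rank(B) < n$, because one can then take $\xi \perp \mathrm{Im}(B)$ and the identity above reduces to $\xi^\top(x(T)-x_0)=\tfrac{T^k}{k!}\xi^\top C w_0$, so the adversary blocks any target off the affine hyperplane $\{y : \xi^\top y = \xi^\top x_0\}$ by selecting the sign of $\xi$ and an extreme point of the box. The remaining full-rank case is handled by combining the separating $\xi$ with the initial-system controllability $0 \in \interior \bar{B}[\bar{u}^{min}, \bar{u}^{max}]$ that resilience itself forces: controllability of the initial system rules out $[w^{min}, w^{max}]$ lying entirely on one side of the hyperplane $\{\xi^\top C w = 0\}$, so one of the two orientations of $\xi$ always produces a blocking target.
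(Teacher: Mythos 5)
Your contrapositive route via a supporting hyperplane is a valid argument for the main case and genuinely different from the paper's own proof. The paper argues directly: pick $w(t) \equiv 0$ (which presumes $0 \in [w^{min}, w^{max}]$), invoke resilience to obtain $u \in U$ and $t_1$ steering $x$ to the arbitrary target $x_0 + y$, and observe that the $k$-fold integral of $x^{(k)} = Bu$ with zero-derivative initial data collapses to $Bz$ for a constant $z$; hence $y = Bz$, which yields $\rank(B) = n$ and, since the very $u$ produced by resilience already realises any $y$, controllability of $x^{(k)} = Bu$. Your argument instead passes through the reachable-cone characterisation, a separating $\xi$, and a blocking adversary; it is heavier machinery but also correct once $0 \in [w^{min}, w^{max}]$ is granted, because then the maximiser $w_0$ satisfies $\xi^\top C w_0 \geq 0$ and the blocking step goes through. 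Both proofs thus rest on the same implicit admissibility of $w \equiv 0$.

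The corner-case discussion aimed at relaxing $0 \in [w^{min}, w^{max}]$ contains a genuine gap. The claim that controllability of $x^{(k)} = \bar{B}\bar{u}$ (which resilience forces) ``rules out $[w^{min}, w^{max}]$ lying entirely on one side of $\{\xi^\top C w = 0\}$'' is false. In $\mathbb{R}$ take $B = C = 1$, $U_c = [0,1]$, $W_c = [-\tfrac{1}{2}, -\tfrac{1}{4}]$: then $\bar{B}\bar{U}_c = [-\tfrac{1}{2}, \tfrac{3}{4}]$ contains $0$ in its interior, and for every $w \in W_c$ the set $\{u + w : u \in U_c\} = [w, w+1]$ also contains $0$ in its interior, so the split system is resilient; yet $BU_c = [0,1]$, so $\dot{x} = Bu$ is not controllable, and with $\xi = 1$ one has $\xi^\top C w < 0$ for every $w \in W_c$. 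Hence the proposition itself fails without $0 \in [w^{min}, w^{max}]$, and the full-rank corner case cannot be repaired by flipping $\xi$ (when $\rank B = n$, $-\xi$ is not a supporting functional of $BU_c$ at $0$). The correct reading is that $0 \in [w^{min}, w^{max}]$ is an indispensable hypothesis, tacitly used by both proofs; the paper's proof is cleaner in that it simply uses it rather than attempting to remove it.
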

\begin{proof}
    Let $y \in \mathbb{R}^n$, $x_{goal} := y + x_0 \in \mathbb{R}^n$ and $w \in W$ such that $w(t) = 0$ for all $t \geq 0$. Since the system is resilient, there exist $u \in U$ and $t_1 \geq 0$ such that $x_{goal} = x(t_1)$. Since $x^{(l)}(0) = 0$ for $l \in [\![1,k-1]\!]$ and $B$ is constant we can write
    \begin{equation*}
        x(t_1) = \hspace{-1mm} \int_0^{t_1} \hspace{-4mm} \hdots \hspace{-1mm} \int_0^{t_k} \hspace{-3mm} x^{(k)}(t_{k+1})\, dt_{k+1} \hdots dt_2 + x_0 = Bz + x_0, \hspace{2mm} z := \hspace{-1mm} \int_0^{t_1} \hspace{-4mm} \hdots \hspace{-1mm} \int_0^{t_k} \hspace{-3mm} u(t_{k+1})\, dt_{k+1}\hdots dt_2,
    \end{equation*}
    with $z \in \mathbb{R}^m$ constant.
    Then, $x_{goal} - x_0 = Bz = y \in Im(B)$, so $rank(B) = n$ and $x^{(k)}(t) = B u(t)$ is controllable.  
\end{proof}

While a resilient system is capable of reaching any target after losing control authority over $p$ of its actuators, the time for the malfunctioning system to reach a target might be considerably larger than the time needed for the initial system to reach the same target.
We introduce these two times for the target $x_{goal} \in \mathbb{R}^n$ and the target distance $d := x_{goal} - x_0 \in \mathbb{R}^n$.

\begin{definition}\label{def: T_N^*}
    The \emph{nominal reach time of order $k$}, denoted by $T_{k,N}^*$, is the shortest time required for the state $x$ of \eqref{eq:system order k} to reach the target $x_{goal}$ under admissible control $\bar{u} \in \bar{U}$: 
    \begin{equation}\label{eq:T_k,N^*}
        T_{k,N}^*(d) := \underset{\bar{u}\, \in\, \bar{U}}{\inf} \big\{ T \geq 0 : x(T) - x_0 = d \big\}. 
    \end{equation}
\end{definition}
\begin{definition}\label{def: T_M^*}
    The \emph{malfunctioning reach time of order $k$}, denoted by $T_{k,M}^*$, is the shortest time required for the state $x$ of \eqref{eq:splitted order k} to reach the target $x_{goal}$ under admissible control $u \in U$ when the undesirable input $w \in W$ is chosen to make that time the longest:
   \begin{equation}\label{eq:T_k,M^*}
        T_{k,M}^*(d) := \underset{w\, \in\, W}{\sup} \big\{ \underset{u\, \in\, U}{\inf} \big\{ T \geq 0 : x(T) - x_0 = d \big\} \big\}.
    \end{equation}
\end{definition}

By definition, if the system is controllable, then $T_{k,N}^*(d)$ is finite for all $d \in \mathbb{R}^n$, and if it is resilient, then $T_{k,M}^*(d)$ is finite. 
In light of \eqref{eq:T_k,N^*} and \eqref{eq:T_k,M^*}, $T_{k,N}^*$ and $T_{k,M}^*$ would also depend on the initial conditions $x^{(l)}(0)$ if they were non zero for $l \in [\![1, k-1]\!]$.
It would cause the unnecessary complications discussed in Remark~\ref{rmk: non-zero_initial_conditions}.

\begin{definition}
    The \emph{ratio of reach times of order $k$} in the direction $d \in \mathbb{R}^n$ is
    \begin{equation}\label{eq:t_k(d)}
        t_k(d) :=  \frac{T_{k,M}^*(d)}{T_{k,N}^*(d)}.
    \end{equation}
\end{definition}

After the loss of control, the malfunctioning system \eqref{eq:splitted order k} can take up to $t_k(d)$ times longer than the initial system \eqref{eq:system order k} to reach the target $d + x_0$. Since the performance is degraded by the undesirable inputs, $t_k(d) \geq 1$. We take the convention that $t_k(d) = +\infty$ whenever $T_{k,M}^*(d) = +\infty$, regardless of the value of $T_{k,N}^*(d)$.

\begin{remark}\label{rmk:d=0}
    The case $T_{k,N}^*(d) = T_{k,M}^*(d) = 0$ can only happen when $d = 0$, because $x(0) = x_0 = x_{goal}$. To make this case coherent with \eqref{eq:t_k(d)} and subsequent definitions we choose $\frac{T_{k,N}^*(0)}{T_{k,M}^*(0)} = 1$.
\end{remark}

In order to quantify the resilience of a system, we introduce the following metric.

\begin{definition}
    The \emph{quantitative resilience of order $k$} of system \eqref{eq:splitted order k} is
    \begin{equation}\label{eq:r_k,q}
        r_{k,q} := \frac{1}{\underset{d\, \in\, \mathbb{R}^n}{\sup}\, t_k(d)} = \underset{d\, \in\, \mathbb{R}^n}{\inf} \frac{T_{k,N}^*(d)}{T_{k,M}^*(d)}.
    \end{equation}
\end{definition}

For a resilient system, $r_{k,q} \in (0,1]$. The closer $r_{k,q}$ is to $1$, the smaller is the loss of performance caused by the malfunction.
Quantitative resilience depends on matrices $B$ and $C$, i.e., on the actuators that are producing undesirable inputs.

Computing $r_{k,q}$ naively requires solving four nested optimization problems over continuous constraint sets, with three of them being infinite-dimensional function spaces. A brute force approach to this problem is doomed to fail.

\begin{problem}
    Establish an efficient method to compute $r_{k,q}$.
\end{problem}

We will explore thoroughly the simple case $k = 1$ in the following sections and generalize their results to the case $k \in \mathbb{N}$ in Section~\ref{section:integrators}.
For $k = 1$, the systems \eqref{eq:system order k} and \eqref{eq:splitted order k} simplify into the following linear driftless systems
\begin{align}
    \dot{x}(t) &= \bar{B} \bar{u}(t), \qquad \text{with} \quad \bar{u} \in \bar{U} \qquad \text{and} \quad x(0) = x_0 \in \mathbb{R}^n, \label{eq:system order 1} \\
    \dot{x}(t) &= Bu(t) + Cw(t), \quad \text{with} \quad u \in U, \quad w \in W \quad \text{and} \quad x(0) = x_0 \in \mathbb{R}^n. \label{eq:splitted order 1}
\end{align}
For $k = 1$ we are also able to write the \emph{nominal reach time} $T_N^*$ as 
\begin{equation}\label{eq:nominal reach time}
    T_N^*(d) := \underset{\bar{u}\, \in\, \bar{U} }{\inf} \Big\{ T \geq 0 : \int_0^T \hspace{-3mm} \bar{B} \bar{u}(t)\, dt = d \Big\},
\end{equation}
and the \emph{malfunctioning reach time} $T_M^*$ as
\begin{equation}\label{eq:malfunctioning reach time}
    T_M^*(d) := \underset{w\, \in\, W}{\sup} \Bigg\{ \underset{u\, \in\, U}{\inf} \Big\{ T \geq 0 : \int_0^T \hspace{-3mm} Bu(t) + Cw(t)\, dt = d \Big\} \Bigg\}.
\end{equation}
The \emph{ratio of reach times} in the direction $d \in \mathbb{R}^n$ becomes $t(d) :=  T_M^*(d) / T_N^*(d)$.
The \emph{quantitative resilience} $r_q$ of a system following \eqref{eq:splitted order 1} is then
 \begin{equation}\label{eq:r_q}
     r_q := \frac{1}{\underset{d\, \in\, \mathbb{R}^n}{\sup}\, t(d)} = \underset{d\, \in\, \mathbb{R}^n}{\inf}\ \frac{T_N^*(d)}{T_M^*(d)}.
 \end{equation}
Then, $T_{1,N}^*(d) = T_N^*(d)$, $T_{1,M}^*(d) = T_M^*(d)$, $t_1(d) = t(d)$ for all $d \in \mathbb{R}^n$ and $r_{1,q} = r_q$.

We now discuss the information setting in the malfunctioning system.
The resilience framework of \citep{IFAC, journal_paper} assumes that $u$ has only access to the past and current values of $w$, but not to their future. Then, the optimal control $u^*$ in \eqref{eq:malfunctioning reach time} cannot anticipate a truly random undesirable input $w$. Hence, this strategy is not likely to result in the global time-optimal trajectory of Definition~\ref{def: T_M^*}.

In fact, there would be no single obvious choice for $u^*\big(t, w(t)\big)$, rendering $T_M^*$ ill-defined and certainly not time-optimal, whereas $T_N^*$ is time-optimal. In this case, our concept of quantitative resilience becomes meaningless.
The work \citep{Borgest} states that to calculate $u^*$ without future knowledge of $w^*$ the only technique is to solve the intractable Isaac's equation. Thus, the paper \citep{Borgest} derives only suboptimal solutions and concludes that its practical contribution is minimal.

Instead, we follow \citep{Sakawa} where the inputs $u^*$ and $w^*$ are both chosen to make the transfer from $x_0$ to $x_{goal}$ time-optimal in the sense of Definition~\ref{def: T_M^*}.
The controller knows that $w^*$ will be chosen to make $T_M^*$ the longest. Thus, $u^*$ is chosen to react optimally to this worst undesirable input. Then, $w^*$ is chosen, and to make $T_M^*$ the longest, it is the same as the controller had predicted. Hence, from an outside perspective it looks as if the controller knew $w^*$ in advance, as reflected by \eqref{eq:T_k,M^*}.

We will prove in the following sections that with this information setting $w^*$ is constant. Then, the controller can more easily and more reasonably predict what is the worst $w^*$ and build the adequate $u^*$. With these two input signals, $T_M^*$ is time-optimal in the sense of Definition~\ref{def: T_M^*} and can be meaningfully compared with $T_N^*$ to define the quantitative resilience of control systems.

\section{Dynamics of the Initial System}\label{section:initial dynamics}

We start with the initial system of dynamics \eqref{eq:system order 1} and aim to calculate the nominal reach time $T_N^*$. We introduce the constant input set $\bar{U}_c := \big\{ \bar{u} \in \mathbb{R}^{m+p} : \bar{u}_i \in [\bar{u}_i^{min}, \bar{u}_i^{max}]\ \text{for}\ i \in [\![1,m+p]\!] \big\}$.

\begin{proposition}\label{prop:unperturbed time}
    For a controllable system \eqref{eq:system order 1} and $d = x_{goal} - x_0 \in \mathbb{R}^n$, the infimum $T_N^*(d)$ of \eqref{eq:nominal reach time} is achieved with a constant control input $\bar{u}^* \in \bar{U}_c$.
\end{proposition}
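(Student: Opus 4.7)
The plan is to reduce an arbitrary admissible control to a constant one by time-averaging. Given any $\bar{u} \in \bar{U}$ that steers the state from $x_0$ to $x_{goal}$ at time $T > 0$, define
\begin{equation*}
    v := \frac{1}{T}\int_0^T \bar{u}(t)\, dt \in \mathbb{R}^{m+p}.
\end{equation*}
Since each coordinate of $\bar{u}(t)$ lies in the interval $[\bar{u}_i^{min}, \bar{u}_i^{max}]$ for almost every $t \in [0,T]$, the time average $v_i$ lies in the same interval, so $v \in \bar{U}_c$. Because $\bar{B}$ is constant, the terminal condition gives $d = \int_0^T \bar{B}\bar{u}(t)\, dt = T\bar{B}v$. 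Applying instead the constant control $\bar{u}^*(t) \equiv v$ yields the state $x_0 + T\bar{B}v = x_{goal}$ at time $T$. Hence every time achievable by some admissible control is also achievable by a constant control in $\bar{U}_c$, and conversely, so the infimum in \eqref{eq:nominal reach time} equals $\inf\{T \geq 0 : d \in T\bar{B}\bar{U}_c\}$.

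Next I would show that this infimum is attained. If $d = 0$ then $T_N^*(d) = 0$ and any constant input works (cf.\ Remark~\ref{rmk:d=0}), so assume $d \neq 0$. Controllability of \eqref{eq:system order 1} ensures $T_N^*(d) < \infty$. Pick a sequence $T_n \searrow T_N^*(d)$ with associated constant controls $v_n \in \bar{U}_c$ satisfying $T_n \bar{B} v_n = d$. The set $\bar{U}_c$ is a compact hyperrectangle, so passing to a subsequence we may assume $v_n \to v^* \in \bar{U}_c$. Taking $n \to \infty$ in $T_n \bar{B} v_n = d$ gives $T_N^*(d)\,\bar{B} v^* = d$, so the constant control $\bar{u}^* \equiv v^*$ attains the infimum.

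The mildly delicate point — and the only real obstacle — is the averaging step: it uses measurability/integrability of the admissible control on $[0,T]$ and the fact that $\bar{U}_c$ is convex so that the $\mathcal{L}_\infty$ constraint is preserved under averaging. This is why the formulation of $\bar{U}$ in \eqref{eq:U bar} as a coordinate-wise box (rather than a general input set) is used implicitly here. Everything else is a routine compactness argument, and the linearity of the driftless dynamics is what makes the reduction to a single vector $v$ possible; this reduction will fail for $k \geq 2$ and will have to be replaced by a more elaborate argument in Section~\ref{section:integrators}.
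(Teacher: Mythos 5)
Your proof is correct, but it takes a genuinely different route than the paper's. The paper first invokes Theorem~4.3 of \citep{liberzon} to conclude that the infimum in \eqref{eq:nominal reach time} is achieved by some admissible, possibly time-varying $\hat{u} \in \bar{U}$, and \emph{only then} averages $\hat{u}$ over $[0, T_N^*]$ to obtain the constant $\bar{u}^* \in \bar{U}_c$. You reverse the order: you first average to show that every feasible time for $\bar{U}$ is a feasible time for the constant set $\bar{U}_c$, reducing the problem to $T_N^*(d) = \inf\{T \geq 0 : d \in T\bar{B}\bar{U}_c\}$, and then attain this infimum by an elementary minimizing-sequence-plus-compactness argument on the finite-dimensional box $\bar{U}_c$. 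Your route is more self-contained, bypassing the heavier infinite-dimensional existence machinery, and it makes transparent that for driftless linear dynamics with box constraints the attainment of the infimum is a finite-dimensional compactness fact; the paper's route is shorter once Liberzon's theorem is granted and matches the template reused in Proposition~\ref{prop:u cst}. One small inaccuracy in your closing remark: the reduction to a constant control does \emph{not} fail for $k \geq 2$ — it survives, with the time average replaced by a normalized iterated integral $\frac{k!}{T^k}\int_0^T \cdots \int_0^{t_k}\bar{u}(t_{k+1})\,dt_{k+1}\cdots dt_2$, which is exactly what the paper does in Proposition~\ref{prop:T_k,N^*(d)}; the $\bar{U}_c$-membership still follows coordinate-wise because $\bar{u}^{min}$ and $\bar{u}^{max}$ are constant.
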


\begin{proof}
    Dynamics \eqref{eq:system order 1} are linear in $x$ and $\bar{u}$. Set $\bar{U}$ defined in \eqref{eq:U bar} is convex and compact. 
    The system is controllable, so $x_{goal}$ is reachable.
    The assumptions of Theorem 4.3 of \citep{liberzon} are satisfied, leading to the existence of a time optimal control $\hat{u} \in \bar{U}$.
    Thus, the infimum in \eqref{eq:nominal reach time} is a minimum and $\int_0^{T_N^*} \bar{B} \hat{u}(t)\, dt = d$. If $d = 0$, then according to Remark~\ref{rmk:d=0}, $T_N^* = 0$ and we take $\bar{u}^* = 0$ so that $\bar{B}\bar{u}^* T_N^* = d$.
    Otherwise, $T_N^* > 0$, so we can define the constant vector $\bar{u}^* := \frac{1}{T_N^*} \int_0^{T_N^*} \hat{u}(t)\, dt \in \mathbb{R}^{m+p}$.
    Since $\hat{u} \in \bar{U}$, for $t \geq 0$ and $i \in [\![1,m+p]\!]$ we have $\bar{u}_i^{min} \leq \hat{u}_i(t) \leq \bar{u}_i^{max}$. Then,
    \begin{equation*}
        \frac{1}{T_N^*} \int_0^{T_N^*} \hspace{-2mm} \bar{u}_i^{min}\, dt = \bar{u}_i^{min} \leq \frac{1}{T_N^*} \int_0^{T_N^*} \hspace{-2mm} \hat{u}_i(t)\, dt = \bar{u}_i^* \leq \frac{1}{T_N^*} \int_0^{T_N^*} \hspace{-2mm} \bar{u}_i^{max}\, dt = \bar{u}_i^{max}.
    \end{equation*}
    Thus, $\bar{u}^* \in \bar{U}_c$. Additionally, $\int_0^{T_N^*} \bar{B} \bar{u}^*\, dt = \bar{B} \bar{u}^* T_N^* = d$. 
\end{proof}

Following Proposition~\ref{prop:unperturbed time}, the nominal reach time simplifies to 
\begin{equation}\label{eq:T_N^* simplified}
    T_N^*(d) = \underset{\bar{u}_c\, \in\, \bar{U}_c}{\min} \big\{ T \geq 0 : \bar{B}\bar{u}_c\, T = d \big\}.
\end{equation}
The multiplication of the variables $\bar{u}_c$ and $T$ prevents the use of linear solvers.
Instead, we can numerically solve $T_N^*(d) = 1/\underset{\bar{u}_c\, \in\, \bar{U}_c}{\max} \big\{ \lambda : \bar{B} \bar{u}_c = \lambda d \big\}$, after using the transformation $\lambda = \frac{1}{T}$ in \eqref{eq:T_N^* simplified}.

The work \citep{SIAM_CT} showed that $T_N^*$ is absolutely homogeneous when the input set $\bar{U}$ is symmetric. However, in this work the allowable controls \eqref{eq:U bar} are not symmetric and thus $T_N^*$ loses its absolute symmetry but conserves a nonnegative homogeneity.

\begin{proposition}\label{prop:proportional nominal reach time}
    The nominal reach time $T_N^*$ is a nonnegatively homogeneous function of $d$, i.e., $T_N^*(\lambda d) = \lambda \ T_N^*(d)$ for $d \in \mathbb{R}^n$ and $\lambda \geq 0$.
\end{proposition}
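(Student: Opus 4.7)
The plan is to use the simplified characterization of the nominal reach time established in equation \eqref{eq:T_N^* simplified}, which reduces $T_N^*(d)$ to a minimization over constant controls $\bar{u}_c \in \bar{U}_c$ satisfying $\bar{B}\bar{u}_c\, T = d$. The key observation is that this constraint is bilinear in $T$ and $\bar{u}_c$, but scaling only $T$ while keeping $\bar{u}_c$ fixed scales $d$ linearly, which is exactly what homogeneity requires.

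First I would dispose of the trivial case $\lambda = 0$: Remark~\ref{rmk:d=0} gives $T_N^*(0) = 0$, and $\lambda T_N^*(d) = 0$, so equality holds. For $\lambda > 0$, I would prove the two inequalities $T_N^*(\lambda d) \leq \lambda T_N^*(d)$ and $T_N^*(\lambda d) \geq \lambda T_N^*(d)$ separately. For the first inequality, let $\bar{u}_c^* \in \bar{U}_c$ achieve $T_N^*(d)$ in \eqref{eq:T_N^* simplified}, so $\bar{B}\bar{u}_c^*\, T_N^*(d) = d$. Multiplying both sides by $\lambda$ gives $\bar{B}\bar{u}_c^*\, \bigl(\lambda T_N^*(d)\bigr) = \lambda d$, and since $\bar{u}_c^* \in \bar{U}_c$ and $\lambda T_N^*(d) \geq 0$, the pair $\bigl(\bar{u}_c^*, \lambda T_N^*(d)\bigr)$ is feasible for the minimization defining $T_N^*(\lambda d)$, yielding $T_N^*(\lambda d) \leq \lambda T_N^*(d)$.

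For the reverse inequality, let $\bar{u}_c' \in \bar{U}_c$ achieve $T_N^*(\lambda d)$, so $\bar{B}\bar{u}_c'\, T_N^*(\lambda d) = \lambda d$. Dividing by $\lambda > 0$ gives $\bar{B}\bar{u}_c'\, \bigl(T_N^*(\lambda d)/\lambda\bigr) = d$, so the pair $\bigl(\bar{u}_c', T_N^*(\lambda d)/\lambda\bigr)$ is feasible for the minimization defining $T_N^*(d)$, giving $T_N^*(d) \leq T_N^*(\lambda d)/\lambda$, i.e., $\lambda T_N^*(d) \leq T_N^*(\lambda d)$. Combining both directions yields the claimed equality.

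The argument is essentially a scaling trick and there is no real obstacle; the main thing to watch is that controllability (assumed in Proposition~\ref{prop:unperturbed time} and inherited here) is what guarantees the infimum is attained, so the constant control witness $\bar{u}_c^*$ actually exists. If one wished to avoid this assumption, the same proof would go through at the level of infima in \eqref{eq:nominal reach time} by simply time-rescaling any admissible trajectory, but using \eqref{eq:T_N^* simplified} makes the derivation shortest.
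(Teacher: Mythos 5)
Your proof is correct and takes essentially the same approach as the paper's: both directions are proved by scaling the time-optimal constant control for one target to obtain a feasible pair for the other, invoking optimality each time. The closing remark about working directly at the level of infima by time-rescaling admissible trajectories is a sensible observation but not used by the paper.
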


\begin{proof}
    Let $d \in \mathbb{R}^n$, $\lambda \geq 0$. The case $\lambda = 0$ is trivial since $T_N^*(0) = 0$, so consider $\lambda > 0$.
    The nominal reach time for $d$ is $T_N^*(d)$, so there exists $\bar{u}_d \in \bar{U}_c$ such that $\bar{B} \bar{u}_d T_N^*(d) = d$. Then, $\bar{B}\, \bar{u}_d\, \lambda T_N^*(d) = \lambda d$. The optimality of $T_N^*(\lambda d)$ to reach $\lambda d$ leads to $T_N^*(\lambda d) \leq \lambda T_N^*(d)$.
    
    There exists $\bar{u}_{\lambda d} \in \bar{U}_c$ such that $\bar{B} \bar{u}_{\lambda d} T_N^*(\lambda d) = \lambda d$, so $\bar{B}\, \bar{u}_{\lambda d}\, \frac{T_N^*(\lambda d)}{\lambda} = d$. The optimality of $T_N^*(d)$ to reach $d$ yields $T_N^*(d) \leq \frac{T_N^*(\lambda d)}{\lambda}$.
    Thus, $\lambda T_N^*(d) \leq T_N^*(\lambda d)$.    
\end{proof}

We can now tackle the dynamics of the malfunctioning system after a loss of control authority over some of its actuators.

\section{Dynamics of the Malfunctioning System}\label{section:malfunctioning dynamics}

We study the system of dynamics \eqref{eq:splitted order 1}, with the aim of computing the malfunctioning reach time $T_M^*$. 
We define the constant input sets $U_c := \big\{ u \in \mathbb{R}^m : u_i \in [u_i^{min}, u_i^{max}]\ \text{for}\ i \in [\![1,m]\!] \big\}$, $W_c := \big\{ w \in \mathbb{R}^p : w_i \in [w_i^{min}, w_i^{max}]\ \text{for}\ i \in [\![1,p]\!] \big\}$, and $V_c$ the set of vertices of $W_c$.

\begin{definition}\label{def:vertex}
    A \emph{vertex} of a set $X \subset \mathbb{R}^n$ is a point $x \in X$ such that if there are $x_1 \in X$, $x_2 \in X$ and $\lambda \in [0,1]$ with $x = \lambda x_1 + (1-\lambda)x_2$, then $x = x_1 = x_2$. 
\end{definition}

\begin{proposition}\label{prop:u cst}
    For a resilient system \eqref{eq:splitted order 1}, $d \in \mathbb{R}_*^n$ and $w \in W$, a constant control input $u_d^*(w) \in U_c$ achieves the infimum $T_M(w,d)$ of \eqref{eq:malfunctioning reach time} defined as
    \begin{equation}\label{eq:T_M}
        T_M(w,d) := \underset{u\, \in\, U}{\inf} \Big\{ T \geq 0 : \int_0^T \hspace{-2mm} Bu(t) + Cw(t)\, dt = d \Big\}.
    \end{equation}
\end{proposition}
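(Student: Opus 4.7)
The plan is to mirror the structure of the proof of Proposition~\ref{prop:unperturbed time}: invoke an existence theorem for time-optimal controls, then replace the optimal control by its time-average to obtain a constant control that still reaches the target in the same time.

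First, I would fix $w \in W$ and view \eqref{eq:splitted order 1} as the affine control system $\dot{x}(t) = Bu(t) + f(t)$, where $f(t) := Cw(t)$ is a measurable bounded forcing term. Since the system is resilient, some $u \in U$ drives the state from $x_0$ to $x_0 + d$, so the reachable set at time $T_M(w,d)$ contains the target. The dynamics are linear in $x$ and $u$, the set $U$ from \eqref{eq:function set U} is convex and compact, and the drift $f$ is $\mathcal{L}_\infty$. These are exactly the hypotheses of Theorem~4.3 of \citep{liberzon}, which therefore yields the existence of a time-optimal control $\hat{u} \in U$ for which the infimum $T_M(w,d)$ is actually attained:
\begin{equation*}
    \int_0^{T_M(w,d)} \hspace{-3mm} B\hat{u}(t)\, dt + \int_0^{T_M(w,d)} \hspace{-3mm} Cw(t)\, dt = d.
\end{equation*}

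Second, since $d \neq 0$ and the integrand above vanishes at $T = 0$, we have $T_M(w,d) > 0$, and I can define the constant vector
\begin{equation*}
    u_d^*(w) := \frac{1}{T_M(w,d)} \int_0^{T_M(w,d)} \hspace{-3mm} \hat{u}(t)\, dt \in \mathbb{R}^m.
\end{equation*}
Because $\hat{u}_i(t) \in [u_i^{min}, u_i^{max}]$ for a.e.\ $t \geq 0$ and every $i \in [\![1,m]\!]$, averaging preserves these bounds componentwise, so $u_d^*(w) \in U_c$; this is the same monotone-integration argument used in the proof of Proposition~\ref{prop:unperturbed time}.

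Finally, substituting the constant control $u_d^*(w)$ into the dynamics gives
\begin{equation*}
    \int_0^{T_M(w,d)} \hspace{-3mm} Bu_d^*(w)\, dt + \int_0^{T_M(w,d)} \hspace{-3mm} Cw(t)\, dt = Bu_d^*(w)\, T_M(w,d) + \int_0^{T_M(w,d)} \hspace{-3mm} Cw(t)\, dt = d,
\end{equation*}
by definition of $u_d^*(w)$. Thus the constant control $u_d^*(w) \in U_c$ drives the state to the target at time $T_M(w,d)$ and achieves the infimum in \eqref{eq:T_M}. The only potential obstacle is ensuring that Theorem~4.3 of \citep{liberzon} applies with a nonzero time-varying drift $Cw(\cdot)$ rather than the driftless situation of Proposition~\ref{prop:unperturbed time}; this is genuinely part of that theorem's generality, but it is worth stating explicitly so the reader sees that $w$ being only $\mathcal{L}_\infty$ (rather than constant) causes no trouble.
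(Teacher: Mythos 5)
Your proof is correct and follows essentially the same route as the paper's: invoke Theorem~4.3 of \citep{liberzon} for existence of a time-optimal control with $w$ fixed, then time-average that control to obtain a constant element of $U_c$ reaching the target at the same instant. If anything, your treatment of $Cw(\cdot)$ as a bona fide time-varying drift is a cleaner way to set up the existence theorem than the paper's absorption of $\int_0^T Cw(t)\,dt$ into a vector $z$ that is described as constant but in fact depends on $T$.
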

\begin{proof}
    Let $d \in \mathbb{R}_*^n$, $w \in W$ and $T_M(w,d) = \underset{u\, \in\, U}{\inf} \big\{ T \geq  0 : \int_0^T Bu(t)\, dt = z \big\}$, with $z := d - \int_0^T Cw(t)\, dt \in \mathbb{R}^n$ a constant vector because $w$ is fixed. Since the system is resilient, any $z \in \mathbb{R}^n$ is reachable. Additionally, $U$ is convex and compact, and $\int_0^T Bu(t)\, dt$ is linear in $u$. Then, according to Theorem 4.3 of \citep{liberzon} a time-optimal control exists. Following the proof of Proposition~\ref{prop:unperturbed time}, we conclude that the infimum of \eqref{eq:T_M} is a minimum and that the optimal control $u_d^*(w)$ belongs to $U_c$.
\end{proof}

We can now work on the supremum of \eqref{eq:malfunctioning reach time}.

\begin{proposition}\label{prop:w cst}
     For a resilient system \eqref{eq:splitted order 1} and $d \in \mathbb{R}_*^n$, the supremum $T_M^*(d)$ of \eqref{eq:malfunctioning reach time} is achieved with a constant undesirable input $w^* \in W_c$.
\end{proposition}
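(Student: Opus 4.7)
The approach is to transfer the averaging trick used in Proposition~\ref{prop:u cst} from the inner infimum to the outer supremum. By Proposition~\ref{prop:u cst}, for every fixed $w \in W$ the inner infimum in \eqref{eq:malfunctioning reach time} is attained by a constant $u_d^\ast(w) \in U_c$ satisfying the reach identity
\[
B u_d^\ast(w)\, T_M(w, d) + C\int_0^{T_M(w, d)} w(t)\, dt = d.
\]
So only the outer supremum over $w \in W$ requires new treatment.

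I would first establish existence of a maximizer $w^\ast \in W$ of $T_M(\cdot, d)$ by combining weak-$\ast$ compactness of $W \subset \mathcal{L}_\infty(\mathbb{R}^+, \mathbb{R}^p)$ with upper semicontinuity of $w \mapsto T_M(w, d)$ in that topology; the latter follows because the reach condition depends on $w$ only through the absolutely continuous trajectory $t\mapsto \int_0^t C w(s)\, ds$. Setting $T^\ast := T_M(w^\ast, d) = T_M^\ast(d)$, I would then average: define $\bar w^\ast := \frac{1}{T^\ast} \int_0^{T^\ast} w^\ast(t)\, dt$. Pointwise membership $w^\ast(t) \in W_c$ together with convexity of the box $W_c$ yields $\bar w^\ast \in W_c$, and substituting into the reach identity shows that the constant pair $(u_d^\ast(w^\ast), \bar w^\ast)$ still reaches $d$ at time $T^\ast$, giving $T_M(\bar w^\ast, d) \le T^\ast$.

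The main obstacle is to rule out the strict inequality $T_M(\bar w^\ast, d) < T^\ast$, which would mean that some alternative constant $u_c' \in U_c$ reaches $d$ against the constant adversary $\bar w^\ast$ at a time $T' < T^\ast$, i.e., $B u_c' = d/T' - C \bar w^\ast$. I expect to close this gap by exploiting the driftless linear structure of \eqref{eq:splitted order 1}. A Pontryagin-type route would show directly that $w^\ast$ is already constant: because the dynamics are driftless the costate $\lambda \in \mathbb{R}^n$ is constant along the optimal trajectory, so the adversarial optimality condition reduces pointwise to maximizing $\lambda^T C w$ over $w \in W_c$, and this maximizer set is independent of $t$. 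As an elementary alternative, from the hypothetical faster control $u_c'$ I would try to construct a modification of $w^\ast$ that preserves the terminal integral $\int_0^{T^\ast} C w^\ast\, dt$ but alters its partial averages on $[0, T']$ so as to block the controller's reach at every time $s < T^\ast$, thereby producing an adversary in $W$ whose reach time strictly exceeds $T^\ast$ and contradicting the supremum property of $w^\ast$.
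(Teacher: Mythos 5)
Your route is genuinely different from the paper's: the paper first argues that the supremum over $W$ can be restricted to the finite-dimensional set $W_c$ and then invokes Neustadt's attainable-set compactness together with Lemma~\ref{lemma: T continuous}, whereas you extract a (possibly time-varying) maximizer $w^\ast\in W$ by weak-$\ast$ compactness and then average it over $[0,T^\ast]$ to obtain $\bar w^\ast\in W_c$ with $T_M(\bar w^\ast,d)\leq T^\ast$, leaving the reverse inequality to be established.

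That last step is a real gap, and neither of the two closures you sketch works as stated. The Pontryagin argument gives a constant costate for the \emph{controller's} time-optimal problem against a fixed disturbance, which is exactly the content of Proposition~\ref{prop:u cst}; it says nothing directly about the adversary, whose objective is the controller's value function $T_M(\cdot,d)$ rather than an integral running cost along a trajectory. Turning the constant-costate intuition into a necessary condition on the optimal $w^\ast$ would require the Isaacs differential-game machinery, which the paper explicitly flags as intractable in this setting and deliberately sidesteps by adopting the Sakawa open-loop information pattern. Your alternative route is internally inconsistent with your own step one: if $w^\ast$ is genuinely a maximizer of $T_M(\cdot,d)$ over all of $W$, then no admissible modification of $w^\ast$ in $W$ can have strictly larger reach time, so the contradiction you hope to produce by ``blocking the controller'' cannot arise.

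The gap does close, but by a simpler continuity argument that leans on the paper's Lemma~\ref{lemma: T continuous} rather than on game theory. For every $T\in(0,T^\ast)$ the controller fails to reach $d$ by time $T$ against $w^\ast$. Averaging candidate controls exactly as in the proof of Proposition~\ref{prop:u cst}, this failure is equivalent to $d/T - C\bar w^\ast(T)\notin \{Bu_c : u_c\in U_c\}$, where $\bar w^\ast(T):=\tfrac{1}{T}\int_0^T w^\ast(s)\,ds\in W_c$ is the running average, which in turn is equivalent to $T < T_M\big(\bar w^\ast(T),d\big)$ because that set is a ray $[T_M(\cdot,d),\infty)$ by the convexity of $Y$ and $0\in Y^\circ$. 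Letting $T\uparrow T^\ast$, the running average $\bar w^\ast(T)\to\bar w^\ast$ and $T_M(\cdot,d)$ is continuous on $W_c$ by Lemma~\ref{lemma: T continuous}, so $T^\ast\leq T_M(\bar w^\ast,d)$ and hence equality. In fact this same running-average bound, applied directly, yields $T_M(w,d)\leq\max_{w_c\in W_c}T_M(w_c,d)$ for every $w\in W$, so the weak-$\ast$ extraction step is avoidable altogether: compactness of $W_c$ and Lemma~\ref{lemma: T continuous} already give the result.
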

\begin{proof}
    For $d \in \mathbb{R}_*^n$ and with $u_d^*$ defined in Proposition~\ref{prop:u cst}, \eqref{eq:malfunctioning reach time} simplifies to $T_M^*(d) = \underset{w\, \in\, W}{\sup} \big\{ T :  B u_d^*(w) T + \int_0^T Cw(t)\, dt = d \big\}$. Let $w^c := \int_0^{T_M(w,d)} \frac{w(t)}{T_M(w,d)}\, dt$ for $w \in W$. Then, for all $t \geq 0$ and $i \in [\![1,p]\!]$ we have $w_i^{min} \leq w_i(t) \leq w_i^{max}$. Thus, 
    \begin{equation*}
         \int_0^{T_M(w,d)} \hspace{-2mm} \frac{w_i^{min}\, dt}{T_M(w,d)} = w_i^{min} \leq \int_0^{T_M(w,d)} \hspace{-2mm} \frac{w_i(t)\, dt}{T_M(w,d)} = w^c_i \leq \int_0^{T_M(w,d)} \hspace{-2mm} \frac{w_i^{max}\, dt}{T_M(w,d)} = w_i^{max}.
    \end{equation*}
    So, $w^c \in W_c$. Then, $\int_0^{T_M(w,d)} Cw(t)\, dt = C w^c T_M(w,d) = d - B u_d^*(w) T_M(w,d)$. Conversely, note that for all $w_c \in W_c$ and $T > 0$, we can define $w(t) := \frac{1}{T}w_c$ for $t \in [0,T]$ such that $\int_0^T Cw(t)\, dt = C w_c$ and $w \in W$.
    Therefore, the constraint space of the supremum of \eqref{eq:malfunctioning reach time} can be restricted to $W_c$.
    
    \vspace{1mm}
    
    We define the function $\varphi : W_c \rightarrow \mathbb{R}^n$ as $\varphi(w_c) := B u_d^*(w_c) + C w_c$ for $w_c \in W_c$.
    When applying $w_c$ and $u_d^*(w_c)$ the dynamics become $\dot x = \varphi(w_c)$.
    Neustadt in \citep{Neustadt} introduces $\mathcal{A}_{W_c} := \big\{ (x_1, T) : \ \text{for}\ w_c \in W_c, \int_0^T \varphi(w_c)\, dt = x_1 - x_0 \big\}$ as the attainable set from $x_0$ and using inputs in $W_c$.
    Since $\big(Bu_d^*(w_c) + Cw_c \big) T_M(w_c, d) = d$, we have $\varphi(w_c) = \frac{1}{T_M(w_c, d)}d$ and $\varphi$ is continuous in $w_c$ according to Lemma~\ref{lemma: T continuous}.
    Set $W_c$ is compact and $x_0 \in \mathbb{R}^n$ is fixed. Then, Theorem~1 of \citep{Neustadt} states that $\mathcal{A}_{W_c}$ is compact.
    
    Note that $T_M^*(d) = \sup \big\{ T : (x_{goal}, T) \in \mathcal{A}_{W_c} \big\}$, then $T_M^*(d)$ is the supremum of a continuous function over the compact set $\mathcal{A}_{W_c}$, so the supremum of \eqref{eq:malfunctioning reach time} is a maximum achieved on $W_c$. 
\end{proof}

Following Propositions \ref{prop:u cst} and \ref{prop:w cst}, the malfunctioning reach time becomes
\begin{equation}\label{eq:T_M^* with W_c and U_c}
    T_M^*(d) = \max_{w_c\, \in\, W_c} \left\{ \underset{u_c\, \in\, U_c}{\min} \big\{ T \geq 0 : \big(Bu_c + Cw_c\big) T = d \big\} \right\}.
\end{equation}
The simplifications achieved so far were based on existence theorems from \citep{liberzon, Neustadt} upon which the bang-bang principle relies. The logical next step is to show that the maximum of \eqref{eq:T_M^* with W_c and U_c} is achieved by the extreme undesirable inputs, i.e., at the set of vertices of $W_c$, which we denote by $V_c$. However, most of the work on the bang-bang principle considers systems with a linear dependency on the input \citep{LaSalle, liberzon, Sussmann}, while $\varphi$ introduced in Proposition~\ref{prop:w cst} is not linear in the input $w_c$. 

The work from Neustadt \citep{Neustadt} considers a nonlinear $\varphi$, yet his discussion on bang-bang inputs would require us to show that $co(\varphi(W_c)) = co(\varphi(V_c))$. Since $\varphi$ is not linear, such a task is not trivial and in fact it amounts to proving that inputs in $V_c$ can do as much as inputs in $W_c$, i.e., we would need to prove the bang-bang principle.

Two more works \citep{Aronsson, Glashoff} consider bang-bang properties for systems with nonlinear dependency on the input. However, both of them require conditions that are not satisfied in our case. Work contained in \citep{Aronsson} needs the subsystem $\dot x = Cw$ to be controllable, while \citep{Glashoff} requires $T_M$ defined in \eqref{eq:T_M} to be concave in $w_c$.
Thus, even if bang-bang theory seems like a natural approach to restrict the constraint space from $W_c$ to $V_c$ in \eqref{eq:T_M^* with W_c and U_c}, we need a new optimization result, namely Theorem~2.1 from \citep{Maxmax_Minimax_Quotient_thm}. To employ this result, we first need to relate resilience to an inclusion of polytopes.
\begin{definition}\label{def:polytopes}
    A \emph{polytope} in $\mathbb{R}^n$ is a compact intersection of finitely many half-spaces.
\end{definition}

We define the sets $X := \big\{ Cw_c : w_c \in W_c\big\}$ and $Y := \big\{ Bu_c : u_c \in U_c \big\}$.

\begin{proposition}\label{prop:resilience and polytopes}
    For a system following \eqref{eq:splitted order 1}, $X$ and $Y$ are polytopes in $\mathbb{R}^n$. If the system is resilient, then $\dim Y = n$ and $-X \subseteq Y^\circ$.
\end{proposition}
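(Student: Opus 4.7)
The plan is to establish the polytope claim through standard linear algebra and then derive $-X \subseteq Y^\circ$ from resilience via a supporting-hyperplane argument. Since $U_c$ and $W_c$ are Cartesian products of closed intervals, each is a polytope---equivalently, the intersection of finitely many half-spaces, or the convex hull of a finite vertex set. Linear images of polytopes are polytopes (the image equals the convex hull of the images of the original vertices), so $Y = BU_c$ and $X = CW_c$ are polytopes in $\mathbb{R}^n$.

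For the second claim, assume the system is resilient and fix an arbitrary $w_c \in W_c$. The constant function $w(t) \equiv w_c$ lies in $W$, so for every $d \in \mathbb{R}^n \setminus \{0\}$ resilience combined with Proposition~\ref{prop:u cst} produces $u_c \in U_c$ and $T > 0$ satisfying $(Bu_c + Cw_c)T = d$. Equivalently, the shifted set $Y + Cw_c$ contains a positive multiple of every direction $\hat d \in \mathbb{S}$. Applying this to $\hat d$ and $-\hat d$ and invoking convexity of $Y + Cw_c$, the segment joining the two corresponding points passes through the origin, so $0 \in Y + Cw_c$.

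The crux is upgrading $0 \in Y + Cw_c$ to $0 \in (Y + Cw_c)^\circ$. If $Y + Cw_c$ were contained in a proper affine subspace of $\mathbb{R}^n$, any unit normal $\hat n$ to that subspace would satisfy $\hat n \cdot y = 0$ for every $y \in Y + Cw_c$, contradicting the existence of a strictly positive multiple of $\hat n$ in $Y + Cw_c$. So $\dim(Y + Cw_c) = n$; and if $0$ were a boundary point, the supporting hyperplane theorem would produce a unit $\hat n$ with $\hat n \cdot y \leq 0$ for every $y \in Y + Cw_c$, again contradicting the existence of a positive multiple of $\hat n$ in $Y + Cw_c$. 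Therefore $-Cw_c \in Y^\circ$; since $w_c$ was arbitrary, $-X \subseteq Y^\circ$, and the nonemptiness of $Y^\circ$ forces $\dim Y = n$. The main obstacle is precisely this boundary-versus-interior step, which I expect to clear cleanly via the supporting-hyperplane argument just sketched, once the directional reachability of $Y + Cw_c$ has been extracted from resilience through Proposition~\ref{prop:u cst}.
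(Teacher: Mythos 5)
Your proof is correct and takes a genuinely different, arguably cleaner route than the paper's. The paper proceeds in three separate steps: it shows $\dim Y = n$ directly from directional reachability; then proves $-X \subset Y$ via an explicit convex-combination argument, choosing $d_0 = x$ and $d_0 = -x$ to place points $\lambda_1 x$ and $\lambda_2 x$ on $Y$ straddling $-x$ along the ray $\mathbb{R} x$; and finally upgrades membership to interior membership by contradiction at the boundary, arguing that if $-x_1 \in \partial Y$ then the best admissible $y$ for target $d = -x_1$ would be $y = -x_1$ itself, giving $x_1 + y = 0 \notin \mathbb{R}_*^+ d$. You instead fold everything into a single separation argument: since $Y + Cw_c$ contains a strictly positive multiple of every direction in $\mathbb{S}$, no proper affine subspace can contain it and no supporting hyperplane at the origin can exist, so $0 \in (Y + Cw_c)^\circ$ at once. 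This yields $-Cw_c \in Y^\circ$ and $\dim Y = n$ simultaneously, bypassing the intermediate claim $-X \subset Y$ entirely. Your approach buys a short, hyperplane-based argument that makes the boundary-versus-interior step fully explicit (the paper's assertion that $y = -x_1$ is the best input implicitly uses that the ray beyond $-x_1$ exits $Y$, which is not spelled out); the cost is an appeal to the supporting hyperplane theorem rather than the elementary hands-on computation the paper favors. Both proofs are valid; yours is a clean alternative.
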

\begin{proof}
    Sets $U_c$ and $W_c$ are defined as polytopes in $\mathbb{R}^m$ and $\mathbb{R}^p$ respectively. Sets $X$ and $Y$ are linear images of $W_c$ and $U_c$, so they are polytopes in $\mathbb{R}^n$ \citep{inf_dim_analysis}.
    
    For a resilient system, following Propositions~\ref{prop:u cst} and \ref{prop:w cst} we know that for all $w_c \in W_c$ and all $d_0 \in \mathbb{R}^n$ there exists $u_c \in U_c$ and $T \geq 0$ such that $(Bu_c + Cw_c)T = d_0$. It also means that for all $x \in X$ and all $d_0 \in \mathbb{R}^n$ there exists $y \in Y$ and $T \geq 0$ such that $(x + y)T = d_0$.
    Since $d_0$ can be freely chosen in $\mathbb{R}^n$, we must have $\dim Y = n$.
    
    Take $x \in X$, $x \neq 0$ and $d_0 = x$. Then, there exists $y_1 \in Y$ and $T_1 > 0$ such that $(x+y_1)T_1 = x$. Hence, $\lambda_1 x \in Y$ with $\lambda_1 := -1 + 1/T_1$.
    Now take $d_0 = -x$. Then, there exists $y_2 \in Y$ and $T_2 > 0$ such that $(x+y_2)T_2 = -x$. Hence, $\lambda_2 x \in Y$ with $\lambda_2 := -1 - 1/T_2$.
    Since $\lambda_2 \leq -1 \leq \lambda_1$ and $Y$ is convex, we have $-x \in Y$.
    
    If $x = 0$, this process fails because we would get $T = 0$ when taking $d = 0$. Instead, take $d_0 \in \mathbb{S}$, then there exists $T > 0$ and $y \in Y$ such that $yT = d_0$. Repeating this for $-d_0$ and using the convexity of $Y$, we obtain $0 \in Y$. Thus $-X \subset Y$.
    
    Now assume that there exists $-x_1 \in -X \cap \partial Y$. Take $d = -x_1$, then the best input is $y = -x_1 \in \partial Y$ because $-X \subset Y$. Then, $x_1 + y = 0 \notin \mathbb{R}_*^+ d$, which contradicts the resilience of the system. Therefore, $-X \cap \partial Y = \emptyset$, i.e., $-X \subset Y^\circ$.
\end{proof}

We can now prove that the maximum of \eqref{eq:T_M^* with W_c and U_c} is achieved on $V_c$.

\begin{proposition}\label{prop:w on a vertex}
    For a resilient system \eqref{eq:splitted order 1} and $d \in \mathbb{R}_*^n$, the maximum of \eqref{eq:T_M^* with W_c and U_c} is achieved with a constant input $w^* \in V_c$.
\end{proposition}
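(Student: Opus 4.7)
The plan is to reformulate the outer maximization in \eqref{eq:T_M^* with W_c and U_c} as the minimization of a positive concave function over the polytope $W_c$, then invoke Theorem~2.1 of \citep{Maxmax_Minimax_Quotient_thm} to deduce that an optimizer lies at a vertex.

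First I would handle the inner problem. For $d \in \mathbb{R}_*^n$ and fixed $w_c \in W_c$, Proposition~\ref{prop:u cst} guarantees that $T_M(w_c,d)$ is finite and positive, so the substitution $\mu := 1/T$ converts the constraint $(Bu_c + Cw_c)T = d$ into $Bu_c = \mu d - Cw_c$. Hence
\begin{equation*}
    g(w_c) := \frac{1}{T_M(w_c, d)} = \max\big\{\mu \geq 0 : \mu d - Cw_c \in Y\big\},
\end{equation*}
where $Y := \{Bu_c : u_c \in U_c\}$ is the polytope from Proposition~\ref{prop:resilience and polytopes}. The interior inclusion $-X \subseteq Y^\circ$ established there yields $g(w_c) > 0$ for every $w_c \in W_c$, and the identity $T_M^*(d) = 1/\min_{w_c \in W_c} g(w_c)$ reduces the claim to showing that this minimum is attained at a vertex of $W_c$.

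Next I would exploit the polytope structure of $Y$. Writing $Y = \{y \in \mathbb{R}^n : a_i^\top y \leq b_i,\ i \in I\}$ with $I$ finite, the condition $\mu d - Cw_c \in Y$ reads $\mu\, a_i^\top d \leq b_i + a_i^\top C w_c$ for every $i \in I$. The boundedness of $Y$ combined with $g(w_c) > 0$ forces at least one $a_i^\top d > 0$, and the maximal feasible $\mu$ is the minimum over those indices of the affine quantities $(b_i + a_i^\top Cw_c)/(a_i^\top d)$. Thus $g$ is concave on $W_c$ as a pointwise minimum of finitely many affine functions. Applying Theorem~2.1 of \citep{Maxmax_Minimax_Quotient_thm} to this concave minimization over the polytope $W_c$ then delivers an optimizer $w^* \in V_c$, concluding the argument.

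The main obstacle I anticipate is matching the hypotheses of the cited theorem with my reformulation: the theorem is stated as a minimax quotient result, whereas the argument above uses the cleaner $\mu = 1/T$ substitution to expose a plain concave minimum. To apply it cleanly, I would likely undo the substitution and rewrite the problem so that the quotient structure of \eqref{eq:T_M^* with W_c and U_c} in the variables $T$ and $(Bu_c + Cw_c)$ is manifest, and verify that the dimensional and interior containment requirements ($\dim Y = n$ and $-X \subseteq Y^\circ$ from Proposition~\ref{prop:resilience and polytopes}) align with the theorem's assumptions. Continuity of $T_M(\cdot, d)$, already invoked through Lemma~\ref{lemma: T continuous} in Proposition~\ref{prop:w cst}, ensures that the minimization is well posed on the compact set $W_c$.
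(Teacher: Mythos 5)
Your reformulation is correct and actually exposes a route that is \emph{more elementary} than the paper's, but you then undercut it by citing the wrong tool. Once you have written $g(w_c) := 1/T_M(w_c,d) = \min_{i : a_i^\top d > 0} \big(b_i + a_i^\top Cw_c\big)/(a_i^\top d)$ and observed that $g$ is a pointwise minimum of finitely many affine functions of $w_c$ (hence concave, positive, and continuous on $W_c$), you do \emph{not} need Theorem~2.1 of \citep{Maxmax_Minimax_Quotient_thm} at all — nor should you ``undo the substitution'' to recover a quotient form, as you suggest in your last paragraph. What you need is only the classical fact that a concave function on a compact convex polytope attains its minimum at an extreme point: if $w^\ast = \sum_j \lambda_j v_j$ is a convex combination of vertices minimizing $g$, then concavity gives $g(w^\ast) \geq \sum_j \lambda_j g(v_j) \geq \min_j g(v_j) \geq g(w^\ast)$, so some vertex $v_j \in V_c$ is also a minimizer. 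That single observation closes the argument.

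The paper's own proof takes a different path: it applies Theorem~2.1 to the quotient form in the polytopes $X = CW_c$ and $Y$ (after converting $\lambda$ into the norm $\|x+y\|/\|d\|$), deduces that the optimal $x^\ast$ is a vertex of $X$, and then spends a further paragraph arguing that such a vertex of $X$ necessarily arises from some $v \in V_c$ (handling the case where $C$ has nontrivial kernel, so that several $w_c$ map to the same $x^\ast$). Your route works directly on $W_c$ and sidesteps that entire translation step, which is a genuine simplification. The one correction needed is to replace the erroneous appeal to Theorem~2.1 with the extreme-point lemma for concave minimization; with that fix your argument is complete, shorter, and more self-contained than the paper's. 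A small cosmetic point: boundedness of $Y$ and $d \neq 0$ alone already force some $a_i^\top d > 0$ (you do not also need $g(w_c)>0$ for that), though the positivity of $g$ is of course still needed to discard the constraint $\mu \geq 0$.
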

\begin{proof}
    Using $\lambda = \frac{1}{T}$ in \eqref{eq:T_M^* with W_c and U_c} yields $T_M^*(d) = 1/\underset{x\, \in\, X}{\min} \big\{ \underset{y\, \in\, Y}{\max} \big\{ \lambda \geq 0 : x + y = \lambda d \big\} \big\}$.
    Since $\lambda \geq 0$, we can write $\lambda = |\lambda| = \| \lambda d\| / \|d\| = \|x + y\| / \|d\|$. Then, 
     \begin{equation}\label{eq:T_M^*(d) eq with x,y}
        T_M^*(d) = \frac{\|d\|}{\underset{x\, \in\, X}{\min} \big\{ \underset{y\, \in\, Y}{\max} \big\{ \|x + y\| : x+y \in \mathbb{R}^+ d \big\} \big\}}.
    \end{equation}
    Following Proposition~\ref{prop:resilience and polytopes}, sets $X$ and $Y$ are polytopes in $\mathbb{R}^n$, $-X \subset Y^\circ$ and $\dim Y = n$.
    Then, we can apply Theorem~2.1 of \citep{Maxmax_Minimax_Quotient_thm} and conclude that the minimum $x^*$ of \eqref{eq:T_M^*(d) eq with x,y} must be realized on a vertex of $X$. Now, we want to show that $x^* \in \big\{Cv : v \in V_c\big\}$.
    
    Let $w_c \in W_c$ such that $x^* = Cw_c$. If $w_c \in V_c$ we are done. 
    Otherwise, two possibilities remain.
    In the first case $w_c$ is on the boundary of the hypercube $W_c$ and then we take $F$ to be the surface of lowest dimension of $\partial W_c$ such that $w_c \in F$ and $\dim F \geq 1$.
    The other possibility is that $w_c \in W_c^\circ$; we then define $F := W_c$.
    Thus, in both cases $V_c \cap F \neq \emptyset$ and $F$ is convex.
    Then, we take $v \in V_c \cap F$ and $a := v - w_c \in F$. Since $\dim F \geq 1$ and $w_c \in F$, there exists some $\alpha > 0$ such that $w_c \pm \alpha a \in F$. Then
    \begin{equation*}
        x^* = Cw_c = C \Big(\frac{w_c + \alpha a}{2} + \frac{w_c - \alpha a}{2} \Big) = \frac{1}{2} x_+ + \frac{1}{2}x_-,
    \end{equation*}
    with $x_\pm := C(w_c \pm \alpha a)$. Since $x^*$ is a vertex of $X$ and $x_\pm \in X$, according to Definition~\ref{def:vertex}, $x^* = x_+ = x_-$. Then, $x^* - x_- = \alpha Ca = 0$, which yields $Ca = 0$ because $\alpha > 0$. Thus, $Cv = C(w_c + a) = Cw_c = x^*$ and $v \in V_c$.
    Therefore, the maximum of \eqref{eq:T_M^* with W_c and U_c} is achieved on $V_c$.  
\end{proof}

We have reduced the constraint set of \eqref{eq:malfunctioning reach time} from an infinite-dimensional set $W$ to a finite set $V_c$ of cardinality $2^p$, with $p$ being the number of malfunctioning actuators.
Following Propositions \ref{prop:u cst}, \ref{prop:w cst} and \ref{prop:w on a vertex}, the malfunctioning reach time can now be calculated with
\begin{equation}\label{eq:optimization problem}
    T_M^*(d) = \max_{w_c\, \in\, V_c} \left\{ \underset{u_c\, \in\, U_c}{\min} \big\{ T \geq 0 : \big(Bu_c + Cw_c\big) T = d \big\} \right\}.
\end{equation}

It is logic to wonder if the minimum of \eqref{eq:optimization problem} could be restricted to the vertices of $U_c$, just like we did for the maximum over $W_c$. However, that is not possible. Indeed, $w_c$ is chosen freely in $W_c$ in order to make $T_M^*$ as large as possible, while $u_c$ is chosen to counteract $w_c$ and make $Bu_c + Cw_c$ collinear with $d$. This constraint could not be fulfilled for all $d \in \mathbb{R}^n$ if $u_c$ was only chosen among the vertices of $U_c$.

Similarly to the nominal reach time, $T_M^*$ is also linear in the target distance.

\begin{proposition}\label{prop:proportional malfunctioning reach time}
    The malfunctioning reach time $T_M^*$ is a nonnegatively homogeneous function of $d$, i.e., $T_M^*(\lambda d) = \lambda \ T_M^*(d)$ for $d \in \mathbb{R}^n$ and $\lambda \geq 0$.
\end{proposition}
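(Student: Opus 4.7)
The plan is to imitate the proof of Proposition~\ref{prop:proportional nominal reach time}, using the simplified form \eqref{eq:optimization problem} of the malfunctioning reach time established via Propositions~\ref{prop:u cst}, \ref{prop:w cst}, and \ref{prop:w on a vertex}. First I would dispose of the trivial cases: if $\lambda = 0$, then $\lambda d = 0$, and Remark~\ref{rmk:d=0} yields $T_M^*(0) = 0 = \lambda T_M^*(d)$; similarly, if $d = 0$, both sides are zero. So I may assume $\lambda > 0$ and $d \in \mathbb{R}_*^n$, which is precisely the setting in which \eqref{eq:optimization problem} applies.

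The key step is to factor the homogeneity through the inner minimization. For fixed $w_c \in V_c$, define
\begin{equation*}
    T_M(w_c, d) := \underset{u_c\, \in\, U_c}{\min}\ \big\{ T \geq 0 : (Bu_c + Cw_c)T = d \big\},
\end{equation*}
so that $T_M^*(d) = \max_{w_c \in V_c} T_M(w_c, d)$. I claim $T_M(w_c, \lambda d) = \lambda T_M(w_c, d)$. For the upper bound, let $u_c^d \in U_c$ attain $T_M(w_c, d)$, so $(Bu_c^d + Cw_c)T_M(w_c,d) = d$; multiplying by $\lambda$ gives $(Bu_c^d + Cw_c)\bigl(\lambda T_M(w_c,d)\bigr) = \lambda d$, and the optimality of $T_M(w_c, \lambda d)$ yields $T_M(w_c, \lambda d) \leq \lambda T_M(w_c, d)$. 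For the reverse, let $u_c^{\lambda d} \in U_c$ attain $T_M(w_c, \lambda d)$, so $(Bu_c^{\lambda d} + Cw_c)T_M(w_c, \lambda d) = \lambda d$; dividing by $\lambda > 0$ gives $(Bu_c^{\lambda d} + Cw_c)\bigl(T_M(w_c, \lambda d)/\lambda\bigr) = d$, whence $T_M(w_c, d) \leq T_M(w_c, \lambda d)/\lambda$.

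Taking the maximum over the finite set $V_c$ then commutes with multiplication by $\lambda \geq 0$, so
\begin{equation*}
    T_M^*(\lambda d) = \max_{w_c\, \in\, V_c} T_M(w_c, \lambda d) = \max_{w_c\, \in\, V_c} \lambda T_M(w_c, d) = \lambda T_M^*(d),
\end{equation*}
which finishes the proof. I do not anticipate a serious obstacle: the whole argument hinges on the fact that, once one fixes $w_c$, the feasibility equation $(Bu_c + Cw_c)T = d$ is itself homogeneous in the pair $(T, d)$, so scaling the target distance scales the optimal time by the same factor, independently of which extreme disturbance $w_c \in V_c$ realizes the outer maximum. The only mild subtlety is the $d=0$ case, which is handled by the convention of Remark~\ref{rmk:d=0}.
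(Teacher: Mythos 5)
Your proof is correct, and it is genuinely different from — and markedly simpler than — the paper's own argument. The paper opens its proof by asserting that ``because of the minimax structure of \eqref{eq:optimization problem}, scaling like in the proof of Proposition~\ref{prop:proportional nominal reach time} is not sufficient,'' and then proceeds with a considerably heavier geometric argument: it fixes $w_c$, studies the optimal point $y^*(x,d) = d/T_M(w_c,d) - x$ on $\partial Y$, observes that a ray in direction $d$ hits the boundary of the polytope $Y$ at most twice, restricts to $d$ not collinear with any face of $Y$ so that $\alpha(\lambda) := \lambda/T_M(w_c,\lambda d) - 1/T_M(w_c,d)$ takes values in a two-element set, then invokes the continuity of $T_M$ (Lemma~\ref{lemma: T continuous}) twice — once to force $\alpha \equiv 0$, and once to extend the conclusion to all $d$. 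Your route sidesteps all of this. The key observation the paper's caveat seems to miss is that what fails is a \emph{one-shot} scaling of the optimal pair $(u_c^*, w_c^*)$: doing so only controls one branch of the minimax. But by first proving the homogeneity identity $T_M(w_c, \lambda d) = \lambda T_M(w_c, d)$ for \emph{each fixed} $w_c$ — which is a genuine scaling argument, exactly parallel to Proposition~\ref{prop:proportional nominal reach time}, applied to the inner minimization — and then commuting the outer maximum over the finite set $V_c$ with multiplication by $\lambda \geq 0$, the minimax structure is handled cleanly. This removes the dependence on Lemma~\ref{lemma: T continuous} and on the polytope geometry of $Y$ entirely. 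The one thing your argument implicitly relies on, as does the paper's, is that the inner infimum is attained (Proposition~\ref{prop:u cst}), which holds under the standing resilience assumption; if one wanted to avoid even that, the same two inequalities go through with infima in place of minima by passing to near-optimal $u_c$.
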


\begin{proof}
    Because of the minimax structure of \eqref{eq:optimization problem}, scaling like in the proof of Proposition~\ref{prop:proportional nominal reach time} is not sufficient to prove the homogeneity of $T_M^*(d)$.
    According to Remark~\ref{rmk:d=0}, for $d = 0$ we have $T_M^*(d) = 0$, so $T_M^*$ is absolutely homogeneous at $d = 0$.
    
    Let $d \in \mathbb{R}_*^n$, $w_c \in W_c$, $x = Cw_c$ and $y^*(x, d) := \arg \underset{y\, \in\, Y}{\min} \big\{ T \geq 0 : (y+x)T = d \big\}$.
    Note that $Bu_d^*(w_c) + Cw_c = y^*(x,d) + x$, with $u_d^*$ defined in Proposition~\ref{prop:u cst}. Then, with $T_M$ introduced in \eqref{eq:T_M}, we have $\big(Bu_d^*(w_c) + Cw_c\big) T_M(w_c, d) = d$, i.e., $y^*(x,d) = \frac{1}{T_M(w_c, d)}d - x$. For $\lambda > 0$, we define $\alpha(\lambda) := \frac{\lambda}{T_M(w_c, \lambda d)} - \frac{1}{T_M(w_c, d)}$, such that $y^*(x, \lambda d) - y^*(x,d) = \alpha(\lambda) d$.
    
    The polytope $Y$ of $\mathbb{R}^n$ has a finite number of faces, so we can choose $d \in \mathbb{R}_*^n$ not collinear with any face of $Y$. Since $Y$ is convex, the ray $\big\{ y^*(x,d) + \alpha d : \alpha \in \mathbb{R} \big\}$ intersects with $\partial Y$ at most twice. 
    Since $y^*(x,d) \in \partial Y$, one intersection happens at $\alpha = 0$. If there exists another intersection, it occurs for some $\alpha_0 \neq 0$.
    Since $y^*(x, \lambda d) \in \partial Y$, we have $y^*(x,d) + \alpha(\lambda) d \in \partial Y$. Then, $\alpha(\lambda) \in \{0, \alpha_0\}$ for all $\lambda > 0$.
    
    According to Lemma~\ref{lemma: T continuous}, $T_M$ is continuous in $d$, so $\alpha$ is continuous in $\lambda$ but its codomain is finite. Therefore, $\alpha$ is constant and we know that $\alpha(1) = 0$. So $\alpha$ is null for all $\lambda > 0$, leading to $T_M(w_c, \lambda d) = \lambda T_M(w_c, d)$ for $\lambda > 0$ and $d$ not collinear with any face of $\partial Y$. Since the dimension of the faces of $\partial Y$ is at most $n-1$ in $\mathbb{R}^n$ and $T_M$ is continuous in $d$, the homogeneity of $T_M$ holds on the whole of $\mathbb{R}^n$.
    Note that $T_M^*(d) = \underset{w_c\, \in\, W_c}{\max} T_M(w_c, d)$. Thus, $\lambda T_M^*(d) = T_M^*(\lambda d)$ for $\lambda > 0$ and $d \in \mathbb{R}^n$.
\end{proof}

We can now combine the initial and malfunctioning dynamics in order to evaluate the quantitative resilience of the system.

\section{Quantitative Resilience}\label{section:r_q}

Quantitative resilience is defined in \eqref{eq:r_q} as the infimum of $T_N^*(d) / T_M^*(d)$ over $d \in \mathbb{R}^n$. Using Proposition~\ref{prop:proportional nominal reach time} and Proposition~\ref{prop:proportional malfunctioning reach time} we reduce this constraint to $d \in \mathbb{S}$.
Focusing on the loss of control over a single actuator we will simplify tremendously the computation of $r_q$. 
In this setting, we can determine the optimal $d \in \mathbb{S}$ by noting that the effects of the undesirable inputs are the strongest along the direction described by the malfunctioning actuator. This intuition is formalized below.

\begin{theorem}\label{thm:direction maximizing t}
    For a resilient system following \eqref{eq:splitted order 1} with $C$ a single column matrix, the direction $d$ maximizing the ratio of reach times $t(d)$ is collinear with the direction $C$, i.e., $\underset{d\, \in\, \mathbb{S}}{\max}\ t(d) = \max\big\{t(C), t(-C) \big\}$.
\end{theorem}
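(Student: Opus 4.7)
The plan is to reformulate $t(d)$ as a ratio of radial functions of polytopes, reduce to a pair of such ratios using the fact that $|V_c| = 2$ when $p = 1$, and then prove a geometric lemma characterizing how the Minkowski sum with a segment affects a polytope's radial function.

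By Propositions~\ref{prop:unperturbed time} and \ref{prop:w on a vertex}, the reach times can be written as $T_N^*(d) = 1/r_{Y+X}(d)$ and $T_M^*(d) = 1/\min_{w_c \in V_c} r_{Y+Cw_c}(d)$ for $d \in \mathbb{R}_*^n$, where $Y := B U_c$, $X := C W_c$, and $r_K(d) := \max\{\lambda \geq 0 : \lambda d \in K\}$ denotes the radial function; these are positive by Proposition~\ref{prop:resilience and polytopes}. Since $p = 1$ implies $V_c = \{w^{min}, w^{max}\}$, I can rewrite
\[ t(d) = \frac{r_{Y+X}(d)}{\min_{w_c \in V_c} r_{Y+Cw_c}(d)} = \max\!\left\{\frac{r_{Y+X}(d)}{r_{Y+Cw^{min}}(d)},\ \frac{r_{Y+X}(d)}{r_{Y+Cw^{max}}(d)}\right\}. \]
Observing that $Y + X = (Y + Cw^{min}) + [0, \Delta w]C = (Y + Cw^{max}) + [-\Delta w, 0]C$ with $\Delta w := w^{max} - w^{min} > 0$, each of these two ratios has the form $r_{K+L}(d)/r_K(d)$, where $K$ is a convex polytope with $0$ in its interior and $L$ is a segment from $0$ along $+C$ or $-C$, respectively.

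The crux of the proof is the following geometric lemma: for a convex polytope $K \subset \mathbb{R}^n$ with $0 \in K^\circ$ and $L = \{\alpha c : \alpha \in [0,1]\}$ with $c \neq 0$,
\[ \max_{d \in \mathbb{S}} \frac{r_{K+L}(d)}{r_K(d)} = \frac{r_{K+L}(c)}{r_K(c)} = 1 + \frac{1}{r_K(c)}. \]
To prove it, I would use the half-space description $K = \bigcap_i \{y : \langle y, n_i \rangle \leq b_i\}$ with $b_i > 0$; then $\lambda d \in K+L$ iff there exists $\alpha \in [0,1]$ with $\lambda \langle d, n_i \rangle \leq b_i + \alpha \langle c, n_i \rangle$ for every $i$. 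Fixing a face $i^*$ binding $r_K(d) = b_{i^*}/\langle d, n_{i^*}\rangle$ and retaining only that single constraint bounds $r_{K+L}(d) \leq (b_{i^*} + \langle c, n_{i^*}\rangle^+)/\langle d, n_{i^*}\rangle$, where the positive part arises from optimizing over $\alpha \in [0,1]$. Dividing by $r_K(d)$ gives $r_{K+L}(d)/r_K(d) \leq 1 + \langle c, n_{i^*}\rangle^+/b_{i^*}$. Because $r_K(c)\,c \in K$, the support inequality $b_i \geq r_K(c) \langle c, n_i \rangle$ holds for every face $i$, so $\langle c, n_{i^*}\rangle^+/b_{i^*} \leq 1/r_K(c)$. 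The value $r_{K+L}(c) = r_K(c) + 1$ follows by direct computation, making the bound tight at $d = c$.

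Applying the lemma with $c = \pm \Delta w\, C$ to the two ratios yields $\max_{d \in \mathbb{S}} r_{Y+X}(d)/r_{Y+Cw^{min}}(d) = t(C)$ and $\max_{d \in \mathbb{S}} r_{Y+X}(d)/r_{Y+Cw^{max}}(d) = t(-C)$, and taking the outer maximum delivers $\max_{d \in \mathbb{S}} t(d) = \max\{t(C), t(-C)\}$. The main obstacle is the geometric lemma, specifically that the binding face of $K$ in direction $d$ need not coincide with the binding face of $K+L$; my plan handles this by relaxing the multi-constraint optimization to a single carefully chosen constraint, trading tightness in general for an upper bound that becomes tight at $d = c$ via the support inequality relating $b_i$ and $r_K(c)$.
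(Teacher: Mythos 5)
Your proposal is correct, but it takes a genuinely different and more self-contained route than the paper. The paper reduces $t(d)$ to a ratio of a maximax over $X,Y$ against a minimax over $X,Y$ and then invokes the Maximax Minimax Quotient Theorem of \citep{Maxmax_Minimax_Quotient_thm} as a black box, which immediately delivers $\max_{d\in\mathbb{S}} t(d)=\max\{t(C),t(-C)\}$ once $\dim X = 1$. You instead exploit $|V_c|=2$ to split $t(d)$ into $\max\bigl\{ r_{Y+X}(d)/r_{Y+Cw^{min}}(d),\ r_{Y+X}(d)/r_{Y+Cw^{max}}(d)\bigr\}$, recognize each ratio as $r_{K+L}(d)/r_K(d)$ for a full-dimensional polytope $K$ containing the origin and a segment $L$ emanating from the origin, and prove from scratch that the ratio attains its maximum (equal to $1+1/r_K(c)$) in the direction of the segment. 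The argument via the half-space description — bounding by the single constraint that binds $r_K(d)$, optimizing over $\alpha\in[0,1]$ to get the positive part, and invoking the support inequality $b_i\ge r_K(c)\langle c,n_i\rangle$ — is sound, and tightness at $d=c$ follows from $r_{K+L}(c)=r_K(c)+1$. The identification $\min_{w_c\in V_c} r_{Y+Cw_c}(\pm C)=r_{Y+Cw^{\mp}}(\pm C)$, needed to match the two lemma outputs to $t(C)$ and $t(-C)$, holds because $r_{Y+Cw_c}(\pm C)=r_Y(\pm C)\pm w_c$ is monotone in $w_c$. What your approach buys is a self-contained, elementary proof with an explicit closed-form optimal value $1+1/r_K(c)$, at the cost of being specific to the rank-one ($p=1$, $\dim X=1$) case; the paper's reliance on the general Maximax Minimax Quotient Theorem is shorter here and signals that the machinery is also available for more general $X$, though that generality is not exercised in this theorem.
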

\begin{proof}
    Let $d \in \mathbb{S}$. We use the same process that yielded \eqref{eq:T_M^*(d) eq with x,y} in Proposition~\ref{prop:w on a vertex} but we start from \eqref{eq:T_N^* simplified} where we split $\bar{B}$ into $B$ and $C$:
    \begin{align}\label{eq:T_N^* with x, y}
        \frac{1}{T_N^*(d)} &= \underset{\bar{u}\, \in\, \bar{U}_c}{\max} \big\{ \lambda : \bar{B}\bar{u} = \lambda d\big\} = \underset{u_c\, \in\, U_c,\, w_c\, \in\, W_c}{\max} \big\{ \lambda : Bu_c + Cw_c = \lambda d \big\} \\
        &= \underset{x\, \in\, X,\, y\, \in\, Y}{\max} \big\{ \|y+x\| : y + x \in \mathbb{R}^+ d \big\}. \nonumber
    \end{align}
    We can now gather \eqref{eq:T_M^*(d) eq with x,y} with $d \in \mathbb{S}$ and \eqref{eq:T_N^* with x, y} into
    \begin{equation*}
        t(d) = \frac{T_M^*(d)}{T_N^*(d)} = \frac{\underset{x\, \in\, X,\, y\, \in\, Y}{\max} \big\{ \|x + y\| : x + y \in \mathbb{R}^+ d\big\}}{\underset{x\, \in\, X}{\min} \big\{ \underset{y\, \in\, Y}{\max} \big\{ \|x + y\| : x + y \in \mathbb{R}^+ d\big\} \big\} }.
    \end{equation*}
    Since the system is resilient, Proposition~\ref{prop:resilience and polytopes} states that $X$ and $Y$ are polytopes in $\mathbb{R}^n$, $-X \subset Y^\circ$ and $\dim Y = n$. Because $C$ is a single column, $\dim X = 1$. Then, the Maximax Minimax Quotient Theorem of \citep{Maxmax_Minimax_Quotient_thm} states that $\underset{d\, \in\, \mathbb{S}}{\max}\ t(d) = \max\big\{ t(C), t(-C) \big\}$.     
\end{proof}

Since the sets $U_c$ and $W_c$ are not symmetric, $t$ is not an even function.
Thus, to calculate the quantitative resilience $r_q$ we need to evaluate $T_N^*(\pm C)$ and $T_M^*(\pm C)$, i.e., solve four optimization problems. 
The computation load can be halved with the following result.

\begin{theorem}\label{thm:computation of r_q}
    For a resilient system losing control over a single nonzero column $C$, $r_q = \min \big\{r(C), r(-C)\big\}$, where
    \begin{equation*}
        r(C) := \frac{w^{min} + \lambda^+}{w^{max} + \lambda^+},\quad r(-C) := \frac{w^{max} - \lambda^-}{w^{min} - \lambda^-},\quad \text{with} \ \ \lambda^\pm := \underset{\upsilon \, \in\, U_c}{\max} \big\{ \lambda : B \upsilon = \pm \lambda C \big\}.
    \end{equation*}
\end{theorem}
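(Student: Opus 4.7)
The plan is to apply Theorem~\ref{thm:direction maximizing t} to reduce $r_q$ to the minimum of $T_N^*(d)/T_M^*(d)$ at the two directions $d = \pm C$, and then to exploit the fact that this problem collapses to one dimension along the line $\mathbb{R} C$. By Theorem~\ref{thm:direction maximizing t} together with the homogeneity in Propositions~\ref{prop:proportional nominal reach time} and \ref{prop:proportional malfunctioning reach time}, we have $r_q = \min\{T_N^*(C)/T_M^*(C),\, T_N^*(-C)/T_M^*(-C)\}$, so it suffices to compute these four reach times. The key reduction is that for $d$ collinear with $C$, the equation $(Bu_c + Cw_c)T = d$ forces $Bu_c \in \mathbb{R} C$, because $Cw_c \in \mathbb{R} C$. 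Since $Y = BU_c$ is a convex polytope in $\mathbb{R}^n$, the set $\{\mu \in \mathbb{R} : \mu C \in Y\}$ is a closed interval, which by the definition of $\lambda^\pm$ is precisely $[-\lambda^-, \lambda^+]$.

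Next, I would compute the four reach times explicitly. For $T_N^*(C)$, Proposition~\ref{prop:unperturbed time} reduces the optimization to constant inputs, and the one-dimensional reduction turns it into $1/\max\{\mu + w_c : \mu \in [-\lambda^-, \lambda^+],\ w_c \in [w^{min}, w^{max}]\} = 1/(\lambda^+ + w^{max})$. For $T_M^*(C)$, Propositions~\ref{prop:w cst} and \ref{prop:w on a vertex} restrict the outer maximization to $w_c \in V_c = \{w^{min}, w^{max}\}$, while Proposition~\ref{prop:u cst} lets the inner minimum be computed as $1/(\lambda^+ + w_c)$ for each such $w_c$; the outer maximum over $\{w^{min}, w^{max}\}$ is attained at $w_c = w^{min}$, yielding $T_M^*(C) = 1/(\lambda^+ + w^{min})$. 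Forming the ratio gives exactly $r(C)$. The case $d = -C$ is symmetric: sign-flipping gives $T_N^*(-C) = 1/(\lambda^- - w^{min})$ and $T_M^*(-C) = 1/(\lambda^- - w^{max})$, whose quotient is $r(-C)$.

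The main obstacle, and the only place where resilience is really used, is verifying that every denominator is strictly positive, so that all reach times are finite and the two ratios lie in $(0,1]$. Here I would invoke Proposition~\ref{prop:resilience and polytopes}, which gives $-X \subset Y^\circ$; since $Y^\circ$ is open in $\mathbb{R}^n$, one can perturb $-w_c C$ along the direction $C$ and remain in $Y$, which yields the strict inclusions $-w^{max},\, -w^{min} \in (-\lambda^-, \lambda^+)$, i.e., $\lambda^+ > -w^{min}$ and $\lambda^- > w^{max}$. Combined with $w^{min} \leq w^{max}$, this makes $\lambda^+ + w^{min},\ \lambda^+ + w^{max},\ \lambda^- - w^{min},\ \lambda^- - w^{max}$ all strictly positive, concluding the proof.
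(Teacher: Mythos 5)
Your proposal is correct and takes essentially the same route as the paper: reduce to $d = \pm C$ via Theorem~\ref{thm:direction maximizing t}, exploit the one-dimensionality of $\mathbb{R}C$ to write $Bu_c = \mu C$ with $\mu \in [-\lambda^-,\lambda^+]$, and read off the four reach times. The only addition is that you explicitly verify strict positivity of the denominators from $-X \subset Y^\circ$, a point the paper's proof of this theorem leaves implicit (it is only addressed separately in Corollary~\ref{cor:resilience T_N(C)/T_M(C)}).
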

\begin{proof}
    Let $\bar{u} \in \bar{U}_c$, $u \in U_c$ and $w \in W_c$ be the arguments of the optimization problems \eqref{eq:T_N^* simplified} and \eqref{eq:optimization problem} for $d = C \neq 0$. 
    We split $\bar{u} = [u_B\ u_C]^\top$ such that $u_B \in U_c$ and $u_C \in W_c$. Then,
    \begin{equation}\label{eq:T_N^*(C) and T_M^*(C)}
        \bar{B} \bar{u}\, T_N^*(C) = Bu_B\, T_N^*(C) + Cu_C\, T_N^*(C) = C\ \ \text{and}\ \ Bu\, T_M^*(C) + Cw\, T_M^*(C) = C.
    \end{equation}
    We consider the loss of a single actuator, thus $W_c = [w^{min}, w^{max}] \subset \mathbb{R}$ which makes $Cw T_M^*(C)$ and $Cu_C T_N^*(C)$ collinear with $C$. From Proposition~\ref{prop:w on a vertex}, we know that $w \in \partial W$. Since $w$ maximizes $T_M^*(C)$ in \eqref{eq:T_N^*(C) and T_M^*(C)}, we obviously have $w = w^{min}$. On the contrary, $u_C$ is chosen to minimize $T_N^*(C)$ in \eqref{eq:T_N^*(C) and T_M^*(C)}, so $u_C = w^{max}$.
    
    According to \eqref{eq:T_N^*(C) and T_M^*(C)}, $Bu_B$ and $Bu$ are then also collinear with $C$. The control inputs $u_B$ and $u$ are chosen to minimize respectively $T_N^*(C)$ and $T_M^*(C)$ in \eqref{eq:T_N^*(C) and T_M^*(C)}. Therefore, they are both solutions of the same optimization problem:
    \begin{equation*}\label{}
        \tau^+ = \underset{\upsilon \, \in\, U_c}{\min} \big\{ \tau : B \upsilon \tau = C \big\} \qquad \text{with} \qquad u = u_B = \arg \underset{\upsilon \, \in\, U_c}{\min} \big\{ \tau : B \upsilon \tau = C \big\}.
    \end{equation*}
    We transform this problem into a linear one using the transformation $\lambda = \frac{1}{\tau}$:
    \begin{equation*}
        \lambda^+ = \underset{\upsilon \, \in\, U_c}{\max} \big\{ \lambda : B \upsilon = \lambda C \big\} \qquad \text{with} \qquad u = u_B = \arg \underset{\upsilon \, \in\, U_c}{\max} \big\{ \lambda : B \upsilon = \lambda C \big\}.
    \end{equation*}
    By combining all the results, \eqref{eq:T_N^*(C) and T_M^*(C)} simplifies into:
    \begin{equation*}
        C(\lambda^+ + w^{max}) T_N^*(C) = C\ \ \text{and}\ \ C(\lambda^+ + w^{min}\big) T_M^*(C) = C.\ \ \text{So,}\ \frac{T_N^*(C)}{T_M^*(C)} = \frac{\lambda^+ + w^{min}}{\lambda^+ + w^{max}}.
    \end{equation*}
    Following the same reasoning for $d = -C$, we obtain
    \begin{equation*}
        C(-\lambda^- + w^{min}) T_N^*(C) = -C \quad \text{and} \quad C(-\lambda^- + w^{max}\big) T_M^*(C) = -C,
    \end{equation*}
    with $\lambda^- = \underset{\upsilon\, \in\, U_c}{\max} \big\{ \lambda : B \upsilon = - \lambda C\big\}$. Then, $\frac{1}{t(-C)} = \frac{T_N^*(-C)}{T_M^*(-C)} = \frac{w^{max} - \lambda^-}{w^{min} - \lambda^-}$.
    Following Theorem~\ref{thm:direction maximizing t}, $r_q = \frac{1}{\max\{ t(d)\, :\, d\, \in\, \mathbb{S}\}} = \min \left\{\frac{1}{t(C)}, \frac{1}{t(-C)} \right\} = \min \big\{ r(C), r(-C)\big\} = r_{min}$.
\end{proof}

We introduced quantitative resilience as the solution of four nonlinear nested optimization problems and with Theorem~\ref{thm:computation of r_q} we reduced $r_q$ to the solution of two linear optimization problem.
We can then quickly calculate the maximal delay caused by the loss of control of a given actuator.

\section{Resilience and Quantitative Resilience}\label{section:resilience conditions}

So far, all our results need the system to be resilient. However, based on \citep{journal_paper} verifying the resilience of a system is not an easy task. Besides, as explained in Section~\ref{section:preliminaries}, the resilience criteria from \citep{journal_paper} do not apply here because the set of admissible controls are different. Proposition~\ref{prop:resilient full rank} is only a necessary condition for resilience, while we are looking for an equivalence condition.

\begin{proposition}\label{prop:resilience = full rank + finite T_M^*(C)}
    A system following \eqref{eq:system order 1} is resilient to the loss of control over a column $C$ if and only if it is controllable and $\max \big\{ T_M^*(C), T_M^*(-C)\big\}$ is finite.
\end{proposition}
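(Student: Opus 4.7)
The forward direction is immediate. Resilience implies $T_M^*(d) < \infty$ for every $d \in \mathbb{R}^n$ by definition, so in particular for $\pm C$, and Proposition~\ref{prop:resilient full rank} gives $\rank(B) = n$; hence $\rank(\bar{B}) = n$ and $\dot{x} = \bar{B}\bar{u}$ is controllable. For the converse, my plan is to show that controllability together with $T_M^*(\pm C) < \infty$ forces $-X \subseteq Y^\circ$ (with $X := CW_c$, $Y := BU_c$), and then run the argument of Proposition~\ref{prop:resilience and polytopes} in reverse: for every $w_c \in W_c$ the point $-Cw_c \in Y^\circ$ yields an open neighborhood of the origin inside $BU_c + Cw_c$, so any $d \in \mathbb{R}^n$ admits constants $u_c \in U_c$ and $T \geq 0$ with $(Bu_c + Cw_c)T = d$, and Propositions~\ref{prop:u cst}--\ref{prop:w cst} lift this to resilience for time-varying inputs.

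The two hypotheses translate cleanly into geometric statements. Controllability of $\dot{x} = \bar{B}\bar{u}$, by Proposition~\ref{prop:unperturbed time}, amounts to the positive hull $\{\lambda z : \lambda \geq 0,\ z \in \bar{B}\bar{U}_c\}$ equaling $\mathbb{R}^n$, which for the convex compact set $Y + X = \bar{B}\bar{U}_c$ is equivalent to $0 \in (Y + X)^\circ$. Finiteness of $T_M^*(C)$, through continuity of $T_M(\cdot, C)$ (Lemma~\ref{lemma: T continuous}) and compactness of $W_c$, reduces to $T_M(w, C) < \infty$ for every $w \in W_c$; unpacking the definition and using $1/T > 0$ yields $\lambda^+ C \in Y$ for some $\lambda^+ > -w^{min}$, where $\lambda^+ := \max\{\lambda : \lambda C \in Y\}$. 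Symmetrically, $T_M^*(-C) < \infty$ produces $-\lambda^- C \in Y$ with $\lambda^- > w^{max}$. Both strict inequalities will be essential.

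The core is a supporting-hyperplane argument. Take any $p = -tC \in -X$, $t \in [w^{min}, w^{max}]$. Since $\lambda^+ C$ and $-\lambda^- C$ both lie in $Y$ and $-\lambda^- < -t < \lambda^+$, convexity of $Y$ places $p$ in $Y$. Suppose for contradiction that $p \in \partial Y$ and let $n$ be a unit outer normal to a supporting hyperplane at $p$, so $n \cdot (y - p) \leq 0$ for every $y \in Y$. Testing $y = \lambda^+ C$ produces the factor $\lambda^+ + t > 0$, forcing $n \cdot C \leq 0$; testing $y = -\lambda^- C$ produces the factor $-\lambda^- + t < 0$, forcing $n \cdot C \geq 0$. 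Hence $n \cdot C = 0$, which makes every $x \in X$ satisfy $n \cdot x = 0$ and $n \cdot p = -t(n \cdot C) = 0$; thus $Y + X \subseteq \{z : n \cdot z \leq 0\}$, contradicting $0 \in (Y + X)^\circ$. Therefore $p \in Y^\circ$ and $-X \subseteq Y^\circ$. The main obstacle is exactly this step: the strict inequalities $\lambda^+ > -w^{min}$ and $\lambda^- > w^{max}$, coming from $T > 0$ in the definition of $T_M^*$, are precisely what make the two normal evaluations contradict one another unless $n \cdot C = 0$, and only once $n \cdot C = 0$ does the $X$-summand become invisible so that controllability of the Minkowski sum $Y + X$ can deliver the contradiction.
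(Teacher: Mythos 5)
Your proof is correct and takes a genuinely different route from the paper's. The paper's converse argument is constructive: it uses finiteness of $T_M(w, \pm C)$ to build, for each $w \in W_c$, a control $u$ with $Bu + Cw = 0$ as a convex combination of the two optimal responses $u_\pm^w$, then introduces an $\varepsilon$-slack so that the same can be done for $w'$ slightly outside $W_c$, and finally writes the target-reaching control as a convex combination $\gamma u_d + (1-\gamma) u'$ to obtain $Bu + Cw = \gamma \lambda d$ with $\gamma\lambda > 0$. You instead encode the two hypotheses geometrically --- controllability as $0 \in (Y+X)^\circ$ and the finite reach times as $\lambda^+ C, -\lambda^- C \in Y$ with the strict bounds $\lambda^+ > -w^{min}$, $\lambda^- > w^{max}$ --- and a supporting-hyperplane argument then forces $-X \subseteq Y^\circ$, which is the polytope characterization of Proposition~\ref{prop:resilience and polytopes} run in reverse. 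This is arguably cleaner: the strict inequalities on $\lambda^\pm$ play exactly the role of the paper's ad hoc $\varepsilon$-enlargement of $W_c$, but enter through a single hyperplane computation. A few details to tidy: the appeal to Lemma~\ref{lemma: T continuous} is both circular (the lemma is stated for resilient systems) and unnecessary, since $T_M^*(C) < \infty$ already gives $T_M(w, C) \leq T_M^*(C)$ for every $w$, and you only use $w = w^{min}$ and $w = w^{max}$; the case $C = 0$ is not covered by your $\lambda^\pm$ construction but is immediate in your framework, since then $X = \{0\}$ and $0 \in (Y+X)^\circ = Y^\circ$ gives $-X \subseteq Y^\circ$ directly; and the final appeal to Propositions~\ref{prop:u cst}--\ref{prop:w cst} is slightly misdirected, since those run from resilience to constant optimal inputs rather than the converse --- what you actually need (and what the paper's own proof also leaves implicit) is that responding pointwise to $w(t)$ with the constant minimizer keeps $\dot x(t)$ collinear with $d$ at a speed bounded below over the compact $W_c$, which $-X \subseteq Y^\circ$ guarantees.
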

\begin{proof}
    First, assume that the system \eqref{eq:system order 1} is resilient. Then, according to Proposition~\ref{prop:resilient full rank} for $k = 1$, the system $\dot x(t) = Bu(t)$ is controllable. Since $Im(B) \subset Im(\bar{B})$, the system \eqref{eq:system order 1} is controllable a fortiori. If $C \neq 0$, then following Proposition~\ref{prop:w cst}, $T_M^*(C)$ and $T_M^*(-C)$ are finite. If $C = 0$, then $T_M^*(C)$ is also finite according to Remark~\ref{rmk:d=0}.
    
    \vspace{2mm}
    Now, assume that the system \eqref{eq:system order 1} is controllable and $\max \big\{T_M^*(C), T_M^*(-C)\big\}$ is finite. Let $w \in W_c$ and $d \in \mathbb{R}_*^n$.
    By controllability, there exists $\bar{u} \in \bar{U}_c$ and $\lambda > 0$ such that $\bar{B} \bar{u} = \lambda d$. We split $\bar{B}$ into $[B\ C]$ and same for $\bar{u}$ into $(u_d, w_d)$. Then, $u_d \in U_c$ and $\bar{B} \bar{u} = Bu_d + Cw_d = \lambda d$.
    
    In the case $C = 0$, this yields $Bu_d = \lambda d = Bu_d + Cw$, so the system is resilient.
    
    For $C \neq 0$, we will first show that for any $w \in W_c$ we can find $u \in U_c$ such that $Bu + Cw = 0$.
    Because $T_M^*(C)$ and $T_M^*(-C)$ are finite, $T_M(w, \pm C)$ is positive and finite for all $w \in W_c = [w^{min}, w^{max}]$, with $T_M(\cdot,\cdot)$ defined in \eqref{eq:T_M}. Take $w \in W_c$. Then, there exists $u_+^w \in U_c$ and $u_-^w \in U_c$ such that $\big( B u_+^w + Cw \big) T_M(w,C) = C$ and $\big( B u_-^w + Cw \big) T_M(w,-C) = -C$.
    Define $\alpha := \frac{T_M(w,C)}{T_M(w,C) + T_M(w,-C)} \in (0,1)$ and $u := \alpha u_+^w + (1-\alpha) u_-^w \in U_c$ because $U_c$ is convex. Then, notice that
    \begin{align*}
        Bu + Cw &= \alpha\big( Bu_+^w + Cw\big) + (1-\alpha) \big( Bu_-^w + Cw\big) \\
        &\hspace{-10mm}= \frac{T_M(w,C)}{T_M(w,C) + T_M(w,-C)} \frac{C}{T_M(w,C)} + \frac{T_M(w,-C)}{T_M(w,C) + T_M(w,-C)} \frac{-C}{T_M(w,-C)} = 0.
    \end{align*}
    We want to make a convex combination of $u$ and $u_d$ to build the desired control, but without an extra step that will not work if $w \in \partial W_c$. We first need to show that even if $w$ is a little bit outside of $W_c$ we can still counteract it. Let $\varepsilon := \min \left( \frac{1}{2T_M(w^{min}, C)}, \frac{1}{2T_M(w^{max},-C)} \right) > 0$. Now take $w' \in (w^{max}, w^{max} + \varepsilon]$. There exists $u_- \in U_c$ and $u_+ \in U_c$ such that $\big( B u_+ + Cw^{max} \big) T_M(w^{max},C) = C$ and $\big( B u_- + Cw^{max} \big) T_M(w^{max},-C) = -C$.
    Then, we can define $T^+ > 0$ such that
    \begin{align*}
        B u_+ \hspace{-1mm} + Cw' \hspace{-1mm} = B u_+ \hspace{-1mm} + Cw^{max} \hspace{-1mm} + C(w' - w^{max}) = C \hspace{-1mm} \left( \hspace{-1mm} \frac{1}{T_M(w^{max},C)} + w' - w^{max} \hspace{-1mm} \right) \hspace{-1mm} = \frac{C}{T^+}.
    \end{align*}
    Since $w' - w^{max} \leq 1/2T_M(w^{max},-C)$, we can similarly define $T^- > 0$ such that
    \begin{align*}
        B u_- + Cw' = -C \left( \frac{1}{T_M(w^{max},-C)} - (w' - w^{max}) \right) = -\frac{C}{T^-}.
    \end{align*}
    Similar to above we take $\alpha = \frac{T^+}{T^+ + T^-} \in (0,1)$ making $u' = \alpha u_+ + (1-\alpha) u_- \in U_c$ by convexity and $Bu' + Cw' = 0$. An analogous approach holds for $w' \in [w^{min} - \varepsilon, w^{min})$.
    
    Since $W_c$ is convex, $w \in W_c$ and $w_d \in W_c$, we can take $w' \in [w^{min} - \varepsilon, w^{max} + \varepsilon]$ such that there exists $\gamma \in (0,1)$ for which $w = \gamma w_d + (1-\gamma)w'$.
    We build $u' \in U_c$ as above to make $Bu' + Cw' = 0$. By convexity of $U_c$, $u := \gamma u_d + (1-\gamma)u' \in U_c$. Then,
    \begin{equation*}
        Bu + Cw = \gamma \big(Bu_d + Cw_d\big) + (1-\gamma) \big(Bu' + Cw'\big) = \gamma \lambda d + 0.
    \end{equation*}
    Since $\gamma > 0$, we have $\gamma \lambda > 0$ making the system resilient to the loss of column $C$.  
\end{proof}

The intuition behind Proposition~\ref{prop:resilience = full rank + finite T_M^*(C)} is that a resilient system must fulfill two conditions: being able to reach any state, this is controllability, and doing so in finite time despite the worst undesirable inputs, which corresponds to $T_M^*(\pm C)$ being finite.

Our goal is to relate resilience and quantitative resilience through the value of $r_{min}$. To breach the gap between this desired result and Proposition~\ref{prop:resilience = full rank + finite T_M^*(C)}, we evaluate the requirements on the ratio $\frac{T_N^*(\pm C)}{T_M^*(\pm C)}$ for a system to be resilient.

\begin{corollary}\label{cor:resilience T_N(C)/T_M(C)}
    A system following \eqref{eq:system order 1} is resilient to the loss of control over a column $C$ if and only if it is controllable, $\frac{T_N^*(C)}{T_M^*(C)} \in (0, 1]$ and $\frac{T_N^*(-C)}{T_M^*(-C)} \in (0, 1]$.
\end{corollary}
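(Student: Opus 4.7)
The plan is to derive this corollary as a direct consequence of Proposition~\ref{prop:resilience = full rank + finite T_M^*(C)}, which already equates resilience with the conjunction of controllability and finiteness of $\max\{T_M^*(C), T_M^*(-C)\}$. Once controllability is on the table, it will suffice to argue that $T_M^*(\pm C)$ is finite if and only if each ratio $T_N^*(\pm C)/T_M^*(\pm C)$ lies in $(0,1]$.

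For the forward implication, I would start by assuming resilience, then apply Proposition~\ref{prop:resilience = full rank + finite T_M^*(C)} to extract controllability together with finiteness of $T_M^*(\pm C)$; controllability in turn makes $T_N^*(\pm C)$ finite. I would then justify the generic inequality $T_N^*(d) \le T_M^*(d)$ for every $d \in \mathbb{R}^n$: any pair $(u,w) \in U \times W$ concatenates into an admissible control $\bar u \in \bar U$ with $\bar B \bar u = Bu + Cw$, so the time it takes to steer the state to $x_0 + d$ is bounded below by $T_N^*(d)$; taking the infimum over $u$ for each fixed $w$ and then the supremum over $w$ preserves this lower bound. When $C \neq 0$, the nonzero target $\pm C$ cannot be reached instantaneously, so $T_N^*(\pm C) > 0$ and each ratio lies in $(0,1]$. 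The degenerate case $C = 0$ is absorbed by Remark~\ref{rmk:d=0}, which sets both ratios equal to $1$.

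For the converse, I would suppose controllability together with both ratios in $(0,1]$. Controllability yields finiteness of $T_N^*(\pm C)$, while the convention stated just after \eqref{eq:t_k(d)} (together with $t(d) = T_M^*(d)/T_N^*(d)$) forces the ratio $T_N^*/T_M^*$ to collapse to $0$ whenever $T_M^*$ is infinite. Since $0 \notin (0,1]$, both $T_M^*(C)$ and $T_M^*(-C)$ must therefore be finite, and a second appeal to Proposition~\ref{prop:resilience = full rank + finite T_M^*(C)} delivers resilience.

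I do not expect a serious obstacle here: the corollary is essentially a translation of Proposition~\ref{prop:resilience = full rank + finite T_M^*(C)} into the language of reach-time ratios. The only slightly delicate step is the inequality $T_N^*(d) \le T_M^*(d)$, which relies on the observation that a malfunctioning trajectory produced by $(u,w)$ is merely a particular nominal trajectory in which the adversary happens to play the specific $w$.
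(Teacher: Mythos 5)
Your proposal is correct and follows essentially the same route as the paper's proof: both directions rest on Proposition~\ref{prop:resilience = full rank + finite T_M^*(C)} (or equivalently the component lemmas it packages), the inequality $T_N^*(d) \le T_M^*(d)$ established by viewing a $(u,w)$ pair as a particular nominal control, and the observation that an infinite $T_M^*$ would force the ratio to $0$. The only cosmetic difference is that the paper derives $T_N^* \le T_M^*$ using the constant-input formulations $\bar U_c$, $U_c$, $W_c$ rather than arguing directly at the level of the function spaces $U$, $W$.
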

\begin{proof}
    First, assume that the system \eqref{eq:system order 1} is resilient. Then, according to Proposition~\ref{prop:resilient full rank}, it is controllable. For $d \in \mathbb{R}_*^n$, following Propositions~\ref{prop:unperturbed time} and \ref{prop:w cst}, $T_N^*(d)$ and $T_M^*(d)$ are both finite and positive. Using the separation $\bar{B} = [B\ C]$ and $\bar{u} = [u_B\ u_C]$,
    \begin{align*}
        T_N^*(d) = \underset{\bar{u}\, \in\, \bar{U}_c}{\min} \big\{ T \geq 0 : \bar{B}\bar{u}T = d\big\} &= \underset{w\, \in\, W_c,\, u\, \in\, U_c}{\min} \big\{ T \geq 0 : (Bu + Cw)T = d \big\} \\
        & \hspace{-5mm}\leq \underset{w\, \in\, W_c}{\max} \big\{ \underset{u\, \in\, U_c}{\min} \big\{ T \geq 0 : (Bu + Cw)T = d \big\} \big\} = T_M^*(d).
    \end{align*}
    If $C \in \mathbb{R}_*^n$, we then have $0 < \frac{T_N^*(C)}{T_M^*(C)} \leq 1$ and $0 < \frac{T_N^*(-C)}{T_M^*(-C)} \leq 1$. If $C = 0$, following Remark~\ref{rmk:d=0} we have $\frac{T_N^*(0)}{T_M^*(0)} = 1$.
    
    \vspace{2mm}
    
    Now, assume that the system is controllable, $\frac{T_N^*(C)}{T_M^*(C)} \in (0, 1]$ and $\frac{T_N^*(-C)}{T_M^*(-C)} \in (0, 1]$. 
    If $C = 0$, then $T_M^*(C) = 0$ according to Remark~\ref{rmk:d=0}. We conclude with Proposition~\ref{prop:resilience = full rank + finite T_M^*(C)} that the system is resilient.
    
    Now for the case where $C \neq 0$, let $d \in \mathbb{R}_*^n$.
    Since the system following \eqref{eq:system order 1} is controllable, $T_N^*(\pm C)$ is finite. Since $C \neq 0$, we have $T_N^*(\pm C) > 0$. If $T_M^*(C) = +\infty$, then $\frac{T_N^*(C)}{T_M^*(C)} = 0$, which contradicts the assumption. By definition, $T_M^*(C) \geq 0$, thus $T_M^*(C)$ is finite. The same holds for $T_M^*(-C)$. Then, according to Proposition~\ref{prop:resilience = full rank + finite T_M^*(C)}, the system is resilient.  
\end{proof}

Theorem~\ref{thm:computation of r_q} allows us to compute $r_q$ for resilient systems with a linear optimization.
We now want to extend that result to non-resilient systems, by showing that $r_{min}$ also indicates whether the system is resilient.

\begin{corollary}\label{cor:resilience lambda}
    A system following \eqref{eq:system order 1} is resilient to the loss of control over a nonzero column $C$ if and only if it is controllable, and $r(C)$, $r(-C)$ from Theorem~\ref{thm:computation of r_q} are in $(0,1]$.
\end{corollary}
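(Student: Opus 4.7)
The plan is to piece together Theorem~\ref{thm:computation of r_q}, Corollary~\ref{cor:resilience T_N(C)/T_M(C)}, and Proposition~\ref{prop:resilience = full rank + finite T_M^*(C)}. The forward direction is routine: if the system is resilient, Corollary~\ref{cor:resilience T_N(C)/T_M(C)} supplies controllability together with $T_N^*(\pm C)/T_M^*(\pm C)\in(0,1]$, and the identities derived inside the proof of Theorem~\ref{thm:computation of r_q}, namely $T_N^*(C)/T_M^*(C)=(w^{min}+\lambda^+)/(w^{max}+\lambda^+)=r(C)$ and $T_N^*(-C)/T_M^*(-C)=(w^{max}-\lambda^-)/(w^{min}-\lambda^-)=r(-C)$, immediately turn these ratios into the desired bounds on $r(\pm C)$.

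For the converse, assume the system is controllable and $r(C),r(-C)\in(0,1]$. First I would perform a sign analysis on the scalar formulas. Since $w^{min}\leq w^{max}$, the ratio $(w^{min}+\lambda^+)/(w^{max}+\lambda^+)$ can lie in $(0,1]$ only when numerator and denominator are both strictly positive: the ``both negative'' alternative yields a ratio of at least $1$, with equality only when $w^{min}=w^{max}$, and in that degenerate case controllability along $C$ already forces $\lambda^++w^{min}>0$. An identical argument gives $\lambda^--w^{max}>0$. By Proposition~\ref{prop:resilience = full rank + finite T_M^*(C)}, it then remains to prove $T_M^*(\pm C)<\infty$.

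To bound $T_M^*(C)$ I would fix the constant controlled input $u(t)\equiv\upsilon^+$, where $\upsilon^+\in U_c$ realises $B\upsilon^+=\lambda^+ C$ (such a $\upsilon^+$ exists because controllability makes $\{\upsilon\in U_c:B\upsilon\in\mathbb{R}C\}$ a non-empty compact set on which $\lambda^+$ is attained). For any $w\in W$ one gets $\int_0^T (Bu+Cw)\,dt = C\bigl(\lambda^+ T + \int_0^T w(t)\,dt\bigr)$, and the scalar map $T\mapsto \lambda^+T+\int_0^T w(t)\,dt$ starts at $0$, is continuous, and has derivative $\lambda^++w(T)\geq \lambda^++w^{min}>0$, so the intermediate value theorem produces some $T^*\leq 1/(\lambda^++w^{min})$ that reaches $C$. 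The bound is independent of $w$, hence $T_M^*(C)\leq 1/(\lambda^++w^{min})<\infty$. The symmetric construction with $\upsilon^-\in U_c$ satisfying $B\upsilon^-=-\lambda^- C$ delivers $T_M^*(-C)\leq 1/(\lambda^--w^{max})<\infty$, and Proposition~\ref{prop:resilience = full rank + finite T_M^*(C)} closes the argument.

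The delicate step is the converse. Theorem~\ref{thm:computation of r_q} derives the formulas for $r(\pm C)$ only under an a priori resilience hypothesis, so one cannot use it to translate the scalar inequalities $r(\pm C)\in(0,1]$ directly into finiteness of the malfunctioning reach times. The constant-control construction above is what sidesteps this circularity, producing a reaching-time bound that is uniform over the entire set $W$ of undesirable inputs.
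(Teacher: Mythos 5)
Your proof is correct and follows essentially the same route as the paper's: forward via Corollary~\ref{cor:resilience T_N(C)/T_M(C)} combined with the identities $r(\pm C)=T_N^*(\pm C)/T_M^*(\pm C)$ from Theorem~\ref{thm:computation of r_q}'s proof, and converse via a sign analysis on $\lambda^\pm$ followed by a constant-control bound on $T_M^*(\pm C)$ and an appeal to Proposition~\ref{prop:resilience = full rank + finite T_M^*(C)}. You are in fact slightly more careful than the paper in two places: the paper rules out $w^{min}+\lambda^+<0$ by concluding ``$w^{max}\le w^{min}$'' is impossible, which silently ignores the degenerate case $w^{min}=w^{max}$ that you handle via the controllability lower bound $\lambda^++w^{max}>0$; and the paper writes $(Bu+Cw)T_w=C$ with constant $w\in W_c$ and asserts $T_M^*(C)\le\max_{w\in W_c}T_w$ without explicitly addressing that the supremum in $T_M^*$ runs over time-varying $w\in W$, whereas your intermediate-value/monotonicity argument with $\int_0^T w(t)\,dt\ge w^{min}T$ covers that directly. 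These are refinements of the same argument rather than a different approach.
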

\begin{proof}
    For a resilient system with $C \neq 0$, following Theorem~\ref{thm:computation of r_q} $r_q = r_{min}$. Since $r_{min} = \min \left\{ \frac{T_N^*(C)}{T_M^*(C)}, \frac{T_N^*(-C)}{T_M^*(-C)} \right\}$, according to Corollary~\ref{cor:resilience T_N(C)/T_M(C)} the resilient system is controllable and $r_q \in (0,1]$.
    
    \vspace{2mm}
    
    Now assume that the system is controllable, $\frac{w^{min} + \lambda^+}{w^{max} + \lambda^+}$ and $\frac{w^{max} - \lambda^-}{w^{min} - \lambda^-} \in (0, 1]$.
    If $w^{min} + \lambda^+ < 0$, then $w^{max} + \lambda^+ \leq w^{min} + \lambda^+$ because $r(C) \in (0,1]$. This leads to the impossible conclusion that $w^{max} \leq w^{min}$. If $w^{min} + \lambda^+ = 0$, then $r(C) = 0$. Therefore, $w^{min} + \lambda^+ > 0$. Let $u \in U_c$ such that $Bu = \lambda^+ C$. For $w \in W_c$, we define $T_w := \frac{1}{w + \lambda^+}$, so that $(Bu + Cw)T_w = C$. Note that $T_w$ is positive and finite because $w + \lambda^+ \geq w^{min} + \lambda^+ > 0$.
    Notice that $T_M^*(C) \leq \underset{w\, \in\, W_c}{\max} T_w = \frac{1}{w^{min} + \lambda^+}$. And since $T_M^*(C) \geq 0$, it is finite. 
    
    The same reasoning holds for $r(-C)$. We can show that $w^{max} - \lambda^- < 0$ and that $T_w := \frac{1}{\lambda^- - w} > 0$ for all $w \in W_c$. With $u \in U_c$ such that $Bu = -\lambda^- C$ we have $(Bu+Cw)T_w = -C$. Then, $T_M^*(-C) \leq \underset{w\, \in\, W_c}{\max} T_w = \frac{1}{\lambda^- - w^{max}}$, so $T_M^*(-C)$ is finite.
    Then, Proposition~\ref{prop:resilience = full rank + finite T_M^*(C)} states that the system is resilient.
\end{proof}

We now have all the tools to assess the quantitative resilience of a driftless system. If $\bar{B}$ is not full rank, the system following \eqref{eq:system order 1} is not controllable and thus not resilient. Otherwise, we compute the ratios $r(\pm C)$ and Corollary~\ref{cor:resilience lambda} states whether the system is resilient. If it is, then $r_q = r_{min}$ by Theorem~\ref{thm:computation of r_q}, otherwise $r_q = 0$. We summarize this process in Algorithm~\ref{algo}.

\RestyleAlgo{ruled}

\begin{algorithm}[hbpt!]
\setstretch{1.2}
\caption{Resilience algorithm for system \eqref{eq:system order 1}}\label{algo}
\KwData{A column $C$ of $\bar{B}$, $r(C)$ and $r(-C)$ from Theorem~\ref{thm:computation of r_q}}
\KwResult{$r_q$}
\eIf{$\rank(\bar{B}) = n$ and $0 \in \interior(\bar{\mathcal{U}})$}{
    \hspace{27mm} \# {\fontfamily{cmtt}\selectfont \small system \eqref{eq:system order 1} is controllable} \break
    \eIf{$r(C) \in (0, 1]$ and $r(-C) \in (0, 1]$}{
        $r_q = \min\{ r(C), r(-C)\}$ \# {\fontfamily{cmtt}\selectfont \small resilient to loss of $C$}
    }{
    $r_q = 0$  \hspace{10mm} \# {\fontfamily{cmtt}\selectfont \small not resilient to loss of $C$}
    }
    }{
    $r_q = 0$ \hspace{17mm} \# {\fontfamily{cmtt}\selectfont \small not resilient to any loss}
}
\end{algorithm}

\section{Systems with Multiple Integrators}\label{section:integrators}

We can now extend the results obtained for driftless systems to generalized higher-order integrators.

\begin{proposition}\label{prop:T_k,N^*(d)}
    If system \eqref{eq:system order 1} is controllable, then for all $k \in \mathbb{N}$ system \eqref{eq:system order k} is controllable. The infimum of \eqref{eq:T_k,N^*} is achieved with the same constant control input $\bar{u}^* \in \bar{U}_c$ as $T_N^*$ in \eqref{eq:nominal reach time}. Additionally, $T_{k,N}^*(d) = \sqrt[\leftroot{-2}\uproot{2}k]{k!\ T_N^*(d)}$ for all $d \in \mathbb{R}^n$. 
\end{proposition}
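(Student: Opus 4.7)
The plan rests on one elementary observation: if a constant input $\bar{u}_c \in \bar{U}_c$ is applied to \eqref{eq:system order k} with the prescribed zero initial derivatives, then iterated integration yields
\begin{equation*}
    x(T) - x_0 = \bar{B}\bar{u}_c \, \frac{T^k}{k!}.
\end{equation*}
This identity immediately transfers controllability from \eqref{eq:system order 1} to \eqref{eq:system order k}: for any target distance $d$, Proposition~\ref{prop:unperturbed time} supplies a constant $\bar{u}^* \in \bar{U}_c$ satisfying $\bar{B}\bar{u}^* T_N^*(d) = d$, and applying this very $\bar{u}^*$ to \eqref{eq:system order k} for time $T := \sqrt[k]{k!\, T_N^*(d)}$ places $x(T) = x_{goal}$. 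So $\bar{u}^*$ is admissible and realizes this time in \eqref{eq:T_k,N^*}, which simultaneously proves the controllability claim, the upper bound $T_{k,N}^*(d) \leq \sqrt[k]{k!\, T_N^*(d)}$, and the assertion that the first-order optimal $\bar{u}^*$ achieves the infimum of \eqref{eq:T_k,N^*}.

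For the reverse inequality, I would reduce an arbitrary admissible $k$th-order control to a constant first-order control that reaches $d$ in time $T^k/k!$. Since $\bar{U}$ is convex and compact and the dynamics are linear, Theorem 4.3 of \citep{liberzon} provides a time-optimal $\hat{u} \in \bar{U}$ with $T := T_{k,N}^*(d)$. Cauchy's repeated integration formula rewrites the state as
\begin{equation*}
    d = x(T) - x_0 = \frac{1}{(k-1)!} \int_0^T (T-s)^{k-1}\, \bar{B}\hat{u}(s)\, ds.
\end{equation*}
Set $\bar{u}_c := \frac{k}{T^k}\int_0^T (T-s)^{k-1}\hat{u}(s)\, ds$. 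The weight $\frac{k(T-s)^{k-1}}{T^k}$ is nonnegative with integral $1$ on $[0,T]$, so each coordinate of $\bar{u}_c$ is a convex combination of values of $\hat{u}_i(s) \in [\bar{u}_i^{min}, \bar{u}_i^{max}]$, giving $\bar{u}_c \in \bar{U}_c$. By construction $\bar{B}\bar{u}_c \cdot T^k/k! = d$, so applying the constant $\bar{u}_c$ to \eqref{eq:system order 1} reaches $d$ in time $T^k/k!$. The optimality of $T_N^*(d)$ then forces $T_N^*(d) \leq T_{k,N}^*(d)^k/k!$, which closes the equality $T_{k,N}^*(d) = \sqrt[k]{k!\, T_N^*(d)}$.

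The main obstacle is identifying the correct averaging kernel. The naive uniform average $\frac{1}{T}\int_0^T \hat{u}$ used in Proposition~\ref{prop:unperturbed time} does not reproduce the iterated-integral structure, and a brute-force attempt would become tangled in the nested limits of integration. Once Cauchy's formula is invoked to expose the polynomial kernel $(T-s)^{k-1}/(k-1)!$, the probability density $k(T-s)^{k-1}/T^k$ becomes the canonical choice, and the rest of the argument parallels the first-order averaging bookkeeping of Proposition~\ref{prop:unperturbed time}.
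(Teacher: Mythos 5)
Your proof is correct and takes essentially the same route as the paper's. The forward direction (applying the first-order optimal constant $\bar{u}^*$ to the $k$th-order system for time $\sqrt[k]{k!\,T_N^*(d)}$) is identical. For the reverse inequality, the paper also averages the time-varying control against the iterated-integral kernel: it defines $\hat{u} := \frac{k!}{\tau^k}\int_0^\tau\!\cdots\!\int_0^{t_k}\tilde{u}(t_{k+1})\,dt_{k+1}\cdots dt_2$, which by Cauchy's formula is exactly your $\bar{u}_c := \frac{k}{T^k}\int_0^T (T-s)^{k-1}\hat{u}(s)\,ds$. Your version is cleaner in one respect: the paper only asserts that admissibility of $\hat{u}$ follows "through $k$ successive integrations," whereas your probability-kernel observation makes the convex-combination argument explicit. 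The one cosmetic difference: the paper runs a direct contradiction with an \emph{arbitrary} admissible $\tilde{u}$ reaching $d$ in time $\tau < t_1$, which avoids your invocation of Theorem~4.3 of~\citep{liberzon} for the higher-order system — a step that would otherwise require reformulating \eqref{eq:system order k} as a drift-affine first-order system and checking that the existence result covers reachability of the target set $\{x = x_{goal}\}$ (a hyperplane in the augmented state). Both logics close the equality, but the paper's contradiction is marginally more economical.
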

\begin{proof}
    Let $k \in \mathbb{N}$, $x_{goal} \in \mathbb{R}^n$ and $d := x_{goal} - x_0$. If $d = 0$, then $T_{k,N}^*(d) = 0 = T_N^*(d)$, so the result holds.
    Let $d \neq 0$. By assumption, system $\dot y(t) = \bar{B}\bar{u}(t)$ with $y(0) = 0$ is controllable. Following Proposition~\ref{prop:unperturbed time} there exists a constant optimal control $\bar{u} \in \bar{U}_c$ such that $y\big(T_N^*(d)\big) - y(0) = d = \bar{B}\bar{u} T_N^*(d)$, with $T_N^*(d) > 0$. Then, applying the control input $\bar{u}$ to \eqref{eq:system order k} on the time interval $[0,\, t_1]$ leads to
    \begin{align*}
        x(t_1) - x_0 &= \hspace{-1mm} \int_0^{t_1} \hspace{-3mm}\int_0^{t_2} \hspace{-3mm} \hdots \int_0^{t_k} \hspace{-2mm} x^{(k)}(t_{k+1})\, dt_{k+1} \hdots dt_2 = \hspace{-1mm} \int_0^{t_1} \hspace{-3mm} \int_0^{t_2} \hspace{-3mm} \hdots \int_0^{t_k} \hspace{-2mm} \bar{B}\bar{u}\, dt_{k+1} \hdots dt_2 \\
        &= \bar{B} \bar{u} \frac{t_1^k}{k!} = \frac{d}{T_N^*(d)} \frac{t_1^k}{k!},
    \end{align*}
    since $x^{(l)}(0) = 0$ for $l \in [\![1, k-1]\!]$ and $\bar{B} \bar{u} = \frac{d}{T_N^*(d)} \in \mathbb{R}^n$ is constant. 
    By taking $t_1 = \sqrt[\leftroot{-2}\uproot{2}k]{k!\ T_N^*(d)}$, we obtain $x(t_1) - x_0 = d$.
    Thus, the state $x_{goal}$ is reachable in finite time $t_1$, so the system \eqref{eq:system order k} is controllable and $T_{k,N}^*(d) \leq t_1$.
    
    \vspace{2mm}
    
    Assume for contradiction purposes that there exists $\tilde{u} \in \bar{U}$ such that the state of \eqref{eq:system order k} can reach $x_{goal}$ in a time $\tau \in (0, t_1)$. Since $\tilde{u}$ can be time-varying, we build $\hat{u} := \frac{k!}{\tau^k} \int_0^\tau \hdots \int_0^{t_k} \tilde{u}(t_{k+1})\, dt_{k+1} \hdots dt_2$, constant vector of $\mathbb{R}^{m+p}$. Since $\tilde{u} \in \bar{U}$, $\tilde{u}_i(t) \in [\bar{u}_i^{min}, \bar{u}_i^{max}]$ for all $i \in [\![1, m+p]\!]$ and $t \in [0,\tau]$. Because $\bar{u}_i^{min}$ and $\bar{u}_i^{max}$ are constant, one can easily obtain through $k$ successive integrations that $\hat{u}_i \in [\bar{u}_i^{min}, \bar{u}_i^{max}]$ for all $i \in [\![1, m+p]\!]$. And thus, $\hat{u} \in \bar{U}_c$, i.e., $\hat{u}$ is an admissible constant control input.
    Then, we apply $\tilde{u}$ to \eqref{eq:system order k} on the time interval $[0, \tau]$ and we obtain
    \begin{equation*}
        x(\tau) - x_0 = d = \hspace{-1mm} \int_0^{\tau} \hspace{-3mm} \hdots \hspace{-1mm} \int_0^{t_k} \hspace{-3mm} x^{(k)}(t_{k+1})\, dt_{k+1} \hdots dt_2 = \hspace{-1mm} \int_0^{\tau} \hspace{-3mm} \hdots \hspace{-1mm} \int_0^{t_k} \hspace{-3mm} \bar{B}\tilde{u}(t_{k+1})\, dt_{k+1} \hdots dt_2 = \bar{B} \hat{u} \frac{\tau^k}{k!},
    \end{equation*}
    so $\bar{B} \hat{u} = \frac{k!d}{\tau^k}$.
    Applying the control input $\hat{u}$ to the system $\dot y(t) = \bar{B}\bar{u}(t)$ on the interval $[0,T]$ with $T := \frac{\tau^k}{k!}$ leads to
    \begin{equation*}
        y(T) = \int_0^{T} \hspace{-2mm} \dot y(t)\, dt = \int_0^{T} \hspace{-2mm} \bar{B} \hat{u}\, dt = \bar{B} \hat{u} T = \frac{k!d}{\tau^k} \frac{\tau^k}{k!} = d. 
    \end{equation*}
    Thus, $y$ can reach $d$ in a time $T = \frac{\tau^k}{k!} < \frac{t_1^k}{k!} = T_N^*(d)$, which contradicts the optimality of $T_N^*(d)$.
    Then, $t_1$ is the minimal time for the state of \eqref{eq:system order k} to reach $x_{goal}$. Therefore, the infimum of \eqref{eq:T_k,N^*} is achieved with the same constant input $\bar{u} \in \bar{U}_c$ as $T_N^*(d)$ in \eqref{eq:nominal reach time}, and $T_{k,N}^*(d) = \sqrt[\leftroot{-2}\uproot{2}k]{k!\ T_N^*(d)}$.
\end{proof}

\begin{remark}\label{rmk: non-zero_initial_conditions}
    The proof of Proposition~\ref{prop:T_k,N^*(d)} went smoothly because the initial condition had zero derivatives. We will study a simple case with non-zero initial condition and show that even the existence of $T_{k,N}^*$ is not obvious.

    Let $k = 2$ and denote $v := \dot x$. Assume that system $\dot v = \bar{B}\bar{u}$ is controllable, and $\dot x(0) = v(0) = v_0 \neq 0$. For any $v_1 \in \mathbb{R}_*^n$ there exists an optimal control $\bar{u} \in \bar{U}_c$ such that $v\big(T_N^*(d_1)\big) - v_0 = v_1 = \bar{B}\bar{u} T_N^*(v_1)$, with $T_N^*(v_1) > 0$, since $v_1 \neq 0$.
    Then,
    \begin{align*}
        v(t) - v_0 &= \int_0^t \dot v(\tau)\, d\tau = \int_0^t \bar{B}\bar{u}\, d\tau = \int_0^t \frac{v_1}{T_N^*(v_1)}\, d\tau = \frac{t v_1}{T_N^*(v_1)}, \quad \text{and} \\
        x(T) - x_0 &= \int_0^T v(t)\, dt = \int_0^T \frac{t v_1}{T_N^*(v_1)} + v_0\, dt = \frac{T^2 v_1}{2T_N^*(v_1)} + v_0T.
    \end{align*}
    Taking $x(T) = x_{goal}$ leads to a quadratic polynomial in $\mathbb{R}^n$: $\frac{v_1}{2T_N^*(v_1)} T^2 + v_0 T - d = 0$. These are $n$ scalar equations for $n+1$ unknowns: $v_1$ and $T$. Because $T_N^*(v_1)$ depends on $v_1$, the equations are not independent and thus might not have a solution.
    Then, even for this seemingly simple case, the existence of $T_{2,N}^*$ is not obvious to justify.
\end{remark}
A result similar to Proposition~\ref{prop:T_k,N^*(d)} holds for the malfunctioning reach time of order $k$.

\begin{proposition}\label{prop:T_k,M^*(d)}
    If system \eqref{eq:splitted order 1} is resilient, then system \eqref{eq:splitted order k} is resilient for all $k \in \mathbb{N}$. The supremum and infimum of \eqref{eq:T_k,M^*} are achieved with the same constant inputs $u^* \in U_c$ and $w^* \in W_c$ as $T_M^*$ in \eqref{eq:malfunctioning reach time}, and $T_{k,M}^*(d) = \sqrt[\leftroot{-2}\uproot{2}k]{k!\ T_M^*(d)}$ for $d \in \mathbb{R}^n$.
\end{proposition}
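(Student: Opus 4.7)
The proof parallels Proposition~\ref{prop:T_k,N^*(d)} but must accommodate the nested sup-inf structure of \eqref{eq:T_k,M^*}. Set $t^* := \sqrt[k]{k!\, T_M^*(d)}$. The plan is to first invoke Propositions~\ref{prop:u cst} and \ref{prop:w cst} to obtain constants $u^* \in U_c$ and $w^* \in W_c$ achieving $(Bu^* + Cw^*)T_M^*(d) = d$, and then establish the two inequalities $T_{k,M}^*(d) \geq t^*$ and $T_{k,M}^*(d) \leq t^*$. The case $d = 0$ is handled by Remark~\ref{rmk:d=0}, so assume $d \neq 0$. Resilience of \eqref{eq:splitted order k} will follow once $T_{k,M}^*(d)$ is shown to be finite for every $d$.

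For the lower bound I would fix the constant adversary $w \equiv w^*$ and consider any $\tilde u \in U$ steering the state to $d$ in time $\tau$. The $k$-fold averaging trick from Proposition~\ref{prop:T_k,N^*(d)} produces a constant $\hat u := (k!/\tau^k) \int_0^\tau \cdots \int_0^{t_k} \tilde u(t_{k+1})\, dt_{k+1} \cdots dt_2 \in U_c$, and substituting into the integrated dynamics of \eqref{eq:splitted order k} gives $(B\hat u + Cw^*)\tau^k/k! = d$. Since $T_M(w^*, d) = T_M^*(d)$, this forces $T_M^*(d) \leq \tau^k/k!$, i.e., $\tau \geq t^*$. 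Equality is attained by the constant pair $(u^*, w^*)$, confirming that the same constants as for $T_M^*$ realize the minimax.

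For the upper bound I need to show that against any \emph{time-varying} $w \in W$ the controller can still reach $d$ by time $t^*$. Following the idea of Proposition~\ref{prop:w cst}, define the averaged adversary $\tilde w := (k!/t^{*k}) \int_0^{t^*} \cdots \int_0^{t_k} w(t_{k+1})\, dt_{k+1} \cdots dt_2$, which lies in $W_c$ by componentwise bounds on $w$. By Proposition~\ref{prop:u cst} there is $u_* \in U_c$ with $(Bu_* + C\tilde w)T_M(\tilde w, d) = d$, and by Proposition~\ref{prop:resilience and polytopes} we have $-X \subset Y^\circ$, so there exists $u_1 \in U_c$ with $Bu_1 = -C\tilde w$. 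Setting $\alpha := T_M(\tilde w, d)/T_M^*(d) \in (0,1]$ and $u_0 := \alpha u_* + (1-\alpha)u_1 \in U_c$ by convexity, a direct computation gives $Bu_0 + C\tilde w = \alpha(Bu_* + C\tilde w) = d/T_M^*(d)$. Applying the constant $u_0$ against the given $w$ in \eqref{eq:splitted order k}, the $k$-fold integration collapses $Cw$ into $C\tilde w \cdot t^{*k}/k!$, yielding $x(t^*) - x_0 = (Bu_0 + C\tilde w)\, T_M^*(d) = d$, hence the inf over $u$ is at most $t^*$ for every $w$.

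The hard part will be this upper bound: while Proposition~\ref{prop:w cst} shows the first-order supremum is attained at a constant adversary, in the $k$-th-order problem the adversary has more room to shape $w$ before the controller reacts, so a simple reduction to constant $w$ is not available. The resilience inclusion $-X \subset Y^\circ$ from Proposition~\ref{prop:resilience and polytopes} is precisely the ingredient that permits the convex-combination construction of $u_0$ as a correction of $u_*$ staying inside $U_c$, and it is what allows the controller to absorb the discrepancy between the true $w$ and its average $\tilde w$ without leaving the admissible input set.
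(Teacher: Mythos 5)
Your proof is correct, and your upper-bound argument is actually cleaner than the paper's on one point. The paper establishes the result by first asserting (by reference to Proposition~\ref{prop:T_k,N^*(d)}) that one may restrict to constant inputs, then, for each constant $w\in W_c$, deriving $T_M(w,d)\mapsto\sqrt[k]{k!\,T_M(w,d)}$ by contradiction, and finally ruling out any $w_2\in W_c$ yielding a longer time, again by contradiction. The justification that the \emph{adversary's} supremum over the infinite-dimensional set $W$ can be restricted to constants is thus delegated to the reference to the nominal case, where there is no adversary; it is precisely that step you make explicit. Your lower bound ($w\equiv w^*$ plus the $k$-fold averaging of $\tilde u$) is essentially the paper's inner-minimization argument specialized to $w^*$, but your upper bound is genuinely different: instead of a second contradiction over $W_c$, you take an arbitrary time-varying $w\in W$, form its $k$-fold average $\tilde w\in W_c$, and use the resilience inclusion $-X\subset Y^\circ$ (Proposition~\ref{prop:resilience and polytopes}) to construct, via the convex combination $u_0=\alpha u_*+(1-\alpha)u_1$, a constant control in $U_c$ that drives the state to $d$ at exactly $t^*$ against \emph{that} $w$. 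This yields $\inf_u T\le t^*$ for every $w\in W$ directly, sidestepping the need to argue that constant adversaries are worst-case in the $k$-th-order problem, and it simultaneously delivers the finiteness of $T_{k,M}^*(d)$ (hence resilience of \eqref{eq:splitted order k}) as a byproduct. Both proofs reach the same formula $T_{k,M}^*(d)=\sqrt[k]{k!\,T_M^*(d)}$ and the same conclusion that $(u^*,w^*)$ realize the minimax; yours buys a self-contained justification of the reduction to constant adversaries at the cost of the extra convex-geometric construction of $u_0$.
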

\begin{proof}
    Let $k \in \mathbb{N}$, $x_{goal} \in \mathbb{R}^n$ and $d := x_{goal} - x_0$. If $d = 0$, then $T_{k,M}^*(d) = 0 = T_M^*(d)$, so the result holds.
    From now on we assume that $d \neq 0$. As shown in the proof of Proposition~\ref{prop:T_k,N^*(d)} we can work with only constant inputs.
    First, we need to prove that the function $u_d^* : W_c \rightarrow U_c$ defined in Proposition~\ref{prop:u cst} produces the best control input $u_d^*(w)$ for any undesirable input $w \in W_c$.

    By assumption, system $\dot y(t) = Bu(t) + Cw(t)$ with $y(0) = 0$ is resilient. Let $w \in W_c$ an undesirable input. Then, $y\big(T_M(w,d)\big) - y(0) = d = \big( Bu_d^*(w) + Cw \big) T_M(w,d)$, with $T_M(w,d) > 0$ defined in Proposition~\ref{prop:u cst}. Then, applying $u_d^*(w)$ and $w$ to \eqref{eq:splitted order k} on the time interval $[0, t_w]$ leads to
    \begin{align*}
        x(t_w) - x_0 &= \hspace{-1mm} \int_0^{t_w} \hspace{-3mm}\int_0^{t_2} \hspace{-3mm} \hdots \int_0^{t_k} \hspace{-3mm} x^{(k)}(t_{k+1})\, dt_{k+1} \hdots dt_2 = \hspace{-1mm} \int_0^{t_w} \hspace{-3mm} \hdots \int_0^{t_k} \hspace{-3mm} Bu_d^*(w)+Cw\, dt_{k+1} \hdots dt_2 \\
        &= \big( Bu_d^*(w)+Cw \big) \frac{t_w^k}{k!} = \frac{d}{T_M(w,d)} \frac{t_w^k}{k!},
    \end{align*}
    since $x^{(l)}(0) = 0$ for $l \in [\![1, k-1]\!]$ and $Bu_d^*(w)+Cw = \frac{d}{T_M(w,d)} \in \mathbb{R}^n$ is constant. 
    By taking $t_w = \sqrt[\leftroot{-2}\uproot{2}k]{k!\ T_M(w,d)}$, we obtain $x(t_w) - x_0 = d$.
    
    \vspace{2mm}
    
    Assume for contradiction purposes that for the same $w$ there exists $u \in U_c$ such that the state of \eqref{eq:splitted order k} reaches $x_{goal}$ at a time $\tau \in (0, t_w)$. Then,
    \begin{align*}
        x(\tau) - x_0 &= d = \hspace{-1mm} \int_0^{\tau} \hspace{-3mm} \hdots \hspace{-1mm} \int_0^{t_k} \hspace{-3mm} x^{(k)}(t_{k+1})\, dt_{k+1} \hdots dt_2 = \hspace{-1mm} \int_0^{\tau} \hspace{-3mm} \hdots \hspace{-1mm} \int_0^{t_k} \hspace{-3mm} Bu+Cw\, dt_{k+1} \hdots dt_2 \\
        &= (Bu+Cw) \frac{\tau^k}{k!},
    \end{align*}
    so $(Bu+Cw) = \frac{k!d}{\tau^k}$.
    Applying the control $u$ and undesirable input $w$ to the system $\dot y(t) = Bu(t) + Cw(t)$ on the time interval $[0,T]$ with $T := \frac{\tau^k}{k!}$ leads to
    \begin{equation*}
        y(T) = \int_0^{T} \hspace{-2mm} \dot y(t)\, dt = \int_0^{T} \hspace{-2mm} Bu+Cw\, dt = (Bu + Cw) T = \frac{k!d}{\tau^k} \frac{\tau^k}{k!} = d. 
    \end{equation*}
    Thus, $y$ reaches $d$ in a time $T = \frac{\tau^k}{k!} < \frac{t_w^k}{k!} = T_M(w,d)$, which contradicts the optimality of $T_M(w,d)$. Therefore, $u_d^*(w)$ is the best control input to counteract any $w \in W_c$ for system \eqref{eq:splitted order k}.

    \vspace{2mm}
    
    Now we need to prove that $w^*$ defined in Proposition~\ref{prop:w cst} is the worst undesirable input for system \eqref{eq:splitted order k}.
    With the control input $u_d^*(w^*) = u^*$, the state of \eqref{eq:splitted order k} verifies $x(t_1) - x_0 = d$ at $t_1 = \sqrt[\leftroot{-2}\uproot{2}k]{k!\ T_M(w^*, d)} = \sqrt[\leftroot{-2}\uproot{2}k]{k!\ T_M^*(d)}$ because $w^*$ is the worst undesirable input for system \eqref{eq:splitted order 1}.
    
    Assume for contradiction purposes that there exists some $w_2 \in W_c$ such that for the control $u_d^*(w_2) \in U_c$ the state of \eqref{eq:splitted order k} can only reach $x_{goal}$ in a time $\tau > t_1$. Then,
    \begin{equation*}
        x(\tau) - x_0 = d = \hspace{-1mm} \int_0^{\tau} \hspace{-3mm} \hdots \hspace{-1mm} \int_0^{t_k} \hspace{-3mm} Bu_d^*(w_2)+Cw_2\, dt \hdots dt_2 = \big( Bu_d^*(w_2)+Cw_2) \frac{\tau^k}{k!},
    \end{equation*}
    so $(Bu_d^*(w_2)+Cw_2) = \frac{k!d}{\tau^k}$.
    Applying the control $u_d^*(w_2)$ and undesirable input $w_2$ to the system $\dot y(t) = Bu(t) + Cw(t)$ on the time interval $[0,T]$ with $T := \frac{\tau^k}{k!}$ leads to
    \begin{equation*}
        y(T) = \int_0^{T} \hspace{-2mm} \dot y(t)\, dt = \int_0^{T} \hspace{-2mm} Bu_d^*(w_2)+Cw_2\, dt = \big( Bu_d^*(w_2) + Cw_2 \big) T = \frac{k!d}{\tau^k} \frac{\tau^k}{k!} = d. 
    \end{equation*}
    Thus, $y$ cannot reach $d$ in a time shorter than $T$. But $T = \frac{\tau^k}{k!} > \frac{t_1^k}{k!} = T_M^*(d)$, which contradicts the optimality of $w^*$ as worst undesirable input for \eqref{eq:splitted order 1}. Then, $w^*$ is also the worst undesirable input for the system \eqref{eq:splitted order k}.
    
    Therefore, the supremum and the infimum of \eqref{eq:T_k,M^*} are achieved with the same constant inputs $w^* \in W_c$ and $u^* \in U_c$ as $T_M^*$ in \eqref{eq:malfunctioning reach time}, and $T_{k,M}^*(d) = \sqrt[\leftroot{-2}\uproot{2}k]{k!\ T_M^*(d)}$.
\end{proof}

Since $T_{k,M}^*$ is related to $T_M^*$ with the same formula as $T_{k,N}^*$ is related to $T_N^*$, we can exploit all previous results established for the system of dynamics \eqref{eq:system order 1}.

\begin{theorem}\label{thm:r_k,q}
    If system \eqref{eq:system order 1} is resilient, then for all $k \in \mathbb{N}$ system \eqref{eq:system order k} is resilient and $r_{k,q} = \sqrt[\leftroot{-1}\uproot{2}k]{r_q}$.
\end{theorem}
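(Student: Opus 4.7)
The plan is to reduce this directly to the two preceding propositions, since they already do all the work of relating the reach times of order $k$ to those of order $1$. The resilience claim is essentially a byproduct of Proposition~\ref{prop:T_k,M^*(d)}: because that proposition provides, for every $d \in \mathbb{R}^n$, a concrete constant control input $u^* \in U_c$ that steers system \eqref{eq:splitted order k} to $d + x_0$ in the finite time $\sqrt[\leftroot{-2}\uproot{2}k]{k!\,T_M^*(d)}$ under the worst undesirable input $w^* \in W_c$, the malfunctioning reach time is finite for every target, which is exactly the definition of resilience of \eqref{eq:splitted order k}.

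Next I would compute the ratio of reach times of order $k$ pointwise. Fix $d \in \mathbb{R}^n$. If $d = 0$, Remark~\ref{rmk:d=0} handles the case by convention. Otherwise, $T_N^*(d) > 0$ and $T_M^*(d) > 0$, so Propositions~\ref{prop:T_k,N^*(d)} and \ref{prop:T_k,M^*(d)} give
\begin{equation*}
    t_k(d) \;=\; \frac{T_{k,M}^*(d)}{T_{k,N}^*(d)} \;=\; \frac{\sqrt[\leftroot{-2}\uproot{2}k]{k!\,T_M^*(d)}}{\sqrt[\leftroot{-2}\uproot{2}k]{k!\,T_N^*(d)}} \;=\; \sqrt[\leftroot{-2}\uproot{2}k]{\frac{T_M^*(d)}{T_N^*(d)}} \;=\; \sqrt[\leftroot{-2}\uproot{2}k]{t(d)}.
\end{equation*}

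The last step is to pass to the supremum over $d$. Since the map $s \mapsto \sqrt[\leftroot{-2}\uproot{2}k]{s}$ is continuous and strictly increasing on $\mathbb{R}^+$, it commutes with the supremum, so $\sup_{d \in \mathbb{R}^n} t_k(d) = \sqrt[\leftroot{-2}\uproot{2}k]{\sup_{d \in \mathbb{R}^n} t(d)}$. Applying the definition \eqref{eq:r_k,q} of quantitative resilience of order $k$ and the definition \eqref{eq:r_q} of $r_q$ yields
\begin{equation*}
    r_{k,q} \;=\; \frac{1}{\sup_{d \in \mathbb{R}^n} t_k(d)} \;=\; \frac{1}{\sqrt[\leftroot{-2}\uproot{2}k]{\sup_{d \in \mathbb{R}^n} t(d)}} \;=\; \sqrt[\leftroot{-2}\uproot{2}k]{r_q},
\end{equation*}
including the degenerate case where the supremum is $+\infty$ (which would contradict resilience already established via Proposition~\ref{prop:T_k,M^*(d)}, so it does not occur).

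I do not expect any real obstacle: both the pointwise identity and the monotonicity argument for the supremum are immediate. The only thing to be careful about is the convention for $d = 0$ and verifying that the $k$th root of a positive supremum equals the supremum of the $k$th roots, which follows from continuity and monotonicity of $\sqrt[\leftroot{-2}\uproot{2}k]{\cdot}$ on $\mathbb{R}^+$.
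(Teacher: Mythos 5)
Your proposal is correct and follows essentially the same approach as the paper: both proofs invoke Propositions~\ref{prop:T_k,N^*(d)} and \ref{prop:T_k,M^*(d)} to obtain the pointwise identity $t_k(d) = \sqrt[\leftroot{-2}\uproot{2}k]{t(d)}$ and then pass to the supremum. You are in fact slightly more careful than the paper's proof, which leaves implicit the resilience claim (covered by Proposition~\ref{prop:T_k,M^*(d)}) and the step of commuting the $k$th root with the supremum via monotonicity and continuity.
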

\begin{proof}
    Let $d \in \mathbb{R}_*^n$, then based on Propositions~\ref{prop:T_k,N^*(d)} and \ref{prop:T_k,M^*(d)}, \eqref{eq:t_k(d)} becomes
    \begin{equation*}
        t_k(d) = \frac{T_{k,M}^*(d)}{T_{k,N}^*(d)} = \frac{\sqrt[\leftroot{-2}\uproot{2}k]{k!\ T_M^*(d)}}{\sqrt[\leftroot{-2}\uproot{2}k]{k!\ T_N^*(d)}} = \sqrt[\leftroot{-2}\uproot{2}k]{ \frac{T_M^*(d)}{T_N^*(d)}} = \sqrt[\leftroot{-2}\uproot{2}k]{t(d)}.
    \end{equation*}
    Therefore, $r_{k,q} = \sqrt[\leftroot{-2}\uproot{2}k]{r_q}$.
    For a resilient system $r_q \in (0,1]$, so $r_{k,q} \in (0,1]$ and $r_{k,q} \geq r_q$. Thus, adding integrators to a resilient system increases its quantitative resilience.
\end{proof}

Thus, by studying the system $\dot x(t) = \bar{B} \bar{u}(t)$ we can verify the resilience and calculate the quantitative resilience of any system of the form $x^{(k)}(t) = \bar{B}\bar{u}(t)$ for $k \in \mathbb{N}$.
We will now apply our theory to two numerical examples.

\section{Numerical Examples}\label{section:examples}

Our first example considers a linearized model of a low-thrust spacecraft performing orbital maneuvers. We study the resilience of the spacecraft with respect to the loss of control over some thrust frequencies.
Our second example features an octocopter UAV (Unmanned Aerial Vehicle) enduring a loss of control authority over some of its propellers.

\subsection{Linear Quadratic Trajectory Dynamics}

We study a low-thrust spacecraft in orbit around a celestial body. Because of the complexity of nonlinear low-thrust dynamics the work \citep{ko15} established a linear model for the spacecraft dynamics using Fourier thrust acceleration components. Given an initial state and a target state, the model simulates the trajectory of the spacecraft in different orbit maneuvers, such as an orbit raising or a plane change. The states of this linear model are the orbital elements $x := \big(a, \ e, \ i, \ \Omega, \ \omega, \ M \big)$ whose names are listed in Table~\ref{tbl:ITstates}.

Because of the periodic motion of the spacecraft, the thrust acceleration vector $F$ can be expressed in terms of its Fourier coefficients $\alpha$ and $\beta$:
\begin{align*}
    F = F_R \hat{r} + F_W \hat{w} + F_S (\hat{w} \times \hat{r}) \hspace{3mm} \text{with} \hspace{3mm} F_{R,W,S} = \hspace{-1mm} \sum_{k=0}^{\infty} \big(\alpha^{R,W,S}_k \cos{kE}+\beta^{R,W,S}_k \sin{kE} \big),
\end{align*}
where $F_R$ is the radial thrust acceleration, $F_W$ is the circumferential thrust acceleration, $F_S$ is the normal thrust acceleration and $E$ is the eccentric anomaly.
The work \citep{trajectory_dynamics} determined that only 14 Fourier coefficients affect the average trajectory, and we use those coefficients as the input $\bar{u}$: 
\begin{equation*}
    \bar{u} = \left[\arraycolsep=4.4pt\begin{array}{cccccccccccccc}
    \alpha^R_0 &  \alpha^R_1 &  \alpha^R_2 &  \beta^R_1 &  \alpha^S_0 &  \alpha^S_1 &  \alpha^S_2 &  \beta^S_1 &  \beta^S_2 &  \alpha^W_0 &  \alpha^W_1 &  \alpha^W_2 &  \beta^W_1 &  \beta^W_2
    \end{array}\right]^\top .
\end{equation*}
The Fourier coefficients considered in \citep{trajectory_dynamics} are chosen in $\big[ -2.5 \times 10^{-7}, 2.5 \times 10 ^{-7}\big]$, so we can safely assume that for our case the Fourier coefficients all belong to $[-1, 1]$.
Following \citep{ko15}, the state-space form of the system dynamics is $\dot{x} = \bar{B}(x)\bar{u}$.
We calculate $\bar{B}(x)$ in Appendix~\ref{apx:bar B} using the averaged variational equations for the orbital elements given in \citep{trajectory_dynamics}.
We implement the orbit raising scenario presented in \citep{ko15}, with the orbital elements of the initial and target orbits listed in Table~\ref{tbl:ITstates}.
\begin{table}[htbp!] 
    \begin{center}
    \caption{Initial and Target States for Raising Maneuver}
    \label{tbl:ITstates}
    \begin{tabular}{|c|cc|c|c|} \hline
        Name of the Orbital Elements & \multicolumn{2}{|c|}{Parameters} & initial & target \\ \hline
        semi-major axis & $a$\, &[\rm{km}] & 6678 & 7345 \\ \hline
        eccentricity & $e$\, &[\, -\, ] & 0.67 & 0.737 \\ \hline
        inclination & $i$\, &[\rm{degrees}] & 20 & 22 \\ \hline
        longitude of the ascending node & $\Omega$\, &[\rm{degrees}] & 20 & 22 \\ \hline
        argument of perigee & $\omega$\, &[\rm{degrees}] & 20 & 22 \\ \hline
        mean anomaly & $M$\, &[\rm{degrees}] & 20 & 20 \\ \hline
    \end{tabular}
    \end{center}
\end{table}

We approximate $\bar{B}(x)$ as a constant matrix $\bar{B}$ taken at the initial state. The resulting matrix is: $\bar{B} = 10^{-6}\times$
\begin{equation*}
\left[\arraycolsep=2.4pt
  \begin{array}{cccccccccccccc}
        0 & 0 & 0 & 18314 & 40583 & 0 & 0 & 0 & 0 & 0 & 0 & 0 & 0 & 0 \\
        0 & 0 & 0 & 1.1 & -3.4 & 2.3 & -0.4 & 0 & 0 & 0 & 0 & 0 & 0 & 0 \\
        0 & 0 & 0 & 0 & 0 & 0 & 0 & 0 & 0 & -5.2 & 3.8 & -0.9 & -0.7 & 0.2 \\
        0 & 0 &	0 &	0 &	0 &	0 &	0 &	0 &	0 &	-5.5 & 4 & -0.9 & 5.6 & -1.9 \\
        3 &	-2.7 & 0 & 0 & 0 & 0 & 0 & 4.7 & -1 & 5.2 &	-3.8 & 1.3 & -5.6 & 1.9 \\
        -12.3 & 7.2 & -0.9 & 0 & 0 & 0 & 0 & -3.5 & 0.8 & 0 & 0 & 0 & 0 & 0
    \end{array} \right] \hspace{-1mm}.
\end{equation*}
Coefficients $\bar{B}_{1,4}$ and $\bar{B}_{1,5}$ are significantly larger than all the other coefficients of $\bar{B}$ because the semi-major axis is larger than any other element, as can be seen in Table~\ref{tbl:ITstates}. 
Losing control over one of the 14 Fourier coefficients means that a certain frequency of the thrust vector cannot be controlled. 
Since the coefficients $\bar{B}_{1,5}$ and $\bar{B}_{6,1}$ have a magnitude significantly larger than coefficients of respectively the first and last row of $\bar{B}$, we have the intuition that the system is not resilient to the loss of the $1^{st}$ or the $5^{th}$ Fourier coefficient.

The matrix $\bar{B}$ is full rank, so $\dot x = \bar{B}\bar{u}$ is controllable. We denote with $r_{min}$ and $r_q$ the vectors whose components are respectively $r_{min}(j)$ and $r_q(j)$ for the loss of the frequency $j \in [\![1, 14]\!]$, $r_{min} =$
\begin{equation*}
     \left[\arraycolsep=3pt\begin{array}{cccccccccccccc}
    -0.2 & 0.34 & 0.9 & -0.004 & -0.38 & 0.15 & 0.83 & -0.32 & 0.71 & -0.06 & 0.24 & 0.2 & -0.5 & 0.5
    \end{array}\right] \hspace{-1mm}.
\end{equation*}
Since the $1^{st}$, $4^{th}$, $5^{th}$, $8^{th}$, $10^{th}$, and $13^{th}$ values of $r_{min}$ are negative, according to Corollary~\ref{cor:resilience lambda} the system is not resilient to the loss of control over any one of these six corresponding frequencies. Their associated $r_q$ is zero. This result validates our intuition about the $1^{st}$ and $5^{th}$ frequencies.
Corollary~\ref{cor:resilience lambda} also states the resilience of the spacecraft to the loss over any one of the $2^{nd}$, $3^{rd}$, $6^{th}$, $7^{th}$, $9^{th}$, $11^{th}$, $12^{th}$ and $14^{th}$ frequency because their $r_{min}$ belongs to $(0,1]$. Indeed, the input bounds are symmetric, so we can use the results from \citep{SIAM_CT} stating that $r(C) = r(-C) = r_{min}$. Then, using Theorem~\ref{thm:computation of r_q} we deduce that
\begin{align*}
    r_q = \left[\arraycolsep=4.8pt\begin{array}{cccccccccccccc}
    0 & 0.34 & 0.9 & 0 & 0 & 0.15 & 0.83 & 0 & 0.71 & 0 & 0.24 & 0.2 & 0 & 0.5
    \end{array}\right].
\end{align*}

Since $r_q(3)$, $r_q(7)$ and $r_q(9)$ are close to $1$, the loss of one of these three frequency would not delay significantly the system.
The lowest positive value of $r_q$ occurs for the $6^{th}$ frequency, $r_q(6) = 0.15$. Its inverse, $\frac{1}{r_q(6)} = 6.8$ means that the malfunctioning system can take up to 6.8 times longer than the initial system to reach a target.

The maneuver described in Table~\ref{tbl:ITstates} yields $d = x_{goal} - x_0 = \big(667,\, 0.067,\, 2,\, 2,\, 2,\, 2\big) $. We compute the associated time ratios $t(d)$ using \eqref{eq:T_N^* simplified} and \eqref{eq:optimization problem} for the loss over each column of $\bar{B}$:
\begin{equation}\label{eq:t(d) specific target}
  t(d) =  \left[\arraycolsep=3.5pt\begin{array}{cccccccccccccc}
    \hspace{-1mm} 1.1 & 1.2 & 1.1 & 1 & \infty & 1 & 151.1 & \infty & 151.1 & \infty & 151.1 & 151.1 & \infty & 151.1 \hspace{-1mm}
    \end{array}\right] \hspace{-1mm}.
\end{equation}
Then, losing control over one of the first four frequencies will barely increase the time required for the malfunctioning system to reach the target compared with the initial system.
However, after the loss over the $7^{th}$, $9^{th}$, $11^{th}$, $12^{th}$, or the $14^{th}$ frequency of the thrust vector, the undesirable input can multiply the maneuver time by a factor of up to $151.1$.
If one of the $5^{th}$, $8^{th}$, $10^{th}$, or the $13^{th}$ frequency is lost, then some undesirable inputs can render the maneuver impossible to perform.

When computing $r_q$, we have seen that the system is not resilient to the loss of the $1^{st}$ or the $4^{th}$ frequency. Yet, the specific target described in Table \ref{tbl:ITstates} happens to be reachable for the same loss since the $1^{st}$ and $4^{th}$ components of $t(d)$ in \eqref{eq:t(d) specific target} are finite. Indeed, $r_q$ speaks only about a target for which the undesirable inputs cause maximal possible delay.

\subsection{A Resilient Octocopter}

Resilience of unmanned aerial vehicles (UAV) to system failure is crucial to their operations over populated areas \citep{unstable_quadcopters}. The most common design for UAV is the quadcopter with four horizontal propellers.
Quadcopters have six degrees of freedom (position and orientation) but only four inputs: the angular velocities of the propellers. These systems are thus underactuated and cannot be resilient to the loss of control authority over one of their propeller \citep{unstable_quadcopters}.

To remedy this crucial safety concern the solution is to consider overactuated drones like octocopters \citep{UAV_equations}. Most octocopters models have only horizontal propellers as represented on Figure~\ref{fig:flat_octo}, so they must be tilted to operate an horizontal motion, which can be an issue for some payloads.
An innovative solution has been devised in \citep{drone_model}, where four propellers are horizontal and four are vertical, as represented on Figure~\ref{fig:octorotor}. This design decouples the rotational and the translational dynamics, which simplify the control of the UAV. In this section, we evaluate the quantitative resilience of such an octocopter model.
\begin{figure}[htbp!]
    \centering
    \begin{subfigure}{.45\textwidth}
        \centering
        \includegraphics[scale = 0.29]{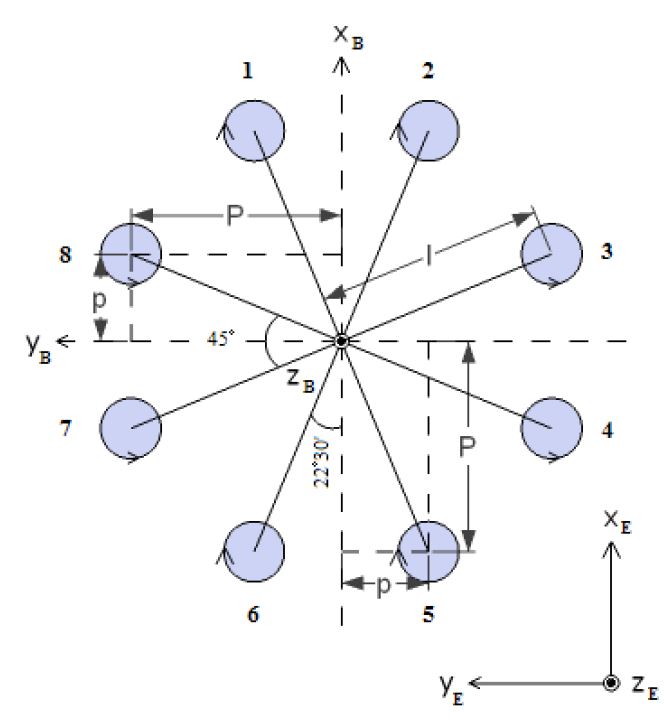}
        \caption{Octocopter layout from \citep{UAV_equations}.}
        \label{fig:flat_octo}
    \end{subfigure}
    \begin{subfigure}{.54\textwidth}
        \centering
        \includegraphics[scale = 0.26]{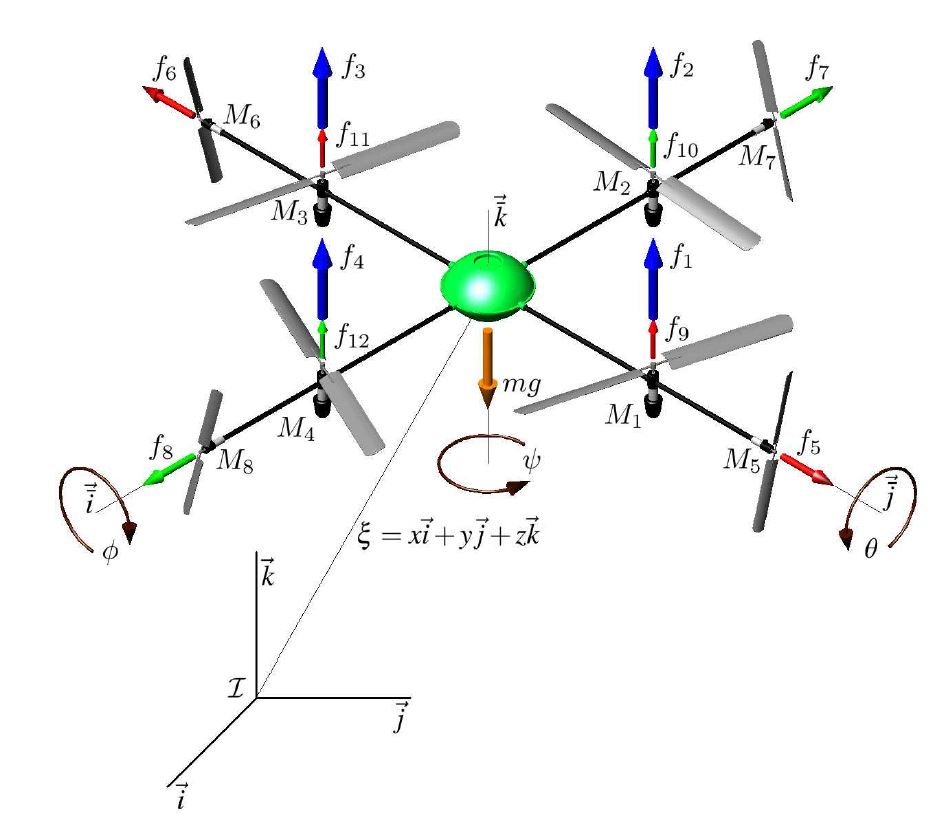}
        \caption{Octocopter layout from \citep{drone_model}.}
        \label{fig:octorotor}
    \end{subfigure}
    \vspace{-5mm}
   \caption{Two possible octocopter configurations.}
\end{figure}
\vspace{-5mm}

\subsubsection{Rotational Dynamics}

The roll, pitch and yaw angles of the octocopter are gathered in $Y := (\phi, \theta, \psi)$.
The propeller $i \in [\![1,8]\!]$ spinning at an angular velocity $\omega_i$ produces a force $f_i = k \omega_i^2$, with $k$ the thrust coefficient. The airflow created by the lateral rotors produces an extra vertical force referred as $f_9$ to $f_{12}$ on Figure~\ref{fig:octorotor}. Then, $f_{9+i} = b f_{5+i}$ for $i \in [\![0,3]\!]$ with the coupling constant $b = 0.64$ from \citep{drone_model}.
Relying on \citep{UAV_equations} and \citep{drone_model}, the rotational dynamics of the octocopter are
\begin{align*}
    \ddot{\phi} &= \frac{I_y - I_z}{I_x} \dot \theta \dot \psi + \frac{lk}{I_x}\Big( \omega_3^2 - \omega_1^2 + b \big(\omega_7^2 - \omega_8^2\big) \Big) - \frac{I_{rotor}}{I_x} \dot \theta \gamma \\
    \ddot{\theta} &= \frac{I_z - I_x}{I_y} \dot \theta \dot \phi + \frac{lk}{I_y}\Big( \omega_2^2 - \omega_4^2 + b \big(\omega_5^2 - \omega_6^2\big) \Big) + \frac{I_{rotor}}{I_y} \dot \phi \gamma \\
    \ddot{\psi} &= \frac{I_x - I_y}{I_z} \dot \theta \dot \phi + \frac{d}{I_z}\Big( \omega_2^2 + \omega_4^2 - \omega_1^2 - \omega_3^2 \Big),
\end{align*}
with $\gamma = \omega_2 + \omega_4 - \omega_1 - \omega_3$. The numerical values used in \citep{UAV_equations} are: $l = 0.4\ m$ the arm length, $m = 1.64\ kg$ the mass, $I_x = I_y = 0.5 I_z = 0.044\ kg\, m^2$ the inertia, $k = 10^{-5}\ N \, s^2$ the thrust coefficient, $d = 0.3\times 10^{-6}\ N\, m\, s^2$ the drag coefficient, $I_{rotor} = 9\times 10^{-5}\ kg\, m^2$ the rotor inertia, and $\omega_{max} = 8000\ rpm = 838\ rad /s$ the maximal angular velocity of the propellers.
The linearized rotational equations are $\ddot Y(t) = \bar{B}_{rot} \Omega(t)$, with $\Omega(t) \in \mathbb{R}^8$ gathering the squared angular velocities of the propellers, i.e.,
\begin{equation}\label{eq:rotational}
    \ddot Y(t) \hspace{-1mm} = \begin{bmatrix} \ddot{\phi}(t) \\ \ddot{\theta}(t) \\ \ddot{\psi}(t) \end{bmatrix} \hspace{-1mm} = \underbrace{\begin{bmatrix} \frac{lk}{I_x} & 0 & 0 \\ 0 &  \frac{lk}{I_y} & 0 \\ 0 & 0 & \frac{d}{I_z} \end{bmatrix} \hspace{-1mm} \begin{bmatrix} -1 & 0 & 1 & 0 & 0 & 0 & b & -b \\
    0 & 1 & 0 & -1 & b & -b & 0 & 0 \\
    -1 & 1 & -1 & 1 & 0 & 0 & 0 & 0 \end{bmatrix}}_{=\ \bar{B}_{rot}} \hspace{-1mm} \begin{bmatrix} \omega_1^2(t) \\ \omega_2^2(t) \\ \omega_3^2(t) \\ \omega_4^2(t) \\ \omega_5^2(t) \\ \omega_6^2(t) \\ \omega_7^2(t) \\ \omega_8^2(t) \end{bmatrix} \hspace{-1mm} = \bar{B}_{rot} \Omega(t).
\end{equation}

\subsubsection{Quantitative Resilience of the Rotational Dynamics}

The matrix $\bar{B}_{rot}$ in \eqref{eq:rotational} has more columns than rows and each output is affected by four different inputs, so we have the intuition that $\bar{B}_{rot}$ is resilient. 
Because the non-zero coefficients of $\bar{B}_{rot}$ have similar magnitudes to one another and are evenly spread in the matrix, we expect the quantitative resilience to be the same for each actuator.

Since the input sets are nonsymmetric: $\bar{u}_i(t) := \omega_i^2(t) \in [0, \omega_{max}^2]$, and the dynamics are given by a double integrator: $\ddot Y(t) = \bar{B}_{rot} \bar{u}(t)$, the theory of \citep{SIAM_CT} cannot deal with this UAV model. 
Using Theorem~\ref{thm:computation of r_q} we calculate the quantitative resilience of the system $\dot v_Y(t) = \bar{B}_{rot} \bar{u}(t)$ with $v_Y(t) := \dot Y(t)$ for the loss of control over each single propeller: 
\begin{equation*}
    r_{min} = \left[\begin{array}{cccccccc} 0.1 & 0.1 & 0.1 & 0.1 & 0.1 & 0.1 & 0.1 & 0.1 \end{array}\right].
\end{equation*}
Based on Corollary~\ref{cor:resilience lambda}, the UAV is resilient to the loss of control over any single propeller in terms of angular velocity and $r_q = r_{min}$. Following Theorem~\ref{thm:r_k,q} we deduce that $\ddot Y(t) = \bar{B}_{rot} \bar{u}(t)$ is also resilient and $r_{2,q} = \sqrt{r_q} = \sqrt{0.1} \approx 0.32$. Then, after the loss of control over any single propeller the UAV might take as much as three times longer to reach a given orientation, while it might be ten times slower to reach a given angular velocity.

\subsubsection{Translational dynamics}

In the inertial frame the position of the UAV is $X := (x,y,z)$ and its orientation yields the rotation matrix $R(\psi, \theta, \phi)$. The translational equations of motion from \citep{drone_model} are $m \ddot X(t) = R\big(\psi(t), \theta(t), \phi(t) \big) \bar{B}_{trans} k\Omega(t) + G$, i.e.,
\begin{equation}\label{eq:translational}
    m \begin{bmatrix} \ddot{x}(t) \\ \ddot{y}(t) \\ \ddot{z}(t) \end{bmatrix} = R\big(\psi(t), \theta(t), \phi(t) \big) \begin{bmatrix} 0 & 0 & 0 & 0 & 1 & -1 & 0 & 0 \\
    0 & 0 & 0 & 0 & 0 & 0 & 1 & -1 \\
    1 & 1 & 1 & 1 & b & b & b & b \end{bmatrix} k \Omega(t) + \begin{bmatrix} 0 \\ 0 \\ -mg \end{bmatrix}.
\end{equation}
Because of the gravitation term $G$, the above dynamics are affine. We combine $G$ with the input $\Omega$ to make the dynamics driftless using $R(\psi, \theta, \phi)^{-1} = R(-\psi, -\theta, -\phi)$,
\begin{align*}
    m \ddot X(t) &= R(\psi, \theta, \phi) \big( \bar{B}_{trans} k\Omega(t) + R(-\psi, -\theta, -\phi) G \big) \\
    &= R(\psi, \theta, \phi) \begin{bmatrix} k(\omega_5^2 - \omega_6^2) - mg(-c_\psi s_\theta c_\phi + s_\psi s_\phi) \\ k(\omega_7^2 - \omega_8^2) - mg( s_\psi s_\theta c_\phi + c_\psi s_\phi) \\ k(\omega_1^2 + \omega_2^2 + \omega_3^2 + \omega_4^2) + bk(\omega_5^2 + \omega_6^2 + \omega_7^2 + \omega_8^2) - mg c_\theta c_\phi \end{bmatrix}.
\end{align*}
Since the rotational dynamics are resilient, after a loss of control over a propeller, we can maintain the UAV into a level mode, i.e., $\theta = \phi = 0$, the roll and pitch angles are null. We will then only translate the UAV when it is in level mode.
The four horizontal propellers are designed to sustain the weight of the drone while the lateral ones are smaller and should mostly be used for lateral displacements. Thus, we define the inputs as $\bar{u}_i(t) := k\omega_i^2(t) - \frac{mg}{4} \in [-\frac{mg}{4}, k \omega_{max}^2 - \frac{mg}{4}]$ for $i \in [\![1,4]\!]$ and $\bar{u}_i(t) := k\omega_i^2(t) \in [0, k \omega_{max}^2]$ for $i \in [\![5,8]\!]$.
Then,
\begin{equation}\label{eq: trans bang-bang}
    \ddot X(t) = \frac{1}{m} R(\psi, 0, 0) \bar{B}_{trans} \bar{u}(t).
\end{equation}
We have transformed the affine dynamics of the UAV into a driftless form with nonsymmetric inputs and we can deal with such a system only because this work extends the theory of \citep{SIAM_CT}.

\subsubsection{Quantitative Resilience of the Translational Dynamics}

After the loss of control authority over a propeller, we split $\frac{1}{m} R(\psi, 0, 0)\bar{B}_{trans}$ into $B$, $C$, and $\bar{u}$ into $u$, $w$ as before. The initial state is the same and the malfunctioning dynamics are
\begin{equation}\label{eq: split trans bang-bang}
    \ddot X(t) = B u(t) + C w(t).
\end{equation}

Matrix $\bar{B}_{trans}$ in \eqref{eq:translational} has more columns than rows and the first four columns are identical so we expect the system to be resilient to the loss of any one of them.
However, only column 5 can counteract column 6 and only column 7 can counteract column 8, and vice-versa. We thus have the intuition that the system is not resilient to the loss of any one of the last four columns.

For the system $\dot v_X(t) = \frac{1}{m} R(\psi, 0, 0) \bar{B}_{trans} \bar{u}(t)$, with $v_X(t) := \dot X(t)$ we calculate $r(C)$ and $r(-C)$ based on Theorem~\ref{thm:computation of r_q} for the loss of control over each single propeller:
\begin{equation*}
    \def\arraystretch{1.2}\begin{array}{cccccccc} 
    r(C) = \big[ 0.7657 & 0.7657 & 0.7657 & 0.7657 & 0 & 0 & 0 & 0 \big] \\
    r(-C) = \big[ 0.5638 & 0.5638 & 0.5638 & 0.5638 & 0 & 0 & 0 & 0 \big]
    \end{array} \hspace{-1mm}.
\end{equation*}
Then, according to Corollary~\ref{cor:resilience lambda} the system is only resilient to the loss of any one of the first four propellers. Following Theorem~\ref{thm:computation of r_q},
\begin{equation*}
    r_q = \min\big\{ r(C), r(-C)\big\} = \big[ \begin{array}{cccccccc} 0.5638 & 0.5638 & 0.5638 & 0.5638 & 0 & 0 & 0 & 0 \end{array} \big] .
\end{equation*}

We notice that the quantitative resilience is not affected by the value of the yaw angle $\psi$.
Taking the inverse of $r_q$ we obtain the factor describing the effects of the undesirable inputs on the worst case performance of the system:
\begin{equation*}
    \frac{1}{r_q} = \left[\begin{array}{cccccccc} 1.7738 & 1.7738 & 1.7738 & 1.7738 & +\infty & +\infty & +\infty & +\infty \end{array}\right].
\end{equation*}
Hence, the loss of a horizontal propeller can increase the time to reach a certain velocity by a factor up to $1.7738$.
Using Theorem~\ref{thm:r_k,q} we can also study the resilience of $\ddot X(t) = \frac{1}{m} R(\psi, 0, 0) \bar{B}_{trans} \bar{u}(t)$, by calculating $r_{2,q} = \sqrt{r_q} = \big[0.75\ 0.75\ 0.75\ 0.75\ 0\ 0\ 0\ 0 \big]$.
Therefore, the translational dynamics of the octocopter are resilient to the loss of control over any single horizontal propeller. However, they are not resilient to the loss of any vertical propeller, as predicted.

We can also pick a direction of motion and evaluate how the loss of each single actuator would impact the change of velocity in this direction. For the impact on the vertical velocity we take $d = (0,0,-1)$ and
\begin{equation*}
    t(d) = \left[ \begin{array}{cccccccc} 1.7738 & 1.7738 & 1.7738 & 1.7738 & 2.2644 & 2.2644 & 2.2644 & 2.2644 \end{array}\right].
\end{equation*}
Note that the first four values are the same as in $1/r_q$ because the direction that is the worst impacted by a loss of an horizontal propeller is the vertical direction.

If we look at how a change of forward velocity is impacted by a loss of control we take $d = (1,0,0)$, $\psi = 0$ and we obtain $t(d) = \big[1\ 1\ 1\ 1\ +\infty\ +\infty\ 1\ 1 \big]$.
Thus, the four horizontal propellers have no impact on the forward velocity as expected. Losing control over one of the two lateral vertical propellers (columns 7 and 8) does not affect the forward motion.
However, the loss of the front or back vertical propeller (columns 5 and 6) completely prevents a guaranteed forward motion.

\subsubsection{High-fidelity dynamics of the propellers}\label{subsec:actuator dynamics}

So far in this work, all inputs were bang-bang because our definition of quantitative resilience asks for time-optimal transfers. The inputs of the translational dynamics \eqref{eq: trans bang-bang} encode the propellers' angular velocities, which cannot physically change in a bang-bang fashion. Thus, in order to provide a more realistic model and display the capabilities of our work, we follow \citep{ADMIRE_1} and add first-order propellers' dynamics:
\begin{equation}\label{eq: trans smooth}
     \ddot X(t) = \bar{B}_{trans} \bar{u}(t), \quad \dot{\bar{u}}(t) = \frac{1}{\tau}\big(\bar{u}^c(t) - \bar{u}(t) \big),
\end{equation}
with $\bar{u}^c \in \mathbb{R}^8$ a new, possibly bang-bang, command signal. System \eqref{eq: trans smooth} is not driftless, hence preventing a direct application of our theory. Instead, we proceed heuristically, building on the intuition provided by our theory to tackle this high-fidelity model.

The time constant $\tau = 0.1\, s$ is chosen to match the propeller response in Fig.~3 of \citep{UAV_prop}.
After the loss of control over the first propeller, we split $\bar{B}_{trans}$ and $\bar{u}$ as before such that
\begin{equation}\label{eq: split trans smooth}
    \ddot X(t) = B u(t) + C w(t), \hspace{3mm} \left\{ \hspace{-2mm} \def\arraystretch{1.2}\begin{array}{l}
    \dot u(t) = \frac{1}{\tau} \big( u^c(t) - u(t) \big), \\
    \dot w(t) = \frac{1}{\tau} \big( w^c(t) - w(t) \big), \end{array} \right.
\end{equation}
with the bang-bang command signals $u^c$ and $w^c$. We will now study how the actuators' dynamics impact the resilience of the UAV in the vertical direction $d = (0,0,1)$.

\begin{figure}[htbp!]
    \centering
    \includegraphics[scale=0.34]{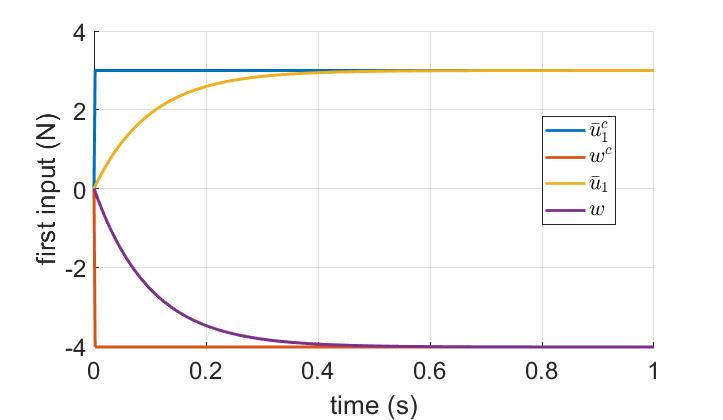}
    \caption{Exponential convergence of $\bar{u}_1$ and $w$ to their bang-bang commands $\bar{u}^c_1 \hspace{-0.3mm} = \hspace{-0.3mm} \bar{u}_1^{max} \hspace{-0.3mm} = \hspace{-0.3mm} k\omega_{max}^2 \hspace{-0.3mm} - \hspace{-0.3mm} \frac{mg}{4}$ and $w^c \hspace{-0.3mm} = \hspace{-0.3mm} \bar{u}_1^{min} \hspace{-0.3mm} = \hspace{-0.3mm} -\frac{mg}{4}$.}
    \label{fig:u_1}
\end{figure}

Since the inputs $\bar{u}$ in \eqref{eq: trans smooth} and $(u,w)$ in \eqref{eq: split trans smooth} have a non-zero rise time as shown on Fig.~\ref{fig:u_1}, the vertical velocities $\dot z_N$ of \eqref{eq: trans smooth} and $\dot z_M$ of \eqref{eq: split trans smooth} react smoothly and slower than their bang-bang counterparts, as illustrated on Fig.~\ref{fig:z_dot}. For $t \geq 0.4\, s$, $\bar{u}$ and $(u,w)$ have converged to their commands $\bar{u}^c$ and $(u^c, w^c)$, and thus the two slopes of $\dot z_N(t)$ in \eqref{eq: trans bang-bang} and \eqref{eq: trans smooth} are equal, as shown on Fig.~\ref{fig:z_dot}, and so are that of $\dot z_M(t)$ in \eqref{eq: split trans bang-bang} and \eqref{eq: split trans smooth}.

\begin{figure}[htbp!]
    \centering
    \includegraphics[scale=0.34]{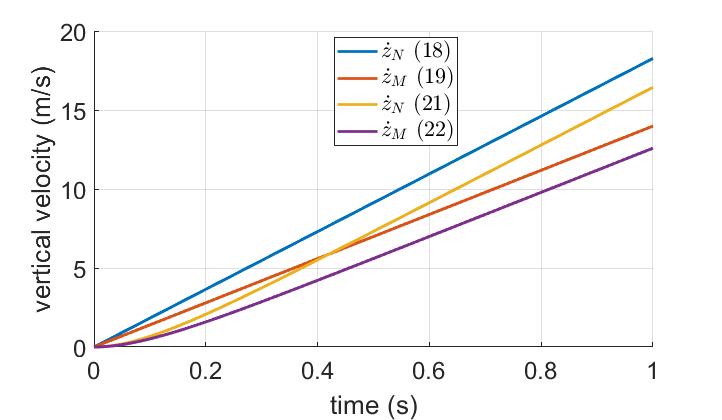}
    \caption{Vertical velocities $\dot z_N(t)$ and $\dot z_M(t)$ of the nominal and malfunctioning systems demonstrating the impact of the propellers' dynamics in \eqref{eq: trans smooth} and \eqref{eq: split trans smooth}.}
    \label{fig:z_dot}
\end{figure}

The slower reaction time caused by the dynamics of the propellers is also reflected on the vertical positions $z_N$ and $z_M$ on Fig.~\ref{fig:z}.

\begin{figure}[htbp!]
    \centering
    \includegraphics[scale=0.34]{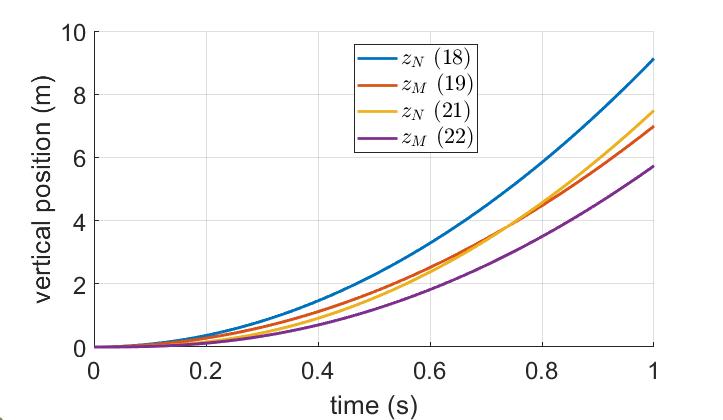}
    \caption{Vertical positions $z_N(t)$ and $z_M(t)$ of the nominal and malfunctioning systems demonstrating the impact of the propellers' dynamics in \eqref{eq: trans smooth} and \eqref{eq: split trans smooth}.}
    \label{fig:z}
\end{figure}

Because of the specific geometry of the system, the optimal inputs for direction $d = (0,0,1)$ were trivial to determine. Then, we calculate the ratio of reach times for systems \eqref{eq: trans smooth} and \eqref{eq: split trans smooth}, $\frac{T_M^*(d)}{T_N^*(d)} = 1.12$ and for systems \eqref{eq: trans bang-bang} and \eqref{eq: split trans bang-bang}, $\frac{T_M^{c*}(d)}{T_N^{c*}(d)} = 1.14$. Hence, modeling the dynamics of the propellers increases slightly the resilience of the vertical dynamics.

 However, the time-optimal commands $\bar{u}^c$ for \eqref{eq: trans smooth} and $(u^c, w^c)$ for \eqref{eq: split trans smooth} can be time-varying for other directions $d \in \mathbb{R}^3$ \citep{liberzon}, and determining these optimal commands requires complex algorithms \citep{Eaton, Sakawa} because the dynamics are not driftless anymore. Additionally, the Maximax-Minimax Quotient Theorem of \citep{Maxmax_Minimax_Quotient_thm} does not hold, which invalidates Theorem~\ref{thm:direction maximizing t} and prevents the exact calculation of $r_q$ without calculating $\frac{T_M^*(d)}{T_N^*(d)}$ for all $d \in \mathbb{R}^3$. A stronger theory will be needed to tackle linear non-driftless systems.

\section{Conclusion and Future Work}

This paper built on the notion of quantitative resilience for control systems introduced in previous work and extended it to linear systems with multiple integrators and nonsymmetric input sets.
Relying on bang-bang control theory and on two novel optimization results, we transformed a nonlinear problem consisting of four nested optimizations into a single linear problem.
This simplification leads to a computationally efficient algorithm to verify the resilience and calculate the quantitative resilience of driftless systems with integrators.

There are two promising avenues of future work. 
Because of the complexity of the subject, we have only considered driftless systems so far. However, future work should be able to extend the concept of quantitative resilience to non-driftless linear systems.
Finally, noting that Theorems~\ref{thm:direction maximizing t} and \ref{thm:computation of r_q} only concern the loss of a single actuator, our second direction of work is to extend these results to the simultaneous loss of multiple actuators.

\appendix

\section{Continuity of a minimum}\label{apx:continuity}

\begin{lemma}\label{lemma: T continuous}
    For a resilient system following \eqref{eq:splitted order 1}, the function 
    $T_M(w_c, d) := \underset{u_c\, \in\, U_c}{\min} \big\{ T \geq 0 : (Bu_c + Cw_c)T = d\big\}$ is continuous in $w_c \in W_c$ and $d \in \mathbb{R}_*^n$.
\end{lemma}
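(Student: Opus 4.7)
The plan is to express $T_M$ as the reciprocal of an optimal value function and then establish continuity of that function. The substitution $\lambda = 1/T$ yields
\[ T_M(w_c, d) = \frac{1}{\lambda^*(w_c, d)}, \qquad \lambda^*(w_c, d) := \sup\{\lambda \geq 0 : \lambda d - Cw_c \in Y\}, \]
where $Y := \{Bu_c : u_c \in U_c\}$ is a compact polytope. Proposition~\ref{prop:resilience and polytopes} guarantees that $-Cw_c \in -X \subset Y^\circ$, so $0 \in (Y + Cw_c)^\circ$; since $d \neq 0$, the ray $\{\lambda d : \lambda \geq 0\}$ starts strictly inside the compact set $Y + Cw_c$ and must exit it, hence $\lambda^*(w_c, d) \in (0, \infty)$. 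The sup is actually attained since $Y$ is closed and the constraint set on $\lambda$ is closed and bounded. Continuity of $T_M$ therefore reduces to continuity of $\lambda^*$.

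To establish upper semi-continuity, we would consider a sequence $(w_c^k, d^k) \to (w_c, d)$ and set $\lambda^k := \lambda^*(w_c^k, d^k)$. Compactness of $Y$ and $W_c$ together with $\|d^k\| \to \|d\| > 0$ ensures that $\{\lambda^k\}$ is bounded; along a convergent subsequence $\lambda^k \to \bar\lambda$, the closedness of $Y$ passes the membership $\lambda^k d^k - Cw_c^k \in Y$ to the limit, so $\bar\lambda d - Cw_c \in Y$ and $\bar\lambda \leq \lambda^*(w_c, d)$. Hence $\limsup_k \lambda^*(w_c^k, d^k) \leq \lambda^*(w_c, d)$.

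For lower semi-continuity, we would fix $\epsilon \in (0, \lambda^*(w_c, d))$ and set $\lambda_\epsilon := \lambda^*(w_c, d) - \epsilon$. Writing $\lambda_\epsilon d - Cw_c$ as a convex combination of the interior point $-Cw_c \in Y^\circ$ and the point $\lambda^*(w_c, d) d - Cw_c \in Y$ with weight $\lambda_\epsilon / \lambda^*(w_c, d) < 1$ on the second, a standard convexity argument (any nontrivial convex combination of an interior point of a convex set with a point of the set lies in the interior) places $\lambda_\epsilon d - Cw_c$ in $Y^\circ$. Openness of $Y^\circ$ then gives $\lambda_\epsilon d^k - Cw_c^k \in Y$ for all $k$ large enough, so $\lambda^*(w_c^k, d^k) \geq \lambda_\epsilon$, yielding $\liminf_k \lambda^*(w_c^k, d^k) \geq \lambda^*(w_c, d) - \epsilon$; letting $\epsilon \to 0$ completes the argument. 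The main obstacle is precisely this lower bound: a priori, simultaneous perturbations of $w_c$ and $d$ could push the target point out of $Y$, and the strict inclusion $-X \subset Y^\circ$ furnished by resilience is exactly what guarantees the slack needed to absorb such perturbations. Continuity of $\lambda^*$ combined with $\lambda^* > 0$ then yields continuity of $T_M = 1/\lambda^*$.
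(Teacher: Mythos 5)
Your proof is correct, and it takes a genuinely different route from the paper. The paper's proof is essentially a reduction: after the change of variables $\lambda = 1/T$ and the rewrite $T_M(x,d) = \|d\| / \max_{y \in Y} \{ \|x+y\| : x+y \in \mathbb{R}^+ d/\|d\| \}$, it defers the entire continuity argument to Lemma~5.2 of the external reference \citep{Maxmax_Minimax_Quotient_thm}. You instead give a self-contained argument: the same identification of $T_M$ as the reciprocal of $\lambda^*(w_c,d) := \sup\{\lambda \geq 0 : \lambda d - Cw_c \in Y\}$, followed by a direct semicontinuity analysis. Your upper bound is the standard closed-graph argument (boundedness from compactness of $Y$ and $W_c$ plus $\|d\|>0$, then pass membership $\lambda^k d^k - Cw_c^k \in Y$ to a subsequential limit), and your lower bound exploits exactly the same geometric input the paper's external lemma must rely on, namely $-X \subset Y^\circ$ from Proposition~\ref{prop:resilience and polytopes}: writing $\lambda_\epsilon d - Cw_c$ as a convex combination of $-Cw_c \in Y^\circ$ (weight $1-\lambda_\epsilon/\lambda^*>0$) and $\lambda^* d - Cw_c \in Y$, you get an interior point, and openness of $Y^\circ$ absorbs small perturbations in $(w_c, d)$. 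Both approaches are valid; the paper's is shorter at the cost of an external dependency, while yours is elementary and makes visible precisely where the strict inclusion $-X \subset Y^\circ$ (i.e., resilience) is used — to guarantee the slack that makes the optimal value robust to perturbation. One minor point worth stating explicitly in your write-up: $Y^\circ \neq \emptyset$ requires $\dim Y = n$, which is also part of Proposition~\ref{prop:resilience and polytopes}.
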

\begin{proof}
    We define sets $X := \big\{Cw_c : w_c \in W_c\big\}$ and $Y := \big\{ Bu_c : u_c \in U_c \big\}$. Then, by abuse of notation $T_M(x, d) = \underset{y\, \in\, Y}{\min} \big\{ T \geq 0 : (y + x)T = d\big\}$. Using the transformation $\lambda = 1/T$, we obtain $T_M(x, d) = 1 / \underset{y\, \in\, Y}{\max} \big\{ \lambda \geq 0 : x + y = \lambda d \big\}$. Since $\|d\| > 0$ and $\lambda \geq 0$, we have $\lambda = \| \lambda d \|/ \|d\| = \|x+y\| / \|d\|$. Let $d_1 := d / \|d\|$ so that $d_1 \in \mathbb{S}$ and $\mathbb{R}^+ d_1 = \mathbb{R}^+ d$. Thus $T_M(x,d) = \|d\| / \underset{y\, \in\, Y}{\max} \big\{ \|x+y\| : x + y \in \mathbb{R}^+ d_1 \big\}$.
    According to Proposition~\ref{prop:resilience and polytopes}, $X$ and $Y$ are polytopes in $\mathbb{R}^n$ and $-X \subset Y$. 
    Then, Lemma~5.2 of \citep{Maxmax_Minimax_Quotient_thm} states that $T_M$ is continuous in $w_c$ and $d$.   
\end{proof}

\section{Equation of Motion for the Low-Thrust Spacecraft}\label{apx:bar B}

The control matrix $\bar{B}$ can be written as
\begin{equation*}
    \bar{B}(x) := \sqrt{\frac{a}{\mu}} \begin{bmatrix}
    0_{2,3} & B_1(x) & 0_{2,2} & 0_{2,5} \\
    0_{2,3} & 0_{2,4} & 0_{2,2} & B_2(x) \\
    B_3(x) & 0_{2,4} & B_4(x) & B_5(x) \end{bmatrix} \in \mathbb{R}^{6 \times 14},
\end{equation*}
with $0_{i,j}$ the null matrix of $i$ rows and $j$ columns.
We calculate the submatrices using the averaged variational equations for the orbital elements from \citep{trajectory_dynamics}:
\begin{align*}
    B_1(x) &= \begin{bmatrix}  ae & 2a\sqrt{1-e^2} & 0 & 0\\
    \frac{1}{2}(1-e^2) & -\frac{3}{2}e\sqrt{1-e^2} & \sqrt{1-e^2} & -\frac{1}{4}e\sqrt{1-e^2}\end{bmatrix} \\
     B_2(x) &= \begin{bmatrix} \cos{\omega} & 0 \\ 0 & \sin{\omega}\csc{i} \end{bmatrix} \begin{bmatrix} \frac{-3e}{2\sqrt{1-e^2}} & \frac{(1+e^2)}{2\sqrt{1-e^2}} & \frac{-e}{4\sqrt{1-e^2}} & -\frac{1}{2}\tan{\omega} & \frac{1}{4}e\tan{\omega} \\
    \frac{-3e}{2\sqrt{1-e^2}} & \frac{(1+e^2)}{2\sqrt{1-e^2}} & \frac{-e}{4\sqrt{1-e^2}} & \frac{1}{2}\cot{\omega} & -\frac{1}{4}e\cot{\omega} \end{bmatrix} \\
    B_3(x) &= \begin{bmatrix} \sqrt{1-e^2} & -\frac{1}{2e}\sqrt{1-e^2} & 0 \\
    -3 & \frac{3e}{2}+\frac{1}{2e} & -\frac{1}{2}e^2 \end{bmatrix}      B_4(x) = \begin{bmatrix} \frac{1}{2e}(2-e^2) & -\frac{1}{4}  \\
    -\frac{1}{2e}(2-e^2)\sqrt{1-e^2} & \frac{1}{4}\sqrt{1-e^2}\end{bmatrix} \\
    B_5(x) &= \cos{i}\csc{i}\begin{bmatrix}
  \frac{3}{2}e\frac{\sin{\omega}}{\sqrt{1-e^2}} & -\frac{1}{2}(1+e^2)\frac{\sin{\omega}}{\sqrt{1-e^2}} & \frac{1}{4}e\frac{\sin{\omega}}{\sqrt{1-e^2}} & -\frac{1}{2} & \frac{1}{4}e \\
    0 & 0 & 0 & 0 & 0 \end{bmatrix}
\end{align*}
with $\mu = 3.986 \times 10^{14}\, \rm{m^3s^{-2}}$ being the standard gravitational parameter of the Earth.


\bibliographystyle{IEEEtran}
\bibliography{ref}

\end{document}